\newcommand{\floor}[1]{\left \lfloor #1 \right \rfloor}
\def \B{\mathcal{B}}
\def \C{\mathcal{C}}
\def \F{\mathcal{F}}
\def \W{\mathcal{W}}
\def \X{\mathcal{X}}
\def \bW {\mathcal{W}}
\def \sT {\mathscr{T}}
\def \fc{\mathbf{c}}
\def \fc{\mathbf{c}}
\def \fC{\mathbf{C}}
\def \fx{\mathbf{x}}
\def \fy{\mathbf{y}}
\def \fY{\mathbf{Y}}
\def \f0{\mathbf{0}}
\definecolor{blau_1a}{RGB}{93,133,195}
\definecolor{blau_2a}{RGB}{0,156,218}
\definecolor{gruen_3a}{RGB}{80,182,149}
\definecolor{gruen_4a}{RGB}{175,204,80}
\definecolor{gruen_5a}{RGB}{221,223,72}
\definecolor{orange_6a}{RGB}{255,224,92}
\definecolor{orange_7a}{RGB}{248,186,60}
\definecolor{rot_8a}{RGB}{238,122,52}
\definecolor{rot_9a}{RGB}{233,80,62}
\definecolor{lila_10a}{RGB}{201,48,142}
\definecolor{lila_11a}{RGB}{128,69,151}
\definecolor{blau_1b}{RGB}{0,90,169}
\definecolor{blau_2b}{RGB}{0,131,204}
\definecolor{gruen_3b}{RGB}{0,157,129}
\definecolor{gruen_4b}{RGB}{153,192,0}
\definecolor{gruen_5b}{RGB}{201,212,0}
\definecolor{orange_6b}{RGB}{253,202,0}
\definecolor{orange_7b}{RGB}{245,163,0}
\definecolor{rot_8b}{RGB}{236,101,0}
\definecolor{rot_9b}{RGB}{230,0,26}
\definecolor{lila_10b}{RGB}{166,0,132}
\definecolor{lila_11b}{RGB}{114,16,133}
\definecolor{mycolor1}{rgb}{0.0, 0.18, 0.39}
\definecolor{mycolor2}{RGB}{87,108,67}
\definecolor{mycolor3}{RGB}{8,133,161}
\definecolor{mycolor4}{RGB}{80,91,161}
\definecolor{mycolor5}{RGB}{98,122,157}
\definecolor{mycolor6}{RGB}{255,163,67}
\definecolor{mycolor7}{RGB}{152,205,225}
\definecolor{mycolor8}{RGB}{242,204,48}
\definecolor{mycolor9}{rgb}{0,.5,0}
\definecolor{mycolor10}{rgb}{.59,.44,.09}
\definecolor{mycolor11}{RGB}{231,199,31} 
\definecolor{mycolor12}{RGB}{8,133,161} 
\definecolor{mycolor13}{RGB}{157,188,64} 
\definecolor{mycolor14}{RGB}{194,150,130} 
\definecolor{mycolor15}{RGB}{98,122,157} 
\definecolor{mycolor16}{RGB}{160,160,160} 
\definecolor{mycolor17}{RGB}{115,82,68} 
\definecolor{mycolor18}{RGB}{94,60,108} 
\definecolor{mycolor19}{RGB}{115,82,68} 
\definecolor{mycolor20}{RGB}{255,183,30} 
\definecolor{mycolor21}{RGB}{185,146,103} 
\pgfplotsset{compat=1.15}
\let\svtikzpicture\tikzpicture
\def\tikzpicture{\noindent\svtikzpicture}
\DeclareMathOperator*{\argmax}{arg\,max}
\def \bG {\mathcal{G}_{\rm slow}}
\title{{\normalfont\duinfamily\bfseries DETERMINISTIC K$-$IDENTIFICATION \\ FOR BINARY SYMMETRIC CHANNEL}}
  \author{{\small\durmfamily ONS DABBABI}\,, {\small\durmfamily MOHAMMAD JAVAD SALARISEDDIGH}\,, \\{\small\durmfamily CHRISTIAN DEPPE}\,, \text{\small\normalfont and } {\small\durmfamily HOLGER BOCHE}}
\begin{document}

\begin{frontmatter}
\maketitle

\begin{abstract}
Deterministic K-Identification (DKI) for the binary symmetric channel (BSC) is developed. A full characterization of the DKI capacity for such a channel, with and without the Hamming weight constraint, is established. As a key finding, we find that for deterministic encoding the number of identifiable messages $K$ may grow exponentially with the codeword length $n$, i.e., $K = 2^{\kappa n}$, where $\kappa$ is the target identification rate. Furthermore, the eligible region for $\kappa$ as a function of the channel statistics, i.e., the crossover probability, is determined.
\keywords{Deterministic K-Identification, Binary Symmetric Channel, Hamming Distance}
\end{abstract}
\end{frontmatter}
\section{Introduction}
\fontfamily{jkpl}\selectfont
Several applications in the context of post Shannon communications \cite{Schwenteck23,6G+,6G_PST} for the future-generation wireless networks (XG) horizon are based on event-triggered communication settings. In such systems, Shannon's message transmission capacity, as investigated early in \cite{S48}, may not be the suitable performance measure, instead, the identification capacity is regarded to be the key applicable quantitative metric. Specifically, for event-recognition, alarm-prompt or object-finding problems, where the receiver aims to recognize the occurrence of a specific event, determine an alarm, or realize the presence of an object in terms of a \emph{reliable} Yes\,/\,No final decision, the so-called identification capacity is the key applicable performance measure \cite{AD89}.

While in Shannon's communication paradigm \cite{S48}, sender, encodes its message in a manner that the receiver can perform a reliable \emph{reproduction}, in the identification setting \cite{AD89}, the coding scheme is designed to accomplish a different objective, namely, to determine whether a \emph{particular} message was sent or not. The identification problem in communication theory is initiated\footnote{\pagecolor{gray!3!yellow!3}\,\textcolor{gray!70!black}{The identification problem has been studied in various setting of deterministic or randomized protocols, in the context of communication complexity; see \cite{Abu88,Yao79}}} by Ahlswede and Dueck \cite{AD89} where a randomized encoder, is employed, to select the codewords. Therein, codebook consists of distributions and its size grows double-exponentially in the codeword length $n$, i.e., $\sim 2^{ 2^{nR}}$ \cite{AD89}, where $R$ is the coding rate. The realization of explicitly constructed randomized identification (RI) codes features high complexity and is challenging for the applications; cf. \cite{Salariseddigh22} for further details.
The motivation of Ahlswede and Dueck to develop the RI problem \cite{AD89} is probably traced back to the work of J\'aJ\'a \cite{J85} who considered deterministic identification (DI) from a communication complexity\footnote{\,\textcolor{gray!70!black}{An important observation regarding the behavior of the identification function has been well studied in communication complexity where the out-performance of randomized protocols over the deterministic protocols (exponential gap between the two class) for computing such function is established. For instance, while the error-free deterministic complexity of the identification function is lower bounded by $\log m$, where $m$ is the length of message, for the randomized protocol and when $\varepsilon$ error is allowed in computation of the identification function, only $\mathcal{O}(\log \log m + \frac{1}{\varepsilon})$ bits suffices; see \cite{Yao79,Mehlhorn82} for further details.}} perspective, that is, where the codewords are determined by a deterministic function from the messages. Further, it seems that Ahlswede and Dueck were inspired to show that employing randomness similar to what has been accomplished in the communication complexity field, yield an advantage of exponential gap over the DI problem\footnote{\,\textcolor{gray!70!black}{A detailed comparison of codebook sizes in DI and RI problem over various channel models can be found in \cite{Salariseddigh22}}} in terms of the codebook size. DI may be favored over RI in complexity-constrained applications. For instance, in molecular communication (MC) systems where development and deploying of huge number of random number generators may not be clear.

The binary symmetric channel (BSC) is deemed as a basic mathematical model through which one bit per unit of time can be transmitted. The capacity of such a channel is attained by Bernoulli input with $1/2$ success probability, i.e., $X \sim \text{Bern}(1/2)$. In \cite{Elias55} consider the BSC and used a random linear code for the achievability proof which result in a exponential search in the decoding. 

The DI problem for discrete memoryless channel (DMC) where codewords are restricted by an average power constraint, is studied in \cite{Salariseddigh_ICC,Salariseddigh_IT}. Therein, employing the \emph{method of types}\footnote{\,\textcolor{gray!70!black}{The method of type developed and promoted by Csiszar and Körner and treated in depth in \cite{CK82,Csiszar98}. Such a method is regarded as a widely used and fundamental technique to obtain capacity results for different source and channel coding settings within the context of mathematical information theory; see \cite{Kramer08} for further details with related topics on multi-user models. A survey of recent developments with applications in statistics can found in \cite{Csiszar04}. Further examples and in-depth mathematical details with applications to large deviation theory can be found in \cite[Ch. 13]{Bremaud17}.}} and standard techniques, it is established that the codebook size grows exponentially as a function of the codeword length, i.e., $\sim 2^{nR}$ \cite{Salariseddigh_ICC,Salariseddigh_IT}. This observation acknowledge that the codebook size of DI over DMCs behaves similar to that of the message transmission problem \cite{S48}. DI for the compound channels is studied by Ahlswede and Cai in \cite{AN99}. Furthermore, the DI for continuous alphabet channels including Gaussian channels with fast and slow fading, memoryless and inter-symbol interference (ISI)-aware discrete-time Poisson channel (DTPC) is addressed in \cite{Salariseddigh_ITW,Salariseddigh_arXiv_ITW,Salariseddigh_IT,Salariseddigh_GC_IEEE,Salariseddigh_GC_arXiv,Salariseddigh22_2} where the derivation of the lower bound on the DI capacity (achievability proof) has been approached via the technique of \emph{ball packing} \cite{CHSN13} within a subset of the Euclidean space, $\mathbb{R}^n$. In the latter work \cite{Salariseddigh22_2} ISI-dependent bounds on the DI capacity are calculated. For all the continuous alphabet works, i.e., the Gaussian, Poisson (with/out ISI), 
and Binomial models \cite{Salariseddigh_ITW,Salariseddigh_IT,Salariseddigh_GC_IEEE,Salariseddigh_GC_arXiv,Salariseddigh22_2}, a new observation regarding the codebook size is obtained, namely, the codebook size scales \emph{super-exponentially} in the codeword length, i.e., $\sim 2^{(n\log n)R}$ which is different than the standard exponential \cite{Salariseddigh_ICC} and double exponential \cite{AD89} behavior for DI and RI problems, respectively. In \cite{Wiese22} the DI problem with non-discrete additive white noise and noiseless feedback under both average and peak power constraints, is analyzed, where the DI capacity is shown to be infinite regardless of the scaling for the codebook size. The problem of joint identification and channel state estimation for a DMC with independent and identically distributed state sequences, is studied in \cite{Labidi23}. Therein, the \emph{sensing} is viewed as an additional resource that increases the identification capacity.

\subsection{Previous Results}
In the (standard) identification problem \cite{AD89}, the receiver aims to identify the occurrence of a \emph{single} message. However, there exist a generalized variation of the DI problem, called the K-Identification problem \cite{AH08G}, in which the receiver may seek to determine the presence of a single message \emph{within} a set of messages (subset of the message set) referred to as the target message set. The K-Identification problem can be understood as the generalization of the original identification problem in the following fashion: The target message (singleton set) is enlarged to be a general set of $K_{>1}$ messages. Ahlswede in \cite[Th.~1,Propos.~$1$]{AH08G,A80} showed that the number of target messages for RI problem over a DMC scales exponentially in the codeword length $n$, i.e., $K=2^{\kappa n}$ and proved that the set of all deterministic K-identification (DKI) achievable pairs contains
\begin{align}
    \left\{ (R,\kappa) : 0 \leq R, \kappa\; ; \; R+2 \kappa \leq \mathbb{C}_{\rm TR} \right\} \;,
\end{align}
where $\mathbb{C}_{\rm TR}$ is the message transmission capacity of the DMC.

The DKI problem for the slow fading channels $\bG$, assuming that the number of target messages scales sub-linearly with codeword length $n$, i.e., $K(n,\kappa) = 2^{\kappa \log n}$, subject to an average power constraint and a codebook size of super-exponential scale, i.e., $M(n,R)=2^{(n\log n)R}$, is studied in \cite{Salariseddigh22_3} where the following $K$-depending bounds on the DKI capacity are derived:
\begin{align}
    \frac{1-\kappa}{4} \leq \mathbb{C}_{\rm DKI}(\bG,M,K) \leq 1 + \kappa \,.
\end{align}
\subsection{Contributions}
In this paper, we consider identification systems employing deterministic encoder and receivers that are interested to accomplish the K-Identification task, namely, finding an object in a target message set of size $K=2^{\kappa n}$ for $\kappa \in [0,1)$. We assume that the communication over $n$ channel uses are independent of each other. We assume that the noise is additive Bernoulli process and formulate the problem of DKI over the DTBC under Hamming weight input constraint.

To the best of the authors' knowledge, the fundamental performance limits of DKI for the BSC model has not been so far studied in the literature. As our main objective, we investigate the DKI capacity of the BSC. In particular, this paper makes the following contributions:
\begin{itemize}[leftmargin=*]
    \item[$\diamondsuit$] \textbf{\textcolor{mycolor12}{Generalized Identification Model}}: In several identification systems, often the size of target message set $K$ can be large, particularly when one by one comparison is not demanded due to the delay constraint. In addition, the value of $K$ may increases as a function of the codeword lengths $n$. To address these cases, we consider a generalized identification model\footnote{\,\textcolor{gray!70!black}{The proposed \emph{generalized identification} setting may be used in a more advanced scheme called \emph{generalized identification with decoding} \cite[Ch.~1]{Ahlswede06} where first the K-Identification and second, the standard identification are accomplished to find an object. Furthermore, \emph{generalized identification} should be distinguished from \emph{multiple object identification} \cite{Yamamoto15} where $K$ objects whose corresponding target message sets are \emph{unknown} to the receiver, are identified at once.}} that captures the standard (i.e., $K=1$), identification channels with constant $K>1$, and identification channels for which $K$ increases with the codeword length $n$.
    \item[$\diamondsuit$] \textbf{\textcolor{mycolor12}{Codebook Scale}}: We establish that the codebook size of the DKI problem over the BSC for deterministic encoding scales \emph{exponentially} in the codeword length $n$, i.e., $\sim 2^{nR}$, even when the size of target message set scales as $K=2^{\kappa n}$ for some $\kappa \in [0,1)$ (see Theorem~\ref{Th.DKI-Capacity} for exact upper bound on the $\kappa$), which we refer to as the \emph{target identification rate}. Such an exponential scale for the codebook size coincide with that of the message transmission problem \cite{S48} and the standard identification problem (DI) in which $K = 2^0 = 1$ \cite{Salariseddigh_IT,Salariseddigh_ICC,J85} and is lower than the \emph{super-exponential} scale for that of the DKI problem over the slow fading channels \cite{Salariseddigh22_3}. This observation suggests that increasing the number of target messages does not change the scale of the codebook derived for DI scheme over the BSC \cite{Salariseddigh_IT,Salariseddigh_ICC,J85}.
    \item[$\diamondsuit$] \textbf{\textcolor{mycolor12}{Capacity Formula}}: We derived a closed form analytical function for the DKI capacity for the BSC, which are the main results of this paper. Such formula does \emph{reflect} the impact of the input constraint $P_{\,\text{ave}}$ in the \emph{optimal} scale of the codebook size, i.e., $2^{nR}$. This observation is in contrast to the result obtained for the DKI problem for the slow fading channel \cite{Salariseddigh22_3} or the DI problem for Gaussian and Poisson channels \cite{Salariseddigh_ITW,Salariseddigh_IT,Salariseddigh_GC_IEEE,Salariseddigh22_2}. We derive the DKI capacity formula for the BSC with constant $K \geq 1$ and growing size of the target message set $K = 2^{\kappa n}$, respectively. We show that for both cases of the constant $K$ and the growing number of target messages, the proposed capacity expression is not a function of the target identification rate $\kappa$ and remains only as a function of the Hamming weight constraint.
    \item[$\diamondsuit$] \textbf{\textcolor{mycolor12}{Technical Novelty}}: To obtain the proposed lower bound, the existence of an appropriate ball packing within the input space, for which the Hamming distance between the centers of the balls does not fall below a certain value, is established.
    In particular, we consider the packing of hyper balls inside a larger $n$-dimensional Hamming hyper ball, whose radius grows in the codeword length $n$, i.e., $nA$. For the achievability proof, we exploit a greedy construction similar to the \emph{Gilbert bound} method. While the radius of the small balls in the DKI problem for the slow fading channel \cite{Salariseddigh22_3}, grows in the codeword length $n$ as $n \to \infty$, here, the radius similar to the DI problem for the Gaussian channel with slow and fast fading \cite{Salariseddigh_ITW} tends to zero. In general, the derivation of lower bound for the BSC is more involved compared to that for the Gaussian \cite{Salariseddigh_ITW} and Poisson channels with/out memory \cite{Salariseddigh_GC_IEEE} and entails exploiting of new analysis and inequalities. Here, the error analysis in the achievability proof requires dealing with several combinatorial arguments and using of bounds on the tail of the cumulative distribution function of the Binomial distribution.
\end{itemize}
\subsection{Organization}
The remainder of this paper is structured as follows. In Section~\ref{Sec.SysModel}, system model is explained and the required preliminaries regarding DKI codes are established. Section~\ref{Sec.Res} provides the main contributions and results on the message DKI capacity of the BSC. Finally, Section~\ref{Sec.Summary} of the paper concludes with a summary and directions for future research.
\subsection{Notations}
We use the following notations throughout this paper:

Blackboard bold letters $\mathbbmss{K,X,Y,Z}\ldots$ are used for alphabet sets. Lower case letters $x,y,z,\ldots$ stand for constants and values (realization) of random variables, and upper case letters $X,Y,Z,\ldots$ stand for random variables. Lower case bold symbol $\fx$ and $\fy$ stand for row vectors of size $n$, that is, $\fx = (x_1, \dots, x_n)$ and $\fy = (y_1, \dots, y_n)$. The distribution of a random variable $X$ is specified by a probability mass function (pmf) $p_X(x)$ over a finite set $\X$. The cumulative distribution function (CDF) of a Binomial random variable is indicated by $B_X(x) \triangleq \Pr(X \leq x)$. All logarithms and information quantities are for base $2$. The set of consecutive natural numbers from $1$ through $M$ is denoted by $[\![M]\!]$. The Hamming metric (distance) between two sequences $\mathbf{c}_1$ and $\mathbf{c}_2$ is defined as the number of positions for which the corresponding symbols are not identical, i.e.,
\begin{align}
    d_{\rm H}(\mathbf{x}_1,\mathbf{x}_2) \triangleq \sum_{t=1}^n \delta(x_{i_1,t}, x_{i_2,t}) \,,
\end{align}
where $\delta(.,.)$ the \emph{Kronecker delta} and is defined as follows
\begin{align}
    \delta(x_i,x_j) = 
    \begin{cases}
    1 & x_i \neq x_j
    \\
    0 & x_i = x_j
    \end{cases}
\end{align}
The $n$-dimensional Hamming hyper ball of radius $r$ for integers $n,r$ such that $n \geq r \geq 1$, in the binary alphabet, that is centered at $\mathbf{x}_0 = (x_{0,t})\big|_{t=1}^n$ is defined as
\begin{align}
    \B_{\mathbf{x}_0}(n,r) = \left\{x^n \in \X^n \,:\; d_{\rm H}(\fx,\mathbf{x}_0) \leq r \right\} \;.\,
\end{align}
Volume of the Hamming hyper ball $\B_{\mathbf{x}_0}(n,r)$ in the $q$-ary alphabet is defined as the number of points that lies inside the ball and is denoted by $\text{Vol}\left( \B_{\mathbf{x}_0}(n,r) \right)$. The Hamming cube is defined as the set of sequences with length $n$ and is denoted by $\mathbf{H}^n = \{0,1\}^n$. The $n$-dimensional Hamming hyper ball $\B_{\mathbf{x}_0}(n,r)$ for the choice of $\mathbf{x}_0 = \f0$ and $r=nA$, is denoted by $\B_{\f0}(n,nA)$ and is referred to as the $n$-dimensional Hamming hyper ball in $1$-norm with a corner at the origin, i.e., $\mathbf{0} = (0,\ldots,0)$, and radius equal to $nA$. The definition of $\B_{\f0}(n,nA)$ is given as follows
\begin{align}
    \B_{\f0}(n,nA) = \left\{\fx \in \mathbf{H}^n : 0 \leq \sum_{t=1}^n x_t \leq nA, \forall \, t\in[\![n]\!] \right\} \,.
\end{align}
We use symbol $\triangleq$ to specify a definition convention. The set of whole numbers is denoted by $\mathbb{N}_{0} \triangleq \{0,1,2,\ldots\}$. The $q$-ary entropy function $H_q: [0,1] \to \mathbb{R}$ for $q \geq 2$; a positive integer, is defined as $H_q(\varepsilon) \triangleq x\log_q(q - 1) - x\log_q x - (1 - x)\log_q ( 1 - x)$. The binary entropy function as a special case of the $q$-ary entropy function $H_q(.)$ is denoted by $H(.)$ and defined as $H(\varepsilon) \triangleq - \varepsilon\log(\varepsilon)-(1-\varepsilon)\log(1-\varepsilon)$.

\section{System Model and Preliminaries}
In this section, we present the adopted system model and establish some preliminaries regarding TR and DKI coding.
\label{Sec.SysModel}
\subsection{System Model}
\label{Subsec.SysModel}
We address an identification-focused communication setup, for which the objective of the decoder is defined as follows: Determining whether or not a desired message was sent by the transmitter; see Figure~\ref{Fig.BSC_Channel_TR}. To accomplish this purpose, a coded communication between the transmitter and the receiver over $n$ channel uses of a binary symmetric channel is established\footnote{\,\textcolor{gray!70!black}{The proposed capacity formula works regardless of whether or not a specific code is used for communication, although proper explicit constructed codes may be required to approach the capacity limits.}}. Let $X \in \{0,1\}$ and $Y \in \{0,1\}$ indicate random variables (RVs) which model the input and output of the channel. Each binary input symbol is flipped with probability $0 < \varepsilon < \frac{1}{2}$\footnote{\,\textcolor{gray!70!black}{The extreme cases of $\varepsilon=0$ or $\varepsilon=\frac{1}{2}$ result in $\mathbb{C}_{\rm TR}=1$ and $\mathbb{C}_{\rm TR}=0$, respectively, hence these cases are commonly excluded from the analysis.}}. The stochastic flipping of the input symbol is modelled via an additive Binary Bernoulli noise, i.e., $Z \in \{0,1\}$; see Figure~\ref{Fig.BSC_Channel_TR}. Therefore, the input-output relation of channel reads: $Y = X \oplus Z $, where $\oplus$ indicate the modulo two addition. Throughout the paper, the considered binary symmetric channel with crossover probability $0 < \varepsilon < \frac{1}{2}$ is denoted by $\bW_{\varepsilon}$. We consider the BSC channel $\bW_{\varepsilon}$ which arises as a basic channel model in the context of information theory where the noise distribution, i.e., the probability of observing channel output $Y$ at the receiver given that channel input $X$ was sent at the transmitter, is characterized as follows:
\begin{align}
    W(Y|X) = \begin{cases}
        1 - \varepsilon & Y = X
        \\
        \varepsilon & Y \neq X
    \end{cases}  \,.
\end{align}
for all $x,y \in \{0,1\}$ and $0 < \varepsilon < \frac{1}{2}$.

We assume that $\bW_{\varepsilon}$ is memoryless, that is, the different channel uses are independent. Hence, the transition probability law for $n$ channel uses is given by
\begin{align}
    \label{Eq.Binomial_Channel_Law}
    W^n(\fy|\fx) = \prod_{t=1}^n W(y_t|x_t) = \varepsilon^{d_{\rm H}(\fx,\fy)}(1-\varepsilon)^{n-d_{\rm H}(\fx,\fy)} \,,\,
\end{align}
where $\fx = (x_1,\dots,x_{n})$ and $\fy = (y_1,\dots,y_{n})$ denote the transmitted codeword and the received signal, respectively. Observe that $d_{\rm H}(\fx,\fy)$ is a random variable and follows a Binomial distribution; see Remark\ref{Rem.Hamm_Distan_Dist}.

\subsection{Message Transmission Coding For BSC}
The definition of a TR code for the BSC $\bW_{\varepsilon}$ is given below.
\begin{definition}[BSC-TR Code]
\label{Def.BSCTR-Code}
An $(n,\allowbreak M(n,R),\allowbreak e_1)$-BSC-TR code for a BSC $\bW_{\varepsilon}$ for integer $M(n,R)$, where $n$ and $R$ are the codeword length\footnote{\,\textcolor{gray!70!black}{This code definition restricts itself to codewords of the same length for different messages, which sometimes in the literature is called as the \emph{block} codes. Throughout this paper, we always assume that different TR or DKI codes are \emph{block} codes.}} and coding rate, respectively, is defined as a system $(\C,\sT)$, which consists of a codebook $\C = \{ \mathbf{c}_i \}_{i \in [\![M]\!]}$, with $\fc_i = (c_{i,t})|_{t=1}^n \subset \{0,1\}^n$, such that
\begin{align}
    \frac{1}{n} \sum_{t=1}^n c_{i,t} \leq A \,,
\end{align}
$\forall \, i \in [\![M]\!]$, and a collection of decoders $\sT = \{ \mathbbmss{T}_i \}_{i \in [\![M]\!]}$, where $\mathbbmss{T}_i \subset \{0,1\}^n$, such that the decoders are \emph{mutually disjoint}, i.e.,
\begin{align}
    \mathbbmss{T}_i \cap \mathbbmss{T}_j = \varnothing \,,
\end{align}
for every $i,j \in [\![M]\!]$ such that $i \neq j$. Given a message $i \in [\![M]\!]$, the encoder transmits codeword $\mathbf{c}_i$, and the decoder's task is to address a \emph{multiple hypothesis} as follows: Which message $\hat{i} \in [\![M]\!]$ was sent? There exist one type of error that may happen:
\begin{itemize}
    \item [Error Event:] Rejection of the actual message; $i \in [\![M]\!]$.
\end{itemize}
The associated error probability of the BSC-TR code $(\C,\sT)$ reads
\begin{align}
    \label{Eq.TypeI-Error-TR}
    P_{e,1}(i) & = 1 -\sum_{\fy \in \mathbbmss{T}_{i}} W^n \big( \fy \,\big|\, \fc_i \big) \,,
\end{align}
(see Figure~\ref{Fig.BSC_Channel_DKI}) and satisfy the following bounds $P_{e,1}(i) \leq e_1 \,,\, \forall i \in [\![M]\!] \;, \forall e_1 > 0 \,.$
\end{definition}
Next we define the achievable TR rate and the TR capacity.
\begin{definition}[Achievable Rate]
A TR rate $R>0$ is called achievable if $\forall e_1 > 0$ and sufficiently large $n$, there exists an $(n,\allowbreak M(n\allowbreak,R), \allowbreak e_1)$-BSC-TR code. The operational TR capacity of the BSC $\bW_{\varepsilon}$ is defined as the supremum of all achievable rates, and is denoted by $\mathbb{C}_{\rm TR}(\bW_{\varepsilon},M,K)$.
\end{definition}
\subsection{Message Transmission Capacity of BSC}
In this Subsection we introduce the message transmission problem for the BSC which was introduced originally by Shannon \cite{S48} and introduce some fundamental results on the achievability and the converse proofs.
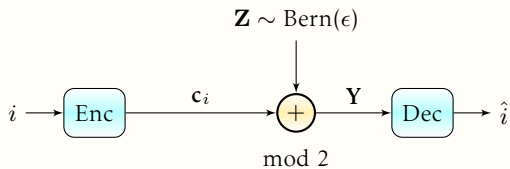
\begin{figure}[!htb]
    \centering
\scalebox{1}{
\tikzstyle{farbverlauf} = [ top color=white, bottom color=white!80!gray]

\tikzstyle{block1} = [draw,top color=white, middle color=cyan!30, rectangle, rounded corners,
minimum height=2em, minimum width=2.5em]

\tikzstyle{block2} = [draw,top color=white, middle color=cyan!30, rectangle, rounded corners,
minimum height=2em, minimum width=2.5em]

\tikzstyle{block_sum} = [draw, top color = white, middle color = orange_6b!30, rectangle, rounded corners, minimum height=1em, minimum width=1.5em]

\tikzstyle{input} = [coordinate]
\tikzstyle{sum} = [draw, circle,inner sep=0pt, minimum size=5mm,  thick]
\tikzstyle{arrow}=[draw,->]

\begin{tikzpicture}[auto, node distance=2cm,>=latex']
\node[] (M) {$\fontfamily{jkpl}\selectfont i$};
\node[block1,right=.5cm of M] (enc) {\text{\fontfamily{jkpl}\selectfont Enc}};
\node[block_sum, sum, right=2cm of enc] (channel) {$+$};

\node[below=1mm of channel] (mod2) {$\text{\fontfamily{jkpl}\selectfont mod 2}$};

\node[block2, right=1cm of channel] (dec) {\text{\fontfamily{jkpl}\selectfont Dec}};
\node[right=.5cm of dec] (Output) {$\fontfamily{jkpl}\selectfont \small \hat{i}$};

\node[above=.7cm of channel] (noise) {$\textbf{Z}\sim \text{\fontfamily{jkpl}\selectfont Bern($\epsilon$)}$};
\draw[->] (M) -- (enc);
\draw[->] (enc) --node[above]{$\fontfamily{jkpl}\selectfont \textbf{c}_i$} (channel);
\draw[->] (noise) -- (channel);
\draw[->] (channel) --node[above]{$\fontfamily{jkpl}\selectfont \textbf{Y}$} (dec);

\draw[->] (dec) -- (Output);

\end{tikzpicture}
}
	\caption{System model of message transmission for the binary symmetric channel. Message $m$ is mapped to the codeword $\fc_i$ where it is flipped with a probability of $\varepsilon$. Decoder employs the output vector $\fy$ to declare a reconstructed version of the original sent message $m$, denoted by $\hat{i}$ in a \emph{reliably} form.}
	\label{Fig.BSC_Channel_TR}
\end{figure}

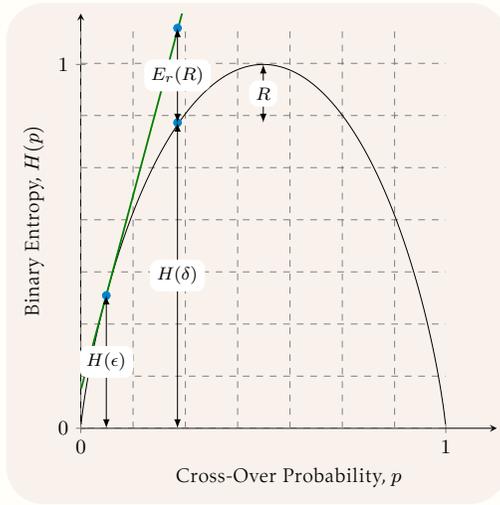
\begin{figure}[t]
    \centering
    \begin{tikzpicture}[scale=.85,background rectangle/.style={fill=brown!10,rounded corners=5mm},show background rectangle]

\begin{axis}[
    axis x line = bottom,
	axis y line = left,
    scale only axis,
    xtick={0,1},
    xticklabels={$0$,$1$},
    ytick={0,1},
    xlabel= \text{\fontfamily{jkpl}\selectfont Cross-Over Probability, $p$} ,
	ylabel= \text{\fontfamily{jkpl}\selectfont Binary Entropy, $H(p)$},
    axis lines=left,
    inner axis line style={latex},
    ymin=0,
    ymax=1.14,
    xmin=0,
    xmax= 1.14,
    width= 65mm,
	height = 65mm,
]
\draw[help lines,dashed,xstep=8.17mm,ystep=8.17mm] (0,0) grid (1,1.1);

\addplot[domain=0.001:0.999,samples = 3000, smooth]({\x},{1*(-1*\x*log2(1*\x)-(1-1*\x)*log2(1-1*\x))});
\addplot[domain=0.001:3,samples = 1000, smooth,mycolor9,thick]({\x},{(-.07*log2(.07)-(1-.07)*log2(1-.07)) + (log2((1 - .07)/.07)*(\x - .07)});


 \filldraw[blau_2b] (.07,.365) circle (1.7pt);
\filldraw[blau_2b] (.265,.84) circle (1.7pt);
\filldraw[blau_2b] (.265,1.1) circle (1.7pt);

\draw[latex-latex] (.07,.365) -- (.07,0) 
    node[below]{$\epsilon$}
    node[midway,fill=white,rounded corners]{\fontfamily{jkpl}\selectfont \small $H(\epsilon)$};

\draw[latex-latex] (.265,.84) -- (.265,0)
    node[below]{$\delta$}
    node[midway,fill=white,rounded corners]{\fontfamily{jkpl}\selectfont \small $H(\delta)$};

\draw[latex-latex] (.265,.84) -- (.265,1.1) node[midway,fill=white,rounded corners]{\fontfamily{jkpl}\selectfont \small $E_r(R)$};

\draw[latex-latex] (.5,.84) -- (.5,1) node[midway,fill=white,rounded corners]{\fontfamily{jkpl}\selectfont \small $R$};

\end{axis}


\end{tikzpicture}
    \caption{Depiction of the error exponent for a BSC. For a given crossover probability $0 < \varepsilon < \frac{1}{2}$, and  The difference between the tangent line to the binary entropy function and the binary entropy function itself is referred to as the error exponent.}
    \label{Fig.Err_Exp}
\end{figure}
\begin{theorem}[see {\cite[Ch.~$5$\,-\,Corrol.~$2$]{RG68}}]
\label{Th.BSC_TR_Capacity}
Assume the binary symmetric channel $\B$ with cros\-s-over probability $\varepsilon$ and message transmission capacity of $\mathbb{C}_{\rm TR}(\W_{\varepsilon}) = 1 - H(\varepsilon)$ and consider an $(n,M(n,R),e_1)$-BSC-TR code for which the codebook size scales exponentially in the codeword length $n$, i.e., $M(n,R)= 2^{nR}$ where $R$ is the message transmission coding rate. Then $0 \leq \forall R < \mathbb{C}_{\rm TR}(\W_{\varepsilon})$ is achievable, namely, there exists a coding and decoding scheme, that is, the existence of a codebook with size $M(n,R) = 2^{nR}$ is guaranteed and the maximum type I error probability converge to zero as $n \to \infty$, i.e.,
\begin{align}
    P_{e,1}(i) \leq 2^{- nE_r(R) + 2 } \,,
\end{align}
for every $i \in [\![M]\!]$, where $E_r(R) > 0$ is a positive, decreasing and convex function\footnote{\,\textcolor{gray!70!black}{The function $E_r(R)$ in the literature is referred to as the \emph{random coding exponent}. Even for the simple BSC, there is no simple way to express the $E_r(R)$ in an analytic functional way for all the values of $0 \leq R< C$ except than in \emph{parametric} form. Cf. \cite{RG68} for further properties of such function. Further discussion are explained in \ref{Eq.Error_Exponent_2}}} of $R$.
\end{theorem}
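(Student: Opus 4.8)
The plan is to prove achievability through \textbf{Gallager's random-coding argument}, since the asserted bound $P_{e,1}(i)\le 2^{-nE_r(R)+2}$ with a positive, convex, decreasing exponent $E_r(R)$ is exactly the random-coding reliability function of the BSC. First I would fix the capacity-achieving input law, which for $\bW_\varepsilon$ is the uniform distribution $Q=\mathrm{Bern}(1/2)$ on $\{0,1\}$ (this respects the Hamming-weight constraint of Definition~\ref{Def.BSCTR-Code} whenever $A\ge 1/2$), and draw the $M=2^{nR}$ codewords $\fc_1,\dots,\fc_M$ independently from the product measure $Q^n$. I would couple this ensemble with \emph{maximum-likelihood} decoding, which for $\bW_\varepsilon$ reduces to minimum-Hamming-distance decoding because, by \eqref{Eq.Binomial_Channel_Law}, $W^n(\fy|\fx)$ is strictly decreasing in $d_{\rm H}(\fx,\fy)$.

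The core step is to control the conditional error $P_{e,1}(i)$ given that message $i$ was sent, for a fixed realization of the codebook. An error requires a competing codeword $\fc_{i'}$, $i'\ne i$, at least as likely as $\fc_i$, so for every $\rho\in[0,1]$ I would invoke Gallager's inequality
\begin{align}
P_{e,1}(i)\le \sum_{\fy} W^n(\fy|\fc_i)\left(\sum_{i'\ne i}\left[\frac{W^n(\fy|\fc_{i'})}{W^n(\fy|\fc_i)}\right]^{\frac{1}{1+\rho}}\right)^{\rho}.
\end{align}
Averaging over the random ensemble and using Jensen's inequality (permissible since $\rho\le 1$) factorizes the product over codewords and gives
\begin{align}
\mathbb{E}\!\left[P_{e,1}(i)\right]\le (M-1)^{\rho}\,2^{-nE_0(\rho)},
\end{align}
where $E_0(\rho)\triangleq -\log\sum_{y}\big(\sum_{x}Q(x)\,W(y|x)^{1/(1+\rho)}\big)^{1+\rho}$ is Gallager's function, which for the uniform $Q$ and $\bW_\varepsilon$ collapses to the closed form $E_0(\rho)=\rho-(1+\rho)\log\big(\varepsilon^{1/(1+\rho)}+(1-\varepsilon)^{1/(1+\rho)}\big)$.

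Substituting $M=2^{nR}$ turns this into the ensemble bound $\mathbb{E}[P_{e,1}(i)]\le 2^{-n(E_0(\rho)-\rho R)}$, uniform in $i$, so optimizing the free parameter yields the random-coding exponent
\begin{align}
E_r(R)\triangleq \max_{0\le\rho\le 1}\big[E_0(\rho)-\rho R\big].
\end{align}
I would then certify the analytic properties quoted in the statement. Differentiating $E_0$ gives $E_0(0)=0$ and $E_0'(0)=I(X;Y)=1-H(\varepsilon)=\mathbb{C}_{\rm TR}(\bW_\varepsilon)$, whence $E_r(R)>0$ precisely for $0\le R<\mathbb{C}_{\rm TR}(\bW_\varepsilon)$; the concavity of $\rho\mapsto E_0(\rho)$ then makes $E_r$ a Legendre-type transform, from which its convexity and monotone decrease in $R$ follow.

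Finally I would extract one good deterministic code with small \emph{maximal} error. Starting the ensemble with $2M=2^{nR+1}$ codewords, the bound above reads $\mathbb{E}[P_{e,1}(i)]\le 2\cdot 2^{-nE_r(R)}$ (using $(2M-1)^{\rho}\le 2\cdot 2^{\rho nR}$ for $\rho\le 1$); since this also bounds the average over messages, some codebook meets it, and discarding the worse half of its codewords, by Markov's inequality, at most doubles the per-message bound while leaving $2^{nR}$ codewords of rate $R$. The two factors of $2$ combine to the stated constant $2^{2}$, giving $P_{e,1}(i)\le 2^{-nE_r(R)+2}$ for all retained $i$. \textbf{The main obstacle} is not the random-coding skeleton, which is classical, but the careful verification that $E_r(R)$ is strictly positive, convex and decreasing throughout $0\le R<\mathbb{C}_{\rm TR}(\bW_\varepsilon)$: because $E_r$ has no single closed form across all rates, one must analyse the parametric optimum $\rho^{\star}(R)$ and control its behaviour near $R=0$ and as $R\uparrow\mathbb{C}_{\rm TR}(\bW_\varepsilon)$.
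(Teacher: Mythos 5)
Your proposal is correct and follows essentially the same route as the paper's proof: the paper does not prove Theorem~\ref{Th.BSC_TR_Capacity} itself but imports it from Gallager \cite{RG68} (Ch.~5, Corollary~2), whose argument is exactly the one you reconstruct — random coding over $\mathrm{Bern}(1/2)^n$ with ML decoding, Gallager's inequality and Jensen's inequality to reach $\mathbb{E}[P_{e,1}(i)]\le (M-1)^{\rho}2^{-nE_0(\rho)}$, optimization over $\rho\in[0,1]$ to define $E_r(R)=\max_{\rho}[E_0(\rho)-\rho R]$, and the expurgation step (start with $2M$ codewords, discard the worse half) whose two factors of $2$ produce the stated constant in $P_{e,1}(i)\le 2^{-nE_r(R)+2}$. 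Your bookkeeping, the closed form of $E_0(\rho)$ for the BSC, and the positivity/convexity/monotonicity of $E_r$ via $E_0(0)=0$, $E_0'(0)=1-H(\varepsilon)$ all match the cited classical proof.
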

Here, we restrict ourselves to the following settings: Let the probability assignment on the channel input symbols $0$ and $1$ be the uniform probability mass function. Further, for the sake of accurate analysis of the $E_r(R)$, depending on the range of $\delta$, we divide in two cases as follows:
\begin{numcases}{E_r(R) =}
   \label{Eq.Error_Exponent_1}
     T_{\varepsilon}(\delta) - H(\delta) & $\varepsilon \leq \delta \leq \sqrt{\varepsilon} / ( \sqrt{\varepsilon}+\sqrt{1-\varepsilon} )$
    \\
    1 - R - \log ( \sqrt{\varepsilon} / ( \sqrt{\varepsilon}+\sqrt{1-\varepsilon} ) ) & $\sqrt{\varepsilon} / ( \sqrt{\varepsilon}+\sqrt{1-\varepsilon} ) < \delta < \frac{1}{2}$
    \label{Eq.Error_Exponent_2}
\end{numcases}
where the corresponding values for rate $R$ are given by
\begin{enumerate}
    \item Case $\ref{Eq.Error_Exponent_1} \quad \Rightarrow \quad 1 - H ( \sqrt{\varepsilon} / ( \sqrt{\varepsilon}+\sqrt{1-\varepsilon} ) ) \leq R = 1 - H(\delta) \leq \mathbb{C}_{\rm TR}(\W_{\varepsilon})$
    \item Case $\ref{Eq.Error_Exponent_2} \quad \Rightarrow \quad R < 1 - H ( \sqrt{\varepsilon} / ( \sqrt{\varepsilon}+\sqrt{1-\varepsilon}) )$
\end{enumerate}
\begin{corollary}
\label{Coroll.TR}
For the BSC consider a message set consisting of $2^{m}$ messages\footnote{\,\textcolor{gray!70!black}{Each message is a binary sequence of length $m$, which yields a total of $2^{m}$ message sequences, called the \emph{message set}.}} and let the length of the associated codewords be $n = m(1 + H(\varepsilon) / (1 - H(\varepsilon)) + r)$, where $r > 0$ is an arbitrarily small constant. Then a codebook consisting of codewords with length $n$. Further, the maximum error probability over the entire message set is upper bounded as follows
\begin{align}
    \underset{i \in [\![M]\!]}{\max} \left[ P_{e,1}(i) \right] \leq 2^{-n \alpha(r,\varepsilon) + \log q(r,\varepsilon)} \,.
\end{align}
\end{corollary}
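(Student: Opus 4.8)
The plan is to recognize the statement as an immediate corollary of Theorem~\ref{Th.BSC_TR_Capacity}: once I translate the prescribed parameters into an effective coding rate and verify that this rate lies strictly below capacity, both the existence of the codebook and the exponential error bound follow directly.

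First I would compute the coding rate induced by the choice $n = m\bigl(1 + H(\varepsilon)/(1 - H(\varepsilon)) + r\bigr)$. Since the message set has $M = 2^m$ elements, the rate is $R = (\log M)/n = m/n$. Using the elementary identity $1 + \frac{H(\varepsilon)}{1 - H(\varepsilon)} = \frac{1}{1 - H(\varepsilon)}$, the length reads $n = m\bigl(\frac{1}{1 - H(\varepsilon)} + r\bigr)$, whence
\[
    R = \frac{1 - H(\varepsilon)}{1 + r\bigl(1 - H(\varepsilon)\bigr)} \,.
\]
Because $r > 0$ and $0 < 1 - H(\varepsilon) < 1$ for every $0 < \varepsilon < \tfrac{1}{2}$, the denominator exceeds unity, so $R < 1 - H(\varepsilon) = \mathbb{C}_{\rm TR}(\W_\varepsilon)$. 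Thus the prescribed rate sits strictly below capacity for every admissible $r$.

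Next I would invoke Theorem~\ref{Th.BSC_TR_Capacity}. Since $R < \mathbb{C}_{\rm TR}(\W_\varepsilon)$, the theorem supplies an $(n, 2^{nR}, e_1)$-BSC-TR code whose maximum type-I error probability obeys $\max_{i \in [\![M]\!]} P_{e,1}(i) \leq 2^{-n E_r(R) + 2}$ with $E_r(R) > 0$. Identifying $\alpha(r, \varepsilon) \triangleq E_r(R)$ and letting $\log q(r, \varepsilon)$ absorb the additive constant $2$ (together with any lower-order factors) yields precisely the claimed bound. To render $\alpha(r, \varepsilon)$ explicit I would introduce the auxiliary parameter $\delta$ via $R = 1 - H(\delta)$; since $R \uparrow 1 - H(\varepsilon)$ as $r \downarrow 0$, the corresponding $\delta$ lies just above $\varepsilon$, and for $r$ small enough it falls in the regime $\varepsilon \leq \delta \leq \sqrt{\varepsilon}/(\sqrt{\varepsilon} + \sqrt{1 - \varepsilon})$ of Case~\eqref{Eq.Error_Exponent_1}, giving $\alpha(r, \varepsilon) = T_\varepsilon(\delta) - H(\delta)$. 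Its positivity is a consequence of the strict concavity of $H(\cdot)$: the tangent line $T_\varepsilon(\cdot)$ at $\varepsilon$ lies strictly above the entropy curve for all $\delta \neq \varepsilon$, so $T_\varepsilon(\delta) - H(\delta) > 0$ whenever $\delta > \varepsilon$.

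The main obstacle I anticipate is not the existence claim, which is immediate, but making $\alpha(r, \varepsilon)$ quantitative and justifying the case split. Concretely, one must confirm that the $\delta$ solving $R = 1 - H(\delta)$ really lands in the range of Case~\eqref{Eq.Error_Exponent_1} for the small $r$ of interest, and track the graceful degradation $\alpha(r, \varepsilon) \to 0$ as $r \to 0^+$ (equivalently $\delta \to \varepsilon$). Expressing $\delta$ through the inverse binary-entropy relation and thereby writing $\alpha(r, \varepsilon)$ as a clean function of $r$ and $\varepsilon$ is the only genuinely technical ingredient.
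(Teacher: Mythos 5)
Your proposal is correct and follows essentially the same route as the paper: both invoke Theorem~\ref{Th.BSC_TR_Capacity} after checking that the induced rate $R = m/n$ sits strictly below capacity, define $\delta$ through $R = 1 - H(\delta)$ (your expression $H(\delta) = \frac{r + (1-r)H(\varepsilon)}{r+1-rH(\varepsilon)}$ obtained this way is exactly the paper's defining equation for $\delta$), restrict to small $r$ so that $\delta$ lands in the regime of Case~\eqref{Eq.Error_Exponent_1}, and take $\alpha(r,\varepsilon) = T_\varepsilon(\delta) - H(\delta)$. If anything, your write-up is more explicit than the paper's, which states the choice of $\delta$ and $\alpha$ without deriving them from the rate computation.
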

\begin{proof}
Let $\delta > \varepsilon$ be such that
\begin{align}
H(\delta) = \frac{r + (1-r)H(\varepsilon)}{r + 1 - rH(\varepsilon)} \,,
\end{align}
assuming $r>0$ is sufficiently small such that the condition $\delta \leq \sqrt{\varepsilon} / (\sqrt{\varepsilon}+\sqrt{1-\varepsilon})$ as required by Theorem~\ref{Th.BSC_TR_Capacity} is fulfilled. Then the exponent $\alpha(R)$ provided in Theorem~\ref{Th.BSC_TR_Capacity} can be taken to be
\begin{align}
    \label{Eq.Exponent}
    \alpha = T_{\varepsilon}(\delta) - H(\delta) \,.
\end{align}
Observe that the exponent given in \ref{Eq.Exponent} can not be improved by the following theorem.
\end{proof}
\begin{theorem}[see {\cite[Th.~$5.8.5$]{RG68}}]
\label{Th.Wolfowitz_Conv}
Consider the BSC with crossover probability $\varepsilon$; $\W_{\varepsilon}$ with the message transmission capacity of $\mathbb{C}_{\rm TR}(\W_{\varepsilon}) = 1 - H(\varepsilon)$. Further assume an $(n,M(n,\allowbreak R),\allowbreak e_1)$-BSC-TR code where the codebook size scales \textbf{exponentially} in the codeword length $n$, i.e., $2^m = M(m) = M(n,R)= 2^{nR}$ where $R$ is the message transmission coding rate. Now if
\begin{align}
    \label{Ineq.Converse}
    \frac{\log M(n,R)}{n} = R > \mathbb{C}_{\rm TR}(\W_{\varepsilon}) \,,
\end{align}
then the average error probability of BSC-TR code is lower bounded as follows
\begin{align}
    \label{Eq.LB_Ave_Error}
    \bar{P}_{e,1} \geq 1 - \frac{4L}{n(R - \mathbb{C}_{\rm TR}(\W_{\varepsilon}))^2} - 2^{-\frac{n(R - \mathbb{C}_{\rm TR}(\W_{\varepsilon}))}{2}} \,,
\end{align}
where $L > 0$ is a finite positive constant depending on the channel statistics $\varepsilon$ and does not depend on the codeword length\footnote{\,\textcolor{gray!70!black}{The lower bound given in \ref{Eq.LB_Ave_Error} converges to $1$ from left as $n \to \infty$.}}.
\end{theorem}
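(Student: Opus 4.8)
The plan is to prove this strong converse via the classical Wolfowitz sphere-counting argument, working directly with the Hamming-distance structure of the BSC likelihood. Write $\Delta \triangleq R - \mathbb{C}_{\rm TR}(\bW_{\varepsilon}) > 0$ and observe that the average success probability is $\bar{P}_{c} = 1 - \bar{P}_{e,1} = \frac{1}{M}\sum_{i}\sum_{\fy \in \mathbbmss{T}_i} W^n(\fy \mid \fc_i)$. The structural fact I would exploit is that, by \eqref{Eq.Binomial_Channel_Law}, $W^n(\fy \mid \fc_i) = \varepsilon^{d}(1-\varepsilon)^{n-d}$ with $d = d_{\rm H}(\fc_i,\fy)$ is \emph{strictly decreasing} in $d$ precisely because $\varepsilon < \frac{1}{2}$, and that under input $\fc_i$ the distance $d_{\rm H}(\fc_i,Y)$ is a sum of $n$ i.i.d.\ $\mathrm{Bern}(\varepsilon)$ variables, hence concentrates around $n\varepsilon$ with variance $n\varepsilon(1-\varepsilon)$.

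First I would split each inner sum according to a typicality window of width $\eta > 0$, setting $F_i \triangleq \{\fy : d_{\rm H}(\fc_i,\fy) \geq n\varepsilon - \eta\}$ and writing $\mathbbmss{T}_i = (\mathbbmss{T}_i \cap F_i) \cup (\mathbbmss{T}_i \setminus F_i)$. For the atypical part $\fy \in \mathbbmss{T}_i \setminus F_i$ (outputs close to the codeword, where the likelihood is large), I bound $\sum_{\fy \in \mathbbmss{T}_i \setminus F_i} W^n(\fy \mid \fc_i) \leq \Pr\!\big(d_{\rm H}(\fc_i, Y) < n\varepsilon - \eta\big)$ and control this lower tail by Chebyshev's inequality, yielding a contribution of order $n\varepsilon(1-\varepsilon)/\eta^2$. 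For the typical part $\fy \in \mathbbmss{T}_i \cap F_i$, monotonicity of the likelihood in $d$ gives the uniform bound $W^n(\fy \mid \fc_i) \leq \varepsilon^{n\varepsilon - \eta}(1-\varepsilon)^{n-n\varepsilon+\eta} = 2^{-nH(\varepsilon)}\big(\tfrac{1-\varepsilon}{\varepsilon}\big)^{\eta}$, so the typical contribution is at most $|\mathbbmss{T}_i|\cdot 2^{-nH(\varepsilon)}\big(\tfrac{1-\varepsilon}{\varepsilon}\big)^{\eta}$.

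Next I would assemble the average. Because the decoding sets are mutually disjoint subsets of $\{0,1\}^n$, one has $\sum_i |\mathbbmss{T}_i| \leq 2^n$, so summing over $i$ and dividing by $M = 2^{nR}$ gives $\frac{1}{M}\sum_i |\mathbbmss{T}_i| \leq 2^{n(1-R)}$. Combining the two parts and using $1 - H(\varepsilon) = \mathbb{C}_{\rm TR}(\bW_{\varepsilon})$ yields $\bar{P}_c \leq 2^{-n\Delta}\big(\tfrac{1-\varepsilon}{\varepsilon}\big)^{\eta} + n\varepsilon(1-\varepsilon)/\eta^2$. The final step is to calibrate $\eta$ so as to balance the two terms: choosing $\eta = n\Delta/\big(2\log\tfrac{1-\varepsilon}{\varepsilon}\big)$ turns the first term into $2^{-n\Delta/2}$ and the second into $4L/(n\Delta^2)$ with $L = \varepsilon(1-\varepsilon)\big(\log\tfrac{1-\varepsilon}{\varepsilon}\big)^2$, a finite positive constant depending only on $\varepsilon$. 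Passing to $\bar{P}_{e,1} = 1 - \bar{P}_c$ then reproduces \eqref{Eq.LB_Ave_Error} exactly.

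I expect the main obstacle to be the calibration of the window width $\eta$: it must be chosen large enough that the exponential sphere-counting term $2^{-n\Delta}\big(\tfrac{1-\varepsilon}{\varepsilon}\big)^{\eta}$ stays exponentially small, yet small enough that the Chebyshev tail $n\varepsilon(1-\varepsilon)/\eta^2$ still vanishes. Since the balancing choice makes $\eta$ scale \emph{linearly} in $n$, the tail term decays only as $1/n$ rather than exponentially, which is exactly why \eqref{Eq.LB_Ave_Error} must carry the polynomially-decaying $4L/(n\Delta^2)$ term alongside the exponential $2^{-n\Delta/2}$ one. Verifying that this single choice of $\eta$ simultaneously delivers both stated terms, together with the fact that $\varepsilon < \frac{1}{2}$ is used in an essential way to render the likelihood monotone in Hamming distance, is the crux of the argument.
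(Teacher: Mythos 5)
Your proposal is correct, and it is essentially the classical argument behind this theorem. Note that the paper itself gives no proof here: it imports the statement from Gallager's book (\cite[Th.~5.8.5]{RG68}), whose proof is the Wolfowitz strong converse based on the per-letter information density $i(x;y) = \log\frac{W(y|x)}{Q(y)}$ with $Q$ the uniform output distribution, a threshold at $n\,\mathbb{C}_{\rm TR} + \tfrac{n\Delta}{2}$, Chebyshev for the event that the density exceeds the threshold, and disjointness of the decoding sets for the remainder. Your Hamming-distance formulation is exactly this argument specialized to the BSC: since $\log W^n(\fy|\fc_i) = -n H(\varepsilon) + \left(n\varepsilon - d_{\rm H}(\fc_i,\fy)\right)\log\frac{1-\varepsilon}{\varepsilon}$ up to the uniform reference term, your window on $d_{\rm H}$ is an affine reparametrization of the information-density threshold, your Chebyshev step is the same second-moment step (indeed your constant $L = \varepsilon(1-\varepsilon)\bigl(\log\frac{1-\varepsilon}{\varepsilon}\bigr)^2$ is precisely the per-letter variance of the information density of the BSC), and your counting step $\sum_i |\mathbbmss{T}_i| \leq 2^n$ is the specialization of $\sum_i Q(\mathbbmss{T}_i) \leq 1$. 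What your version buys is a fully elementary, self-contained proof for this channel, at the cost of generality; the cited proof covers arbitrary DMCs with the same structure. All of your calculations check out, including the calibration $\eta = n\Delta / \bigl(2 \log\frac{1-\varepsilon}{\varepsilon}\bigr)$, which reproduces both terms of \eqref{Eq.LB_Ave_Error} exactly.
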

\begin{corollary}
\label{Coroll.TR_1}
If codeword length $n$ satisfies
\begin{align}
    \label{Ineq.UB_n}
    n < m \left( 1 + \frac{H(\varepsilon)}{1 - H(\varepsilon)} \right) \,,
\end{align}
then there exists \textbf{at least one message} for which the error probability can not be upper bounded by any constant $q<1$.
\end{corollary}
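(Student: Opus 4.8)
The plan is to observe that the hypothesis on the codeword length $n$ is nothing but a disguised form of the strong-converse condition \ref{Ineq.Converse}, so that Theorem~\ref{Th.Wolfowitz_Conv} applies directly; the only remaining work is to upgrade the bound on the \emph{average} error probability to one on the \emph{maximal} error probability.

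First I would rewrite the constraint on $n$ in terms of the rate. Because the message set has size $2^m = M(n,R) = 2^{nR}$, the coding rate is $R = m/n$. Using the identity $1 + \frac{H(\varepsilon)}{1-H(\varepsilon)} = \frac{1}{1-H(\varepsilon)}$, the hypothesis \ref{Ineq.UB_n} reads $n < m/(1-H(\varepsilon))$, and since $1 - H(\varepsilon) > 0$ for $0<\varepsilon<\frac12$, this is equivalent to
\begin{align}
    R = \frac{m}{n} > 1 - H(\varepsilon) = \mathbb{C}_{\rm TR}(\W_{\varepsilon}) \,,
\end{align}
which is precisely condition \ref{Ineq.Converse}.

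Next I would invoke Theorem~\ref{Th.Wolfowitz_Conv}, obtaining the lower bound \ref{Eq.LB_Ave_Error} on the average error probability $\bar{P}_{e,1}$. With the gap $R - \mathbb{C}_{\rm TR}(\W_{\varepsilon}) > 0$ fixed (it is determined by the constant ratio $m/n$), the first correction term decays like $1/n$ and the second decays exponentially in $n$, so the bound forces $\bar{P}_{e,1} \to 1$ as $n \to \infty$. Finally I would pass from average to maximum via the elementary inequality
\begin{align}
    \max_{i \in [\![M]\!]} P_{e,1}(i) \;\geq\; \bar{P}_{e,1} \;=\; \frac{1}{M}\sum_{i \in [\![M]\!]} P_{e,1}(i) \,,
\end{align}
so that $\max_i P_{e,1}(i) \to 1$ as well. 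Consequently, for any prescribed $q < 1$ there is an $n_0$ beyond which $\max_i P_{e,1}(i) > q$, and the message attaining this maximum has an error probability that cannot be upper bounded by $q$; since $q<1$ was arbitrary, this establishes the claim.

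I do not expect a genuine obstacle here: the entire content is carried by the already-stated Wolfowitz strong converse. The only points deserving care are the rate-rewriting identity and the verification that the gap $R - \mathbb{C}_{\rm TR}(\W_{\varepsilon})$ remains bounded away from zero along the sequence (guaranteed by holding the ratio $m/n = R$ fixed), so that both correction terms in \ref{Eq.LB_Ave_Error} genuinely vanish and the average-to-maximum step delivers a bound tending to $1$.
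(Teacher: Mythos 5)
Your proposal is correct and follows essentially the same route as the paper: rewrite the length constraint as the strong-converse condition $R > \mathbb{C}_{\rm TR}(\W_{\varepsilon})$ via the identity $1 + \tfrac{H(\varepsilon)}{1-H(\varepsilon)} = \tfrac{1}{1-H(\varepsilon)}$, invoke Theorem~\ref{Th.Wolfowitz_Conv}, and conclude that some message must have error probability tending to $1$. The only difference is that you make explicit two points the paper leaves implicit — the elementary inequality $\max_i P_{e,1}(i) \geq \bar{P}_{e,1}$ and the need for the gap $R - \mathbb{C}_{\rm TR}(\W_{\varepsilon})$ to stay bounded away from zero along the sequence — which is a mild improvement in rigor rather than a different argument.
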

\begin{proof}
Observe that \eqref{Ineq.UB_n} implies a chain of equations as follows
    \begin{align}
        \label{Ineq.n_Chain_Eq}
        n & < m \left( 1 + \frac{H(\varepsilon)}{1 - H(\varepsilon)} \right) \Rightarrow
        \nonumber\\
        & = m \left( \frac{1-H(\varepsilon)+H(\varepsilon)}{1-H(\varepsilon)} \right)
        \nonumber\\
        & = m \left( \frac{1}{1-H(\varepsilon)} \right)
        \nonumber\\
        & = \frac{m}{1-H(\varepsilon)}
        \nonumber\\
        & \stackrel{(a)}{=} \frac{m}{\mathbb{C}_{\rm TR}}
        \nonumber\\
        & = \frac{\log 2^m}{\mathbb{C}_{\rm TR}}
        \nonumber\\
        & \stackrel{(b)}{=} \frac{\log M(m)}{\mathbb{C}_{\rm TR}} \,,
    \end{align}
where $(a)$ employs $\mathbb{C}_{\rm TR}(\W_{\varepsilon}) = 1 - H(\varepsilon)$ and $(b)$ holds by $M(m) = 2^m = M(n,R)$. Thereby exploiting $M(m) = M(n,R)$ into \eqref{Ineq.n_Chain_Eq} yields,
\begin{align}
    \frac{\log M(n,R)}{n} = R > \mathbb{C}_{\rm TR}(\W_{\varepsilon}) \,.
\end{align}
Therefore, by Theorem~\ref{Th.Wolfowitz_Conv} we conclude that the average error probability converges to $1$ which implies that there exist at least one message whose maximum error probability converges to $1$ or can not be upper bounded by any constant $q<1$. This completes the proof of Corollary~\ref{Coroll.TR_1}.
\end{proof}
\begin{corollary}
\label{Coroll.TR_2}
Let $\lambda > 0$ be an arbitrarily finite large constant. Then, $0 < \exists\,\varepsilon < \frac{1}{2}$, such that no code of length $n < \lambda m$ can guarantee any bound $q < 1$ on the maximum error probability of message transmission.
\end{corollary}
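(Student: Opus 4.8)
The plan is to obtain Corollary~\ref{Coroll.TR_2} as a direct consequence of Corollary~\ref{Coroll.TR_1}, by exploiting the fact that the critical length threshold appearing there, namely $m/\mathbb{C}_{\rm TR}(\W_{\varepsilon}) = m/(1 - H(\varepsilon))$, can be made arbitrarily large relative to $m$ by driving the crossover probability toward $1/2$. Indeed, since $\mathbb{C}_{\rm TR}(\W_{\varepsilon}) = 1 - H(\varepsilon) \to 0$ as $\varepsilon \to 1/2$, the multiplicative gap $1/(1 - H(\varepsilon))$ between $m$ and the threshold grows without bound, and this is precisely the quantity that must dominate the prescribed factor $\lambda$.

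First I would fix the target constant $\lambda > 0$ and seek an $\varepsilon$ for which $\lambda m \leq m/(1 - H(\varepsilon))$, equivalently $H(\varepsilon) \geq 1 - 1/\lambda$. If $\lambda \leq 1$ the requirement is vacuous, since $\mathbb{C}_{\rm TR}(\W_{\varepsilon}) \leq 1$ forces $m/\mathbb{C}_{\rm TR}(\W_{\varepsilon}) \geq m \geq \lambda m$, so I may assume $\lambda > 1$, in which case $1 - 1/\lambda \in (0,1)$. The binary entropy $H$ is continuous and strictly increasing on $[0,1/2]$ with $H(0) = 0$ and $H(1/2) = 1$, so by the intermediate value theorem there is a unique $\varepsilon^{*} \in (0,1/2)$ with $H(\varepsilon^{*}) = 1 - 1/\lambda$. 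Choosing any $\varepsilon \in (\varepsilon^{*}, 1/2)$ then yields $H(\varepsilon) > 1 - 1/\lambda$, hence $\lambda m < m/(1 - H(\varepsilon)) = m/\mathbb{C}_{\rm TR}(\W_{\varepsilon})$.

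With this choice of $\varepsilon$ in hand, any code of length $n < \lambda m$ automatically satisfies $n < m/\mathbb{C}_{\rm TR}(\W_{\varepsilon}) = m(1 + H(\varepsilon)/(1 - H(\varepsilon)))$, which is exactly hypothesis \eqref{Ineq.UB_n} of Corollary~\ref{Coroll.TR_1}. That corollary then guarantees at least one message whose error probability cannot be upper bounded by any constant $q < 1$; since the maximum error probability over the message set dominates the error probability of that single message, the maximum error probability likewise admits no bound $q < 1$, and the claim follows.

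I do not anticipate a genuine obstacle here: the statement is essentially a reparametrization of Corollary~\ref{Coroll.TR_1} through the monotone correspondence $\varepsilon \mapsto \mathbb{C}_{\rm TR}(\W_{\varepsilon})$, and the only points requiring care are the limiting behavior of $H$ at $1/2$ (supplying arbitrarily small capacity, hence arbitrarily large threshold) together with the trivial treatment of $\lambda \leq 1$. The underlying strong-converse content is entirely carried by Theorem~\ref{Th.Wolfowitz_Conv}, which Corollary~\ref{Coroll.TR_1} already invokes, so no new probabilistic estimate is needed.
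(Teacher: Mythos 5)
Your proposal is correct and follows essentially the same route as the paper: the paper's proof simply picks $\varepsilon$ with $H(\varepsilon) = 1 - \frac{1}{\lambda}$, so that the threshold $m\left(1 + \frac{H(\varepsilon)}{1-H(\varepsilon)}\right)$ equals $\lambda m$, and then invokes Corollary~\ref{Coroll.TR_1}. Your version merely adds harmless extra care — the intermediate value theorem for existence of $\varepsilon$, a strict inequality via $\varepsilon > \varepsilon^{*}$, and the trivial case $\lambda \leq 1$ (which the paper's exact choice silently excludes) — without changing the substance of the argument.
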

\begin{proof}
    Choose an $\varepsilon$ where $0 < \varepsilon < \frac{1}{2}$, so that the following condition is fulfilled
    \begin{align}
        H(\varepsilon) = 1 - \frac{1}{\lambda} \,.
    \end{align}
    Now, apply the Corollary~\ref{Coroll.TR_1}.
\end{proof}
\subsection{DKI Coding For BSC}
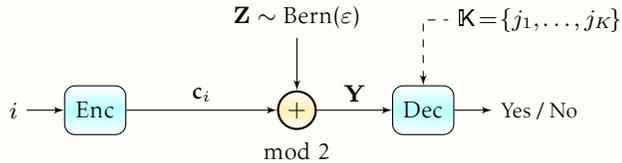
\begin{figure}[!b]
    \centering
    \vspace{.1in}
	\scalebox{1}{
\tikzstyle{farbverlauf} = [ top color=white, bottom color=white!80!gray]

\tikzstyle{block1} = [draw,top color=white, middle color=cyan!30, rectangle, rounded corners,
minimum height=2em, minimum width=2.5em]

\tikzstyle{block2} = [draw,top color=white, middle color=cyan!30, rectangle, rounded corners,
minimum height=2em, minimum width=2.5em]

\tikzstyle{block_sum} = [draw, top color = white, middle color = orange_6b!30, rectangle, rounded corners, minimum height=1em, minimum width=1.5em]

\tikzstyle{input} = [coordinate]
\tikzstyle{sum} = [draw, circle,inner sep=0pt, minimum size=5mm,  thick]
\tikzstyle{arrow}=[draw,->]

\begin{tikzpicture}[auto, node distance=2cm,>=latex']
\node[] (M) {$\fontfamily{jkpl}\selectfont i$};
\node[block1,right=.5cm of M] (enc) {\text{\fontfamily{jkpl}\selectfont Enc}};
\node[block_sum,sum, right=2cm of enc] (channel) {$+$};

\node[below=.5mm of channel] (mod2) {$\text{\fontfamily{jkpl}\selectfont mod 2}$};

\node[block2, right=1cm of channel] (dec) {\text{\fontfamily{jkpl}\selectfont Dec}};
\node[right=.5cm of dec] (Output) {$\text{\fontfamily{jkpl}\selectfont\small Yes\,/\,No}$};

\node[above=.7cm of channel] (noise) {$\textbf{Z}\sim \text{\fontfamily{jkpl}\selectfont Bern($\varepsilon$)}$};

\node[above=.7cm of Output] (Target2) {\fontfamily{jkpl}\selectfont $\mathbbmss{K} \hspace{-.6mm} = \hspace{-.6mm} \{j_1,\hspace{-.2mm}\ldots\hspace{-.2mm},j_K \hspace{-.5mm}\}$};

\draw[->] (M) -- (enc);
\draw[->] (enc) --node[above]{$\fontfamily{jkpl}\selectfont \textbf{c}_i$} (channel);
\draw[->] (noise) -- (channel);
\draw[->] (channel) --node[above]{$\textbf{Y}$} (dec);
\draw[->] (dec) -- (Output);
\draw[dashed,<-] (dec.north) |- (Target2);

\end{tikzpicture}
}
	\vspace{-1mm}
    \captionsetup{justification=justified}
	\caption{System model for DKI communication setting in a BSC. Employing a deterministic encoder in the transmitter, the message $i$ is mapped onto the codeword $\fc_i = (c_{i,t})|_{t=1}^n$ using a deterministic known function. The decoder at the receiver is provided with an arbitrary target message set $\mathbbmss{K}$, and given the channel output vector $\fY = (Y_t)|_{t=1}^n$, it asks whether $i$ belong to $\mathbbmss{K}$ or not.}
    \label{Fig.BSC_Channel_DKI}
\end{figure}
The definition of a DKI code for the BSC $\bW_{\varepsilon}$ is given below.
\begin{definition}[BSC-DKI Code]
\label{Def.BSC-DKI-Code}
An $(n,\allowbreak M(n,R),\allowbreak K(n,\allowbreak \kappa), \allowbreak e_1, \allowbreak e_2)$-BSC-DKI code for a BSC $\bW_{\varepsilon}$ for integers $M(n,R)$ and $K(n,\kappa)$, where $n$ and $R$ are the codeword length and coding rate, respectively, is defined as a system $(\C,\sT_{\mathbbmss{K}})$, which consists of a codebook $\C = \{ \mathbf{c}_i \}_{i \in [\![M]\!]} \subset \{0,1\}^n$, with $\fc_i = (c_{i,t})|_{t=1}^n \subset \{0,1\}^n$, such that
\begin{align}
    \frac{1}{n}\sum_{t=1}^{n} c_{i,t} \leq P_{\,\text{avg}} \,,\,
\end{align}
$\forall i\in[\![M]\!]$ and a decoder\footnote{\,\textcolor{gray!70!black}{We recall that the decoding sets for the DKI problem may have in general intersection, a behaviour similar to that of the RI problem. However to guarantee a vanishing type II error probability we will observe that size of such intersection becomes negligible asymptotically, i.e., as $n \to \infty$}}
\begin{align}
    \sT_{\mathbbmss{K}} = \bigcup_{j \in \mathbbmss{K}} \mathbbmss{T}_j \,,
\end{align}
where $\mathbbmss{T}_j \subset \{0,1\}^n$ is the decoding set corresponding to the single message $\fc_j$, $\forall \mathbbmss{K} \in \left\{ X \subseteq [\![M]\!] \;; |X| = K \right\}$ where $\mathbbmss{K}$ is an arbitrary subset\footnote{\,\textcolor{gray!70!black}{We recall that $\left\{ X \subseteq [\![M]\!] \;; |X| = K \right\}$ is the system (family) of all subsets of the set $[\![M]\!]$, with size $K$. Observe that in general we have $\left| \left\{ X \subseteq [\![M]\!] \;; |X| = K \right\} \right| = \binom{M}{K}$ and the error requirement as imposed by the DKI code definition applies to \emph{each} possible choice of the set $\mathbbmss{K}$ with $K$ arbitrary messages among all $\binom{M}{K}$ cases.}} with size $K$.
Given a message $i \in [\![M]\!]$, the encoder transmits codeword $\mathbf{c}_i$, and the decoder's task is to address a \emph{binary hypothesis}: Was a target message $j \in \mathbbmss{K}$ sent or not? There exist two types of errors that may happen:
\begin{itemize}
    \item [Type I Error Event:] Rejection of the actual message; $i \in \mathbbmss{K}$
    \item [Type II Error Event:] Acceptance of a wrong message; $i \notin \mathbbmss{K}\,.$
\end{itemize}
The associated error probabilities of the DKI code $(\C,\sT)$ reads
\begin{align}
    \label{Eq.TypeI-Error-DKI}
    P_{e,1}(i) & = \Pr \left( \fY \in \sT_{\mathbbmss{K}}^c \,\big|\, \fx = \fc_i \right)_{i \in \mathbbmss{K}} \hspace{-1mm} = 1 - \hspace{-1mm} \sum_{\fy \in \sT_{\mathbbmss{K}}} W^n \big( \fy \, \big| \, \fc_i \big)_{i \in \mathbbmss{K}} \hspace{1mm} \text{(Miss-Identification)} \,,
    \\
    P_{e,2}(i,\mathbbmss{K}) & = \Pr \left( \fY \in \sT_{\mathbbmss{K}} \,\big|\, \fx = \fc_i \right)_{i \notin \mathbbmss{K}} = \sum_{\fy \in \sT_{\mathbbmss{K}}
    } W^n \big( \fy \, \big| \, \fc_i \big)_{i \notin \mathbbmss{K}} \hspace{1mm} \text{(False Identification)} \,.
    \label{Eq.TypeII-Error-DKI}
\end{align}
(see Figure~\ref{Fig.BSC_Channel_DKI}) and satisfy the following bounds
\begin{align}
    P_{e,1}(i) \leq e_1 \,,\, \forall i \in \mathbbmss{K} \,,
    \\
    P_{e,2}(i,\mathbbmss{K}) \leq e_2 \,,\, \forall i \notin \mathbbmss{K} \,.
\end{align}
where $\mathbbmss{K} \in \left\{ X \subseteq [\![M]\!] \;; |X| = K \right\}$ is an arbitrary $K$-size subset of $[\![M]\!]$ and $\forall e_1, e_2>0$.
\end{definition}
\begin{figure}[H]
\centering
\scalebox{1}{
\begin{tikzpicture}[scale=.52
][thick]

\draw[dashed] (0.07,.5) circle (3.5cm);

\draw (0,3.6) circle (.1cm);
\draw (3,1.5) circle (.1cm);
\draw (2,-.1) circle (.1cm);
\draw (-2,2) circle (.1cm);
\draw (-.1,.6) circle (.1cm);
\draw (1.4,-1.4) circle (.1cm);
\draw (-1.8,-.8) circle (.1cm);

\node at (0,3.05) {$\fontfamily{jkpl}\selectfont \fc_2$};
\draw[dashed] (0,3.6) circle (0.2cm);
\draw [fill=cyan!40, fill opacity=0.7] (0,3.6) circle (.1cm);

\node at (3,1) {$\fontfamily{jkpl}\selectfont \fc_3$};

\node at (2,-.5) {$\fontfamily{jkpl}\selectfont \fc_4$};

\node at (-2,1.6) {$\fontfamily{jkpl}\selectfont \fc_1$};
\node at (-.1,.2) {$\fontfamily{jkpl}\selectfont \fc_5$};

\node at (1.4,-1.9) {$\fontfamily{jkpl}\selectfont \fc_6$};
\draw[dashed] (1.4,-1.4) circle (.2cm);
\draw [fill=cyan!40, fill opacity=0.7] (1.4,-1.4) circle (.1cm);

\node at (-1.8,-1.2) {$\fontfamily{jkpl}\selectfont \fc_7$};

\draw[->] (0,4.4) -- ++(0,1)  node [fill=white,inner sep=3pt](a){$\text{\fontfamily{jkpl}\selectfont Input Space}$};

\draw[->] (-2.3+15.07,4.4) -- ++(0,1)  node [fill=white,inner sep=3pt](a){$\text{\fontfamily{jkpl}\selectfont Output Space}$};

\draw[dashed] (-0.3+13.07,.4) circle (3.5cm);

\draw (-4.3+15.50,1.5) circle (1.2cm);    
\draw [fill=gray!20, fill opacity=0.7] (-4.3+15.50,1.5) circle (1.2cm);
\node at (-4.3+15.30,1.5) {$\mathbbmss{T}_{1}$};

\draw (-2.3+15.07,-1.4) circle (1.2cm);    
\draw [fill=gray!20, fill opacity=0.7] (-2.3+15.07,-1.4) circle (1.2cm);
\node at (-2.3+15.07,-1.6) {$\mathbbmss{T}_{6}$};

\draw (-3.3+14.50,-.4) circle (1.2cm);    
\draw [fill=gray!20, fill opacity=0.7] (-3.3+14.50,-.4) circle (1.2cm);
\node at (-3.3+14.30,-.6) {$\mathbbmss{T}_{7}$};

\draw (-0.3+14.50,1.4) circle (1.2cm);     
\draw [fill=gray!20, fill opacity=0.7] (-0.3+14.50,1.4) circle (1.2cm);
\node at (-0.3+14.70,1.4) {$\mathbbmss{T}_{3}$};

\draw (-2.3+15.07,2.4) circle (1.2cm);     
\draw [fill=cyan!20, fill opacity=0.7] (-2.3+15.07,2.4) circle (1.2cm);
\node at (-2.3+15.07,2.6) {$\mathbbmss{T}_{2}$};

\draw (-2.4+15.07,.5) circle (1.2cm);    
\draw [fill=cyan!20, fill opacity=0.7] (-2.4+15.07,.5) circle (1.2cm);
\node at (-2.4+15.07,.5) {$\mathbbmss{T}_{4}$};

\draw (-0.3+14.50,-.5) circle (1.2cm); 
\draw [fill=cyan!20, fill opacity=0.7] (-0.3+14.50,-.5) circle (1.2cm);
\node at (-0.1+14.50,-.5) {$\mathbbmss{T}_{5}$};

\path (0.2,3.7) edge [-> , thick, mycolor9, bend left] node [sloped,midway,above,font=\small] {\text{\fontfamily{jkpl}\selectfont Correct Identification}}(13,3.1);
\path (0.2,3.6) edge [-> , thick, rot_8b, bend right] node [sloped,midway,below,font=\small] {\text{\fontfamily{jkpl}\selectfont Type I Error}}(11.3,2.2);
\path (1.6,-1.4) edge [-> , thick, rot_9b, bend right] node [sloped,midway,below,font=\small] {\text{\fontfamily{jkpl}\selectfont Type II Error}}(13.6,-1.1);

\end{tikzpicture}
}
\captionsetup{justification=justified}
\caption{Depiction of a deterministic 3-identification setting with target message set $\mathbbmss{K} = \{2,3,5\}$. In the correct identification event, channel output is detected in the union of the individual decoder $\mathbbmss{T}_j$ where $j$ belongs to the target message set. Type I error event occurs if the channel output is observed in the complement of the union of individual decoders for which the index of the codeword belongs to. The case where the index of codeword does not coincide to any of the individual decoders for which the channel output belongs to the their union, is referred to as the type II error.}
\label{Fig.DI-Code}
\end{figure}
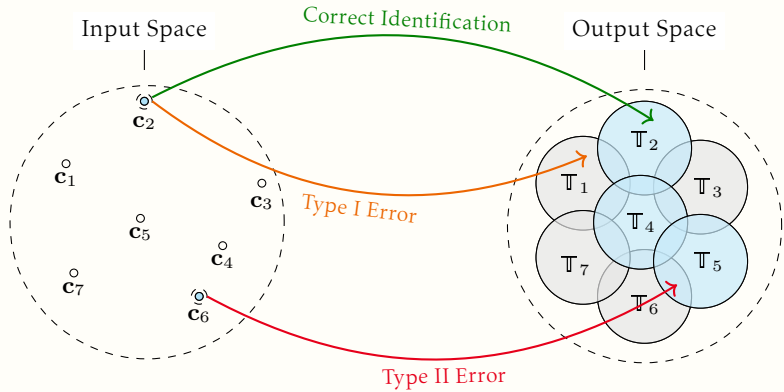
Next we define the achievable DKI rate and the DKI capacity.
\begin{definition}[Achievable Rate]
A DKI rate $R>0$ is called achievable if $\forall e_1, \allowbreak e_2>0$ and sufficiently large $n$, there exists an $(n,\allowbreak M(n\allowbreak,R),\allowbreak K(n,\allowbreak \kappa), \allowbreak e_1, \allowbreak e_2)$ DKI code. The operational DKI capacity of the BSC $\bW_{\varepsilon}$ is defined as the supremum of all achievable rates, and is denoted by $\mathbb{C}_{\rm DKI}(\bW_{\varepsilon},M,K)$.
\end{definition}
\begin{remark}
\label{Rem.Hamm_Distan_Dist}
Assuming that codeword $\fc_i$ is sent and channel output $\fy$ is observed at receiver, the number of crossovers (flips) that occur in the channel is $d_{\rm H}(\fy,\fc_i)$. Therefore, the probability that $k$ crossovers among the $n$ channel uses occur, follows a Binomial distribution with parameters $n$ and $\varepsilon$, as follows
\begin{align}
    \label{Eq.Hamm_Distan_Dist}
    \Pr \left( d_{\rm H}(\fY,\fc_i) = k \right) = \binom{n}{k} \varepsilon^k (1-\varepsilon)^{n-k} \,,
\end{align}
\end{remark}
\section{DKI Capacity of The BSC}
\label{Sec.Res}
In this section, we present our main results, i.e., achievability and converse proofs for the BSC. Subsequently, we provide the detailed proofs.
\subsection{Main Results}
The DKI capacity theorem for the BSC channel $\bW_{\varepsilon}$ is stated below.
\begin{theorem}
\label{Th.DKI-Capacity}
Let $\W_{\varepsilon}$ indicate a BSC with cross-over probability $0<\varepsilon<\frac{1}{2}$ and let $\beta > 0$ be an arbitrary small positive. Further let $H(p)$ indicate the binary entropy function and $T_{\varepsilon}(p) = H(\varepsilon) + (p - \varepsilon) \diff{H(p)}{p}|_{p=\varepsilon}$ specify the tangent to $H(p)$ at point $\varepsilon \,.$ Then the deterministic K-Identification capacity of $\W_{\varepsilon}$ subject to the Hamming weight constraint of the form $n^{-1}\sum_{t=1}^n x_t \leq A$ with \textbf{exponentially} large target message set, i.e., $K(n,\kappa) = 2^{\kappa n}$ where the target identificaiton rate $\kappa$ satisfy
\begin{align}
    \label{Ineq.kappa_theorem}
    0 \leq \kappa < T_\varepsilon\left( \left( 1 - \beta \right) \, \varepsilon + \beta / 2 \right) - H\left( \left( 1 - \beta \right) \, \varepsilon + \beta / 2 \right)
    \,,
\end{align}
and in the \textbf{exponential} codebook size, i.e., $M(n,R) = 2^{nR}$, is given by
\begin{align}
    \mathbb{C}_{\rm DKI} \left( \W_{\varepsilon},K \right) & = \begin{cases}
    H(A) & \text{if $A < \frac{1}{2}$}\\
    1 & \text{if $A \geq \frac{1}{2}$} \,. 
    \end{cases}
\end{align}
\end{theorem}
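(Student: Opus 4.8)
The plan is to establish the two matching bounds $\mathbb{C}_{\rm DKI}(\W_\varepsilon,K)\le H(A)$ and $\mathbb{C}_{\rm DKI}(\W_\varepsilon,K)\ge H(A)$ for $A<\tfrac12$ (with $H(A)$ replaced by $1$ when $A\ge\tfrac12$). I would dispatch the converse first, as it is the cleaner half and, strikingly, does not invoke the channel noise at all.

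\emph{Converse.} The crux is that any admissible codebook must consist of \emph{pairwise distinct} codewords. Suppose instead $\fc_i=\fc_{i'}$ with $i\neq i'$. Since $K<M$, one may pick a target set $\mathbbmss{K}$ with $i\in\mathbbmss{K}$ but $i'\notin\mathbbmss{K}$. As $\fc_i=\fc_{i'}$ forces $W^n(\cdot\,|\,\fc_i)=W^n(\cdot\,|\,\fc_{i'})$, the error definitions give $P_{e,2}(i',\mathbbmss{K})=\sum_{\fy\in\sT_{\mathbbmss{K}}}W^n(\fy\,|\,\fc_{i'})=\sum_{\fy\in\sT_{\mathbbmss{K}}}W^n(\fy\,|\,\fc_i)=1-P_{e,1}(i)$, so that $e_2\ge P_{e,2}(i',\mathbbmss{K})\ge 1-e_1$, i.e.\ $e_1+e_2\ge1$, which is impossible once $e_1,e_2<\tfrac12$. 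Hence all $M$ codewords are distinct points of the weight-constrained set $\B_{\f0}(n,nA)$, whence $M\le\mathrm{Vol}\big(\B_{\f0}(n,nA)\big)=\sum_{k=0}^{\lfloor nA\rfloor}\binom{n}{k}$. For $A<\tfrac12$ the standard estimate $\tfrac1n\log\sum_{k\le nA}\binom nk\to H(A)$ yields $R=\tfrac1n\log M\le H(A)+o(1)$, and for $A\ge\tfrac12$ the trivial bound $M\le 2^n$ gives $R\le1$.

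\emph{Achievability.} For any $R<H(A)$ I would build the codebook by a greedy Gilbert-type packing: repeatedly adjoin points of $\B_{\f0}(n,nA)$ whose Hamming distance to all previously chosen codewords is at least $d$. Maximality forces the radius-$(d-1)$ balls about the codewords to cover $\B_{\f0}(n,nA)$, so $M\ge \mathrm{Vol}(\B_{\f0}(n,nA))/\mathrm{Vol}(\B_{\f0}(n,d-1))\approx 2^{n(H(A)-H(d/n))}$, and sending the relative packing radius $d/n\to0$ drives $R\to H(A)$. As decoders I would take the Hamming balls $\mathbbmss{T}_j=\B_{\fc_j}(n,n\delta_\beta)$ with $\delta_\beta=(1-\beta)\varepsilon+\beta/2>\varepsilon$, the very point appearing in \eqref{Ineq.kappa_theorem}. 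The type I (miss) error is then $P_{e,1}(i)\le\Pr\big(d_{\rm H}(\fY,\fc_i)>n\delta_\beta\,\big|\,\fc_i\big)=\Pr(\mathrm{Bin}(n,\varepsilon)>n\delta_\beta)$, which by the Chernoff bound decays like $2^{-n(T_\varepsilon(\delta_\beta)-H(\delta_\beta))}$ because $\delta_\beta>\varepsilon$; this step uses only Remark~\ref{Rem.Hamm_Distan_Dist}.

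\emph{The main obstacle — type II.} For $i\notin\mathbbmss{K}$ I would union-bound $P_{e,2}(i,\mathbbmss{K})\le\sum_{j\in\mathbbmss{K}}\Pr(\fY\in\mathbbmss{T}_j\mid\fc_i)$ and control each summand through the law of $d_{\rm H}(\fY,\fc_j)$: splitting the noise on the $w=d_{\rm H}(\fc_i,\fc_j)$ disagreeing coordinates and on the $n-w$ agreeing ones into two independent Binomials, the event $d_{\rm H}(\fY,\fc_j)\le n\delta_\beta$ becomes a large-deviation event whose probability is bounded by a Binomial tail with exponent governed by $T_\varepsilon(\delta_\beta)-H(\delta_\beta)$. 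Summing $2^{\kappa n}$ such terms produces a bound of the form $2^{n(\kappa-(T_\varepsilon(\delta_\beta)-H(\delta_\beta)))}$, which vanishes precisely under the hypothesis $\kappa<T_\varepsilon(\delta_\beta)-H(\delta_\beta)$ of \eqref{Ineq.kappa_theorem}. The delicate part, which I expect to absorb most of the work, is making this estimate uniform over all $\binom{M}{K}$ choices of $\mathbbmss{K}$: one must count, by a sphere-covering argument, how many codewords sit at each Hamming distance from $\fc_i$ and weave these counts together with the Binomial CDF tail so that the separation exponent beats the $2^{\kappa n}$ union before the relative radius $d/n$ is sent to $0$. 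Letting $\beta\downarrow0$ and $d/n\downarrow0$ then pushes $R$ up to $H(A)$, matching the converse, while the case $A\ge\tfrac12$ is identical with the weight constraint inactive, yielding capacity $1$.
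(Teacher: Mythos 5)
Your converse is sound and is essentially the paper's own argument (the distinct-codeword contradiction of Lemma~\ref{Lem.DConverse} followed by the volume bound), and your achievability architecture — greedy Gilbert-type packing, Hamming-ball decoders with threshold $\delta_\beta$, Chernoff bound for type I, union bound over the $2^{\kappa n}$ target messages for type II — is also the paper's. The genuine gap is in your per-pair type II estimate. You claim that for $i \notin \mathbbmss{K}$, $j \in \mathbbmss{K}$, the event $\{d_{\rm H}(\fY,\fc_j) \le n\delta_\beta\}$ has probability roughly $2^{-n(T_\varepsilon(\delta_\beta)-H(\delta_\beta))}$, but that exponent is wrong: writing $w = d_{\rm H}(\fc_i,\fc_j)$, your own two-Binomial split gives $d_{\rm H}(\fY,\fc_j) \sim \mathrm{Bin}(w,1-\varepsilon) + \mathrm{Bin}(n-w,\varepsilon)$, with mean $n\varepsilon + w(1-2\varepsilon)$. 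Your threshold is $\delta_\beta = (1-\beta)\varepsilon + \beta/2 = \varepsilon + (\beta/2)(1-2\varepsilon)$, so the event is a lower-deviation event only if $w/n > \beta/2$, and its exponent is controlled by the gap $w/n - \beta/2$, which vanishes as $w/n \downarrow \beta/2$; for $w/n \le \beta/2$ the threshold sits at or above the mean and the probability is bounded away from zero. Since you decouple the packing distance from $\beta$ and send $d/n \to 0$ to push $R \to H(A)$, the code will contain pairs at distance $w = d < n\beta/2$, and for those pairs the per-pair probability does not decay at all — the union bound then fails for every $\kappa \ge 0$. The delicate point is therefore not uniformity over the $\binom{M}{K}$ choices of $\mathbbmss{K}$ (the per-pair bound plus union bound already gives that), but the coupling between the code's minimum distance and the decoding threshold.

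The paper resolves exactly this coupling: (i) the minimum distance is tied to the same parameter, $d_{\rm H}(\fc_i,\fc_j) \ge \floor{n\beta}+1$ (Lemma~\ref{Lem.Exhaustion_Case_1}), and (ii) the threshold is taken strictly inside the interval, $\delta_\beta = (1-\beta/2)\varepsilon + \beta/4 < (1-\beta)\varepsilon + \beta/2$, so that the effective lower-tail parameter $\tau = (\delta_\beta - \beta/2)/(1-\beta)$ stays strictly below $\varepsilon$ and the per-pair exponent $T_\varepsilon(\tau)-H(\tau)$ is strictly positive; this uses the lower-tail Binomial bound of Lemma~\ref{Lem.Binom_Tail_3}, not the upper-tail bound that suffices for type I. Note that your endpoint choice is not a cosmetic variant: plugging $\delta_\beta = (1-\beta)\varepsilon+\beta/2$ into the paper's machinery with minimum distance $\floor{n\beta}+1$ gives $\tau = \varepsilon$ exactly, i.e.\ a zero exponent, so the estimate breaks even with the correct packing. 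A repaired version of your argument must keep $\delta_\beta - \varepsilon$ strictly below $(d/n)(1-2\varepsilon)$ with a margin, and the admissible $\kappa$ is then bounded by a lower-tail divergence at a point below $\varepsilon$ rather than by $T_\varepsilon(\delta_\beta)-H(\delta_\beta)$; in particular, driving $R$ up to $H(A)$ and keeping a nondegenerate $\kappa$-region are competing demands on the single parameter $\beta$, which is the trade-off the paper's proof actually exhibits.
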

\vspace{-1mm}
\begin{proof}
The proof of Theorem~\ref{Th.DKI-Capacity} consists of two parts, namely the achievability and the converse proofs, which are provided in Sections~\ref{Sec.Achiev} and \ref{Sec.Conv}, respectively.
\end{proof}
\subsection{Lower Bound (Achievability Proof)}
\label{Sec.Achiev}
The achievability proof consists of the following two main steps.
\begin{itemize}
    \item \textbf{\textcolor{mycolor12}{Step~1:}} First, we propose a greedy-wise codebook construction and derive an analytical lower bound on the corresponding codebook size using similar argument as provided in the Gilbert-Varshamov (GV) bound\footnote{\,\textcolor{gray!70!black}{The early introduction of such bound in the literature is accomplished by Gilbert in \cite{Gilbert52}.}} for packing of non-overlapping balls embedded in the input space.
    \item \textbf{\textcolor{mycolor12}{Step~2:}} Then, to prove that this codebook leads to an achievable rate, we propose a decoder and show that the corresponding type I and type II error rates vanished as $n \to \infty$.
\end{itemize}
\subsubsection*{Codebook Construction}
Let $A = P_{\,\text{ave}}$. In the following, we confine ourselves to codewords that meet the condition $n^{-1} \sum_{t=1}^n c_{i,t} \leq A$, $\forall \, i \in [\![M]\!]$.
\begin{itemize}
    \item \textbf{\textcolor{mycolor12}{Case 1 - With Hamming Weight Constraint:}}
    $A \leq 1$, then the condition $n^{-1} \sum_{t=1}^n \allowbreak c_{i,t} \allowbreak \leq 1 \,, i \in [\![M]\!]$ is non-trivial in the sense that it induces a strict subset of the entire input space $\mathbf{H}^n$. We denote such subset by $\B_{\f0}(n,nA)$ and is equivalent to $\norm{\fc_i}_1 \leq A$.
    \item \textbf{\textcolor{mycolor12}{Case 2 - Without Hamming Weight Constraint:}}
    $A \geq 1$, then each codeword belonging to the $n$-dimensional Hamming cube $\mathbf{H}^n$ fulfilled the Hamming weight constraint, since $\frac{1}{n} \sum_{t=1}^n c_{i,t} \leq 1 \leq A \,, i \in [\![M]\!]$. Therefore, we address the entire input space $\mathbf{H}^n = \{0,1\}^n$ as the possible set of codewords and attempt to exhaust it in a brute-force manner in the asymptotic, i.e., as $n \to \infty$.
\end{itemize}
$\rhd$ \textbf{Analysis For Case 1:}
Observe that within this case, we again divide into two cases:
\begin{enumerate}
    \item $0 < A < \frac{1}{2}$
    \item $A \geq \frac{1}{2}$
\end{enumerate}
The argument for the need of such division is that the binary entropy function is monotonic increasing only for $0 \leq A \leq \frac{1}{2}$ and for $A \geq \frac{1}{2}$ is decreasing. That is, in the latter case, we can introduce an alternative Bernoulli process which result in a larger volume space, and at the same time, it guarantees the Hamming weight constraint.

For the sub-case 1, i.e., where $0 < A < \frac{1}{2}$, we restrict our considerations to an $n$-dimensional Hamming hyper ball with edge length $A$. We use a packing arrangement of overlapping hyper balls of radius $r_0 = \floor{n\beta}$ in an $n$-dimensional Hamming hyper ball $\B_{\f0}(n,nA)$, where

\begin{lemma}
\label{Lem.Exhaustion_Case_1}
Let $R < H(A)$ and let $\beta > 0$ be an arbitrary small constant.
Then for sufficiently large codeword length $n$, there exist a codebook $\C = \{ \mathbf{c}_i \}_{i \in [\![M]\!]} \subset \{0,1\}^n$, with $\fc_i = (c_{i,t})|_{t=1}^n \subset \{0,1\}^n$, which consists of $M$ sequences in the $n$-dimensional Hamming hyper ball $\B_{\f0}(n,nA)$, such that the following holds:
\begin{itemize}
    \item[\text{Hamming Distance Property}:] $\quad d_{\rm H}(\fc_{i},\fc_{j}) \geq \floor{n\beta} + 1 \qquad \forall i,j \in [\![M]\!]\;$ where $\;i \neq j \,.$
    \item[Codebook Size:] The codebook size is at least $\; 2^{nR-1}$, that is, $M \geq 2^{n(R-\frac{1}{n})}\,.$
\end{itemize}
\end{lemma}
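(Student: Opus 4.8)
The plan is to run a greedy Gilbert--Varshamov packing argument inside the restricted input space $\B_{\f0}(n,nA)$. First I would build the codebook incrementally: pick any point of $\B_{\f0}(n,nA)$ to be $\fc_1$, and having selected $\fc_1,\dots,\fc_m$, adjoin a new codeword $\fc_{m+1}\in\B_{\f0}(n,nA)$ whenever there still exists a point of $\B_{\f0}(n,nA)$ whose Hamming distance to every previously chosen $\fc_i$ is at least $\floor{n\beta}+1$; repeat until no admissible point remains. By construction every pair of distinct codewords then satisfies $d_{\rm H}(\fc_i,\fc_j)\geq\floor{n\beta}+1$, so the Hamming distance property holds automatically, and all codewords lie in $\B_{\f0}(n,nA)$ so the weight constraint is met. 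It remains only to lower bound the number $M$ of codewords produced when the procedure halts.

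The key step is the covering (maximality) argument. When the construction terminates, no further point can be added, which means that every $\fx\in\B_{\f0}(n,nA)$ lies within Hamming distance $r_0=\floor{n\beta}$ of at least one codeword; otherwise $\fx$ itself would be an admissible new codeword. Hence the balls of radius $r_0$ centred at the $M$ codewords cover $\B_{\f0}(n,nA)$. Since each such ball contains exactly $\sum_{k=0}^{r_0}\binom{n}{k}$ points of $\mathbf{H}^n$ regardless of its centre, counting points gives
\begin{align}
M\cdot\sum_{k=0}^{r_0}\binom{n}{k}\;\geq\;\text{Vol}\big(\B_{\f0}(n,nA)\big)\;=\;\sum_{k=0}^{\floor{nA}}\binom{n}{k}\,,
\end{align}
so $M$ is at least the ratio of these two volumes.

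To finish I would insert the standard entropy estimates for partial binomial sums. Since $0<A<\tfrac12$, the numerator satisfies $\sum_{k=0}^{\floor{nA}}\binom{n}{k}\geq\binom{n}{\floor{nA}}\geq(n+1)^{-1}2^{nH(\floor{nA}/n)}$, and because $A$ is bounded away from $0$ and $\tfrac12$, the Lipschitz property of $H$ there lets me replace the exponent by $nH(A)-O(1)$. For the denominator, as $\beta<\tfrac12$ we have $r_0/n\leq\beta$, so monotonicity of $H$ on $[0,\tfrac12]$ yields $\sum_{k=0}^{r_0}\binom{n}{k}\leq 2^{nH(\beta)}$. Combining these,
\begin{align}
M\;\geq\;\frac{2^{-O(1)}}{n+1}\,2^{\,n\left(H(A)-H(\beta)\right)}\,.
\end{align}
Because $R<H(A)$, choosing $\beta$ small enough that $H(\beta)<H(A)-R$ (possible as $H$ is continuous with $H(0)=0$) makes the exponent strictly exceed $R$, so the sub-exponential prefactor is absorbed and $M\geq 2^{nR-1}$ for all sufficiently large $n$, which is exactly the claimed codebook size.

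I expect the main obstacle to be bookkeeping rather than conceptual: making the two volume estimates tight enough that the gap $H(A)-H(\beta)-R$ genuinely dominates the $(n+1)^{-1}$ and $2^{-O(1)}$ factors, and verifying that replacing $H(\floor{nA}/n)$ by $H(A)$ costs only a constant, which relies on $A$ lying in the open interval where $H$ is Lipschitz. The calibration of $\beta$ against $R$ and $A$ is the one place where the hypotheses $R<H(A)$ and ``$\beta$ arbitrarily small'' must be used jointly; everything else is the textbook greedy packing bound.
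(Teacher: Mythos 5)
Your proposal is correct, and its core is the same Gilbert--Varshamov argument the paper uses: a greedy construction whose maximality forces the radius-$\floor{n\beta}$ balls around the codewords to cover $\B_{\f0}(n,nA)$, followed by the volume-ratio bound $M \geq \text{Vol}\left(\B_{\f0}(n,nA)\right)/\text{Vol}\left(\B_{\fc_1}(n,\floor{n\beta})\right)$ and entropy estimates for the two binomial sums. The differences are in execution, and both favor your version. First, you select codewords directly and deterministically from $\B_{\f0}(n,nA)$, so the weight constraint holds by construction; the paper instead generates candidate codewords as i.i.d. $\text{Bernoulli}(A)$ vectors and invokes the weak law of large numbers to claim $\frac{1}{n}\sum_{t} C_t \leq A$, which is not sound as written --- such a vector violates the constraint with probability roughly $1/2$, and concentration only yields $\leq A + \tau$, not $\leq A$. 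Your route makes that step unnecessary. Second, your endgame is sharper: you calibrate $\beta$ so that $H(\beta) < H(A) - R$, absorb the polynomial prefactor, and land exactly on the claimed bound $M \geq 2^{nR-1}$, whereas the paper stops at $\log M \geq nH(A) + o(\log n) - nH(\beta)$ and argues only that the rate tends to $H(A)$ as $n \to \infty$ and $\beta \to 0$, never explicitly verifying the stated codebook-size inequality. (You also use the textbook bounds $\binom{n}{\floor{nA}} \geq (n+1)^{-1}2^{nH(\floor{nA}/n)}$ together with Lipschitz continuity of $H$ near $A$, in place of the paper's Stirling-based appendix lemma; this is an equivalent and somewhat more standard way to reach the $nH(A) - O(1)$ exponent.) One caveat is interpretive rather than mathematical: the stated bound cannot hold for every fixed $\beta$ (take $H(\beta) > H(A) - R$), so your reading --- that $\beta$ must be small relative to the gap $H(A) - R$ --- is the only one under which the lemma is true, and you are right to flag that joint calibration as the point where the hypotheses interact.
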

\begin{proof}
Recall that the minimum Hamming distance of a code $\C$ is given by
\begin{align}
    d_{\,\text{min}} \triangleq \underset{(i,j) \in [\![M]\!] \times [\![M]\!] }{\min} \, d_{\rm H}(\fc_i,\fc_j) \;.
\end{align}
We begin to obtain some codeword that fulfill the Hamming weight constraint, namely,
\begin{align}
    \label{Ineq.Hamming_Constraint}
    \frac{1}{n} \sum_{t=1}^n c_t \leq A \,.
\end{align}
First, we generate a codeword $\fC \overset{i.i.d}{\sim} \text{Bernoulli}(A)$\footnote{\,\textcolor{gray!70!black}{Such a random generation should not be confused with a similar procedure as is accomplished in the encoding stage of the RI problem. While therein, each message is mapped to a codeword through a random distribution, here for the DI problem, we first solely restrict ourselves to generation of codewords through the Bernoulli distribution to guarantee the Hamming weight constraint, and employ them in the next procedure called the greedy construction up to an exhaustion. Then, after the exhaustion, we establish a deterministic mapping between the message set and the codebook, that is, each message is associated to a codeword. Further, in the RI problem, it is in general possible that two different message are mapped to a common codeword, however, considering the DI problem in here, there exist a one-to-one mapping between the set of messages and the set of codewords.}}. Since $\mathbb{E}\left[ C_t \right] = A$, by the \emph{weak law of large numbers}, we obtain
\begin{align}
    \lim_{n\to\infty} \Pr \left( \left| \frac{1}{n} \sum_{t=1}^n C_t - A \right| \leq \tau \right) = 1 \,,
\end{align}
where $\tau>0$ is an arbitrary small positive. Therefore, for sufficiently large codeword length $n$, the event $\big| n^{-1} \sum_{t=1}^n C_t - A \big| \leq \tau$ occurs with probability $1$, which implies
\begin{align}
    \label{Ineq.Event_Hamming_Constraint}
    \frac{1}{n} \sum_{t=1}^n C_t \leq A + \tau \,.
\end{align}
Now, observe that since \eqref{Ineq.Event_Hamming_Constraint} holds for arbitrary values of $\tau$, it implies that the following condition for sufficiently large $n$, is fulfilled
\begin{align}
    \frac{1}{n} \sum_{t=1}^n C_t \leq A \,,
\end{align}
which is the Hamming weight constraint as required.
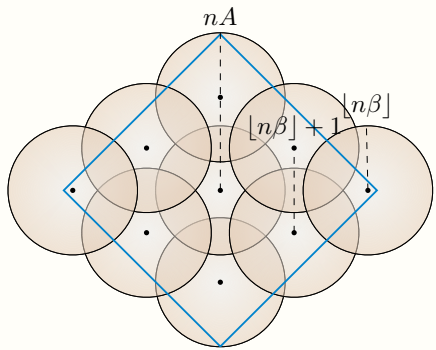
\begin{figure}[t]
    \centering
	\scalebox{.87}{
\begin{tikzpicture}[scale=.65,rotate=45][thick]

\draw (-1.53,-1.53) circle (1.52cm);
\draw [inner color =gray!15,outer color = brown!40, fill opacity=0.2] (-1.53,-1.53) circle (1.52cm);
\node [fill=black, shape=circle, inner sep=.8pt] ($.$) at (-1.53,-1.53) {};
\draw (0,0) circle (1.52cm);
\draw [inner color =gray!15,outer color = brown!40, fill opacity=0.2] (0,0) circle (1.52cm);
\draw (1.55,1.55) circle (1.52cm);
\draw [inner color =gray!15,outer color = brown!40, fill opacity=0.2] (1.55,1.55) circle (1.52cm);
\node [fill=black, shape=circle, inner sep=.8pt] ($.$) at (1.55,1.55) {};
\draw (+.52,-1.93) circle (1.52cm);
\draw [inner color =gray!15,outer color = brown!40, fill opacity=0.2] (+.52,-1.93) circle (1.52cm);
\node [fill=black, shape=circle, inner sep=.8pt] ($.$) at (+.52,-1.93) {};
\draw (1.93,-.52) circle (1.52cm);
\draw [inner color =gray!15,outer color = brown!40, fill opacity=0.2] (1.93,-.52) circle (1.52cm);
\node [fill=black, shape=circle, inner sep=.8pt] ($.$) at (1.93,-.52) {};
\draw (-1.93,+.52) circle (1.52cm);
\draw [inner color =gray!15,outer color = brown!40, fill opacity=0.2] (-1.93,.52) circle (1.52cm);
\node [fill=black, shape=circle, inner sep=.8pt] ($.$) at (-1.93,.52) {};
\draw (-.52,1.93) circle (1.52cm);
\draw [inner color =gray!15,outer color = brown!40, fill opacity=0.2] (-.52,1.93) circle (1.52cm);
\node [fill=black, shape=circle, inner sep=.8pt] ($.$) at (-.52,1.93) {};

\draw (2.45,-2.45) circle (1.52cm);
\draw [inner color =gray!15,outer color = brown!40, fill opacity=0.2] (2.45,-2.45) circle (1.52cm);
\node [fill=black, shape=circle, inner sep=.8pt] ($.$) at (2.45,-2.45) {};
%
%
\draw (-2.45,2.45) circle (1.52cm);
\draw [inner color =gray!15,outer color = brown!40, fill opacity=0.2] (-2.45,2.45) circle (1.52cm);
\node [fill=black, shape=circle, inner sep=.8pt] ($.$) at (-2.45,2.45) {};
%

\foreach \s in {2.6}
{
\draw [thick,blau_2b] (-\s,-\s) -- (\s,-\s) -- (\s,\s) -- (-\s,\s) -- (-\s,-\s);
}

\node [fill=black, shape=circle, inner sep=.8pt] ($.$) at (0,0) {};



\draw [dashed] (+.52,-1.93) -- (1.93,-.52) node [above,font=\large] {$\floor{n\beta}+1$};
\draw [dashed] (0,0) -- (2.63,+2.63) node [above,font=\large] {$nA$};
\draw [dashed] (2.45,-2.45) -- (3.5,-1.35) node [above,font=\large] {$\floor{n\beta}$};
\end{tikzpicture}}
	\vspace{3mm}
	\caption{Illustration of an exhausted greedy-wise ball packing inside a hyper cube in $1$-norm, where union of the small balls of radius $r_0 = \floor{n\beta}$ cover a larger cube. As the codewords are assigned to the center of each ball lying inside the hyper cube according to the greedy construction, the $1$-norm of a codeword is bounded by $nA$ as required.}
	\vspace{-5mm}
	\label{Fig.Density}
\end{figure}

Next, we begin with the greedy procedure as follows: Let denote the first codeword determined by the Bernoulli distribution by $\fc_1$ and assign it to message with index $1$. Then, we remove all the sequences that have a Hamming distance of less or equal than $\floor{n\beta}$ from $\fc_1$. That is, we delete all the codewords that lies inside the Hamming ball with center $\fc_1$ and radius $r = \floor{n\beta}$. Then, we generate a second codeword by the Bernoulli distribution and repeat this procedure until all the sequences belonging to the legit subspace, i.e., the Hamming hyper ball in $1$-norm; $\B_{\f0}(n,nA)$, are exhausted. Therefore, such a construction fulfill the property provided in Lemma~\ref{Lem.Exhaustion_Case_1} regarding the minimum Hamming distance of the code, i.e.,
\begin{align}
    \label{Ineq.d_H_LB_1}
    d_{\rm H}(\fc_{i},\fc_{j}) \geq \floor{n\beta} + 1 \;.
\end{align}
In general, the volume of a Hamming ball of radius $r$, assuming that the alphabet size is $q$, is the number of codewords that it encompass and is given by \cite[see Ch.~1]{Richardson08}
\begin{align}
    \label{Eq.Hamming_Ball_Volume}
    \text{Vol}\left(\B_{\fx}(n,r)\right) = \sum_{i=0}^{r} \binom{n}{i} (q-1)^i \,.
\end{align}
Let $\mathscr{B}$ denote the obtained ball packing after the exhaustion of the entire Hamming hyper ball $\B_{\f0}$, i.e., an arrangement of $M$ overlapping small hyper balls $\B_{\fc_i}(n,r_0)$, with radius $r_0 = \floor{n\beta}$ where $i \in [\![M]\!]$, that cover the entire Hamming hyper ball in $1$-norm; $\B_{\f0}(n,nA)$, where their centers are coordinated inside the $\B_{\f0}(n,nA)$, and the distance between the closest centers is $\floor{n\beta} + 1$; see Figure~\ref{Fig.Density}. As opposed to the standard ball packing observed in coding techniques \cite{CHSN13}, the balls here are neither necessarily entirely contained within the Hamming hyper ball, nor disjoint. That is, we only require that the centers of the balls are inside $\B_{\f0}(n,nA)$ and have a non-empty intersection with $\B_{\f0}(n,nA)$, which is rather a \emph{ball covering problem}.

Th ball packing $\mathscr{B}$ is called \emph{exhausted} if no point within the input space; $\B_{\f0}(n,nA)$, is remained as an \emph{isolated point}, that is, with the property that it does not belong to \emph{at least} one of the small Hamming hyper balls.

In particular, we use a covering argument that has a similar flavor as that observed in the GV bound \cite[Th.~5.1.7]{Van98}. Specifically, consider an exhausted packing arrangement of
\begin{align}
    \label{Eq.Union_balls}
    \bigcup_{i=1}^{M(n,R)} \B_{\fc_i}(n,\floor{n\beta}) \,,
\end{align}
balls with radius $r_0 = \floor{n\beta}$ embedded within the space $\B_{\f0}(n,nA)$. According to the greedy construction, the center $\fc_i$ of each small Hamming hyper ball, corresponds to a codeword. Since the volume of each hyper ball is equal to $\text{Vol}(\B_{\fc_1}(n,r_0))$, the centers of all balls lie inside the space $\B_{\f0}(n,nA)$, and the Hamming hyper balls \emph{overlap} with each other, the total number of balls is bounded from below by
\begin{align}
    \label{Ineq.M_Achiev}
    M & \geq \frac{\text{Vol}\left(\bigcup_{i=1}^{M}\B_{\fc_i}(n,r_0)\right)}{\text{Vol}(\B_{\fc_1}(n,r_0))}
    \nonumber\\&
    \stackrel{(a)}{\geq} \frac{\text{Vol} \left( \B_{\f0}(n,nA) \right)}{\text{Vol}(\B_{\fc_1}(n,r_0))}
    \nonumber\\&
    \stackrel{(b)}{\geq} \frac{\sum_{j=0}^{\floor{nA}} \binom{n}{j}}{\text{Vol}(\B_{\fc_1}(n,r_0))} \,,
\end{align}
where $(a)$ holds since the Hamming hyper balls may have in general \emph{intersection} and $(b)$ follows by \eqref{Eq.Hamming_Ball_Volume} with setting $q=2$ and since $\floor{nA} \leq nA$. Now, the bound in \eqref{Ineq.M_Achiev} can be further simplified as follows 
\begin{align}
    \label{Ineq.Log_M_Achiev_1}
    \log M & \geq \log \left( \frac{\sum_{j=0}^{\floor{nA}} \binom{n}{j}}{\text{Vol}(\B_{\fc_1}(n,r_0))} \right)
    \nonumber\\
    & \stackrel{(a)}{\geq} nH(A) + o\left( \log n \right) - nH(\beta)
    \,.
\end{align}
where $(a)$ exploits \eqref{Ineq.Bound_Volume_Hamming_ball_LB} for setting radius $r = \floor{n\varepsilon} = \floor{nA}$ and $q=2\,,$ and \eqref{Ineq.Bound_Volume_Hamming_ball_UB} with $r_0 = \floor{n\varepsilon} = \floor{n\beta} \,.$ Now, we obtain
\begin{align}
    \label{Ineq.Log_M_Achiev_2}
    \log M & \geq nH(A) + o\left( \log n \right) - nH \left( \beta \right)
    \,,
\end{align}
where the dominant term has an order of $n$. Therefore, in order to obtain finite value for the lower bound on the DKI rate, $R$, \eqref{Ineq.Log_M_Achiev_2} induces the scaling law of codebook size, $M$, to be $2^{nR}$. Hence, we obtain
\begin{align}
    R & \geq \frac{1}{n} \left[ nH(A) + o\left( \log n \right) - nH \left( \beta \right) \right] \,,
    \nonumber\\
    & = H(A) + \frac{o\left( \log n \right)}{n} - H \left( \beta \right) \,,
\end{align}
which tends to $H(A)$ as $n \to \infty$ and $\beta \to 0\,.$

Now, we proceed to the sub-case 2, i.e., where $A \geq \frac{1}{2}\,.$ In this case, instead of sticking to generation of codewords $\sim \text{Bernoulli}(A)$, we generate the codewords according to Bernoulli process with success probability of $\frac{1}{2}$, that is, $\fC \overset{i.i.d}{\sim} \text{Bernoulli}(\frac{1}{2})$. Observe that now the required Hamming weight constraint given in \eqref{Ineq.Hamming_Constraint} is met, since for $\mathbb{E}\left[ C_t \right] = \frac{1}{2}$ we have
\begin{align}
    \label{Ineq.Hamming_Weight_Altenative_Experiment}
    \frac{1}{n} \sum_{t=1}^n c_t \leq \frac{1}{2} \leq A \,.
\end{align}
Therefore, following similar line of arguments as provided for the sub-case 1, we obtain the following lower bound on the DKI rate, $R$,
\begin{align}
    R & \geq \frac{1}{n} \left[ nH\left(\frac{1}{2}\right) + o\left( \log n \right) - nH \left( \beta \right) \right] \,,
    \nonumber\\
    & = H\left(\frac{1}{2}\right) + \frac{o\left( \log n \right)}{n} - H \left( \beta \right) \,,
\end{align}
which tends to $H\left(\frac{1}{2}\right) = 1$ as $n \to \infty$ and $\beta \to 0\,.$
\end{proof}
$\rhd$ \textbf{Analysis For Case 2:}

\begin{lemma}[see {\cite[Claim~$1$]{J85}}]
\label{Lem.Exhaustion}
The entire Hamming cube $\mathbf{H}^n$ can be exhausted asymptotically as the codebook, that is, all the message sequences, i.e., the indices between $1$ and $2^m$ can be coded with binary sequences of length $n$, subject to the Hamming distance property, i.e.,
\begin{align}
    \label{Ineq.d_H_LB_2}
    d_{\rm H}(\fc_{i},\fc_{j}) \geq \floor{n\beta} + 1 \;.
\end{align}
for every $i,j \in [\![M]\!]$, where $i \neq j$ and with $\beta > 0$ being an arbitrary small positive.
\end{lemma}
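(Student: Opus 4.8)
The plan is to reuse, almost verbatim, the greedy Gilbert--Varshamov construction of Lemma~\ref{Lem.Exhaustion_Case_1}, but now applied to the \emph{entire} cube $\mathbf{H}^n$ rather than the restricted subspace $\B_{\f0}(n,nA)$: since Case~2 imposes no binding weight constraint, every point of $\{0,1\}^n$ is an admissible codeword. First I would initialize the codebook by choosing an arbitrary sequence $\fc_1 \in \mathbf{H}^n$, assigning it to message index $1$, and deleting every sequence within Hamming distance $\floor{n\beta}$ of $\fc_1$, i.e.\ the ball $\B_{\fc_1}(n,\floor{n\beta})$. I would then select any surviving point as $\fc_2$, delete $\B_{\fc_2}(n,\floor{n\beta})$, and iterate until no point of $\mathbf{H}^n$ remains uncovered. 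By construction, any two chosen centers are separated by strictly more than $\floor{n\beta}$, which immediately yields the required property $d_{\rm H}(\fc_i,\fc_j) \geq \floor{n\beta}+1$ for all $i \neq j$.

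Next I would lower bound the number $M$ of codewords produced by the exhaustion. Because the procedure halts only once the union of the deleted balls covers all of $\mathbf{H}^n$, and these balls may overlap, a covering argument of the same form as \eqref{Ineq.M_Achiev} gives
\begin{align}
    M \;\geq\; \frac{\text{Vol}\left( \mathbf{H}^n \right)}{\text{Vol}\left( \B_{\fc_1}(n,\floor{n\beta}) \right)} \;=\; \frac{2^n}{\sum_{i=0}^{\floor{n\beta}} \binom{n}{i}} \,,
\end{align}
where the denominator is the binary Hamming-ball volume from \eqref{Eq.Hamming_Ball_Volume} with $q=2$. Invoking the standard entropy estimate $\sum_{i=0}^{\floor{n\beta}} \binom{n}{i} \leq 2^{nH(\beta)}$, valid since $\beta \leq \tfrac{1}{2}$, I obtain $M \geq 2^{n\left( 1 - H(\beta) \right)}$, hence $\tfrac{1}{n}\log M \geq 1 - H(\beta)$, which tends to $1$ as $\beta \to 0$.

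Finally, to close the exhaustion claim I would observe that the $2^m$ message indices can each be assigned a distinct codeword as soon as $M \geq 2^m$, and the bound above secures this whenever $n\left( 1 - H(\beta) \right) \geq m$, i.e.\ for any codeword length $n \geq m/\left( 1 - H(\beta) \right)$; letting $\beta \to 0$, the admissible rate $R = m/n$ approaches $1$, in agreement with the $A \geq \tfrac{1}{2}$ branch of Theorem~\ref{Th.DKI-Capacity}. The one point demanding care is the direction of the counting inequality: since the small balls are neither disjoint nor wholly contained in any smaller region, I cannot divide volumes as in a \emph{packing} bound, and must instead argue that every point of $\mathbf{H}^n$ lies in \emph{at least} one deleted ball, so that $M\cdot\text{Vol}(\B_{\fc_1}(n,\floor{n\beta})) \geq 2^n$ by subadditivity. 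I expect this covering step, rather than the routine entropy estimate, to be the main obstacle.
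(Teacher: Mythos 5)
Your proposal is correct and follows essentially the same route as the paper's own proof: a greedy deletion of Hamming balls of radius $\floor{n\beta}$ until $\mathbf{H}^n$ is exhausted, followed by the covering (subadditivity) bound $M \cdot \text{Vol}\left(\B_{\fc_1}(n,\floor{n\beta})\right) \geq 2^n$ and the entropy estimate $\sum_{i=0}^{\floor{n\beta}}\binom{n}{i} \leq 2^{nH(\beta)}$, giving $M \geq 2^{n(1-H(\beta))}$ and rate tending to $1$ as $\beta \to 0$. The only cosmetic difference is that the paper generates candidate centers via a Bernoulli process (a holdover from the constrained Case 1), whereas you select arbitrary surviving points, which is equally valid here since no weight constraint is active; your explicit remark that the argument is a covering bound rather than a packing bound is precisely the point the paper also emphasizes.
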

\begin{proof}
Recall that the minimum Hamming distance of a code $\C$ is given by
\begin{align}
    d_{\,\text{min}} \triangleq \underset{(i,j) \in [\![M]\!] \times [\![M]\!] }{\min} \, d_{\rm H}(\fc_i,\fc_j) \;.
\end{align}
Next, we begin with the greedy procedure as follows: Let denote the first codeword determined by the Bernoulli distribution by $\fc_1$ and assign it to message with index $1$. Then, we remove all the sequences that have a Hamming distance of less or equal than $\floor{n\beta}$ from $\fc_1$. That is, we delete all the codewords that lies inside the Hamming ball with center $\fc_1$ and radius $r = \floor{n\beta}$. Then, we generate a second codeword by the Bernoulli distribution and repeat this procedure until all the sequences are exhausted.

Let $\mathscr{B}$ denote the obtained ball packing after the exhaustion of the entire input space $\mathbf{H}^n = \{0,1\}^n$, i.e., an arrangement of $M$ overlapping small hyper balls $\B_{\fc_i}(n,r_0)$, with radius $r_0 = \floor{n\beta}$ where $i\in[\![M]\!]$, that cover $n$-dimensional Hamming cube $\mathbf{H}^n = \{0,1\}^n$, where their centers are coordinated inside $\mathbf{H}^n$, and the distance between the closest centers is $\floor{n\beta} + 1$. As opposed to the standard ball packing observed in coding techniques \cite{CHSN13}, the balls here are neither necessarily entirely contained within the Hamming hyper ball, nor disjoint. That is, we only require that the centers of the balls are inside $\mathbf{H}^n$ and have a non-empty intersection with $\mathbf{H}^n$, which is rather a \emph{ball covering problem}. The ball packing $\mathscr{B}$ is called \emph{exhausted} if no point within the input space; $\mathbf{H}^n$, is remained as an \emph{isolated point}, that is, with the property that it does not belong to \emph{at least} one of the small Hamming hyper balls.

In particular, we use a covering argument that has a similar flavor as that observed in the GV bound \cite[Th.~5.1.7]{Van98}. Specifically, consider an exhausted packing arrangement of
\begin{align}
    \bigcup_{i=1}^{M(n,R)} \B_{\fc_i}(n,\floor{n\beta}) \,,
\end{align}
balls with radius $r_0 = \floor{n\beta}$ embedded within the space $\mathbf{H}^n$. According to the greedy construction, the center $\fc_i$ of each small Hamming hyper ball, corresponds to a codeword. Since the volume of each hyper ball is equal to $\text{Vol}(\B_{\fc_1}(n,r_0))$, the centers of all balls lie inside the space $\mathbf{H}^n$, and the Hamming hyper balls \emph{overlap} with each other, the total number of balls is bounded from below by
\begin{align}
    \label{Ineq.M_Achiev_2}
    M & \geq \frac{\text{Vol}\left(\bigcup_{i=1}^{M}\B_{\fc_i}(n,r_0)\right)}{\text{Vol}(\B_{\fc_1}(n,r_0))}
    \nonumber\\&
    \stackrel{(a)}{\geq} \frac{\text{Vol} \left( \mathbf{H}^n \right)}{\text{Vol}(\B_{\fc_1}(n,r_0))}
    \nonumber\\&
    \stackrel{(b)}{\geq} \frac{|\X|^n}{\text{Vol}(\B_{\fc_1}(n,r_0))} \,,
\end{align}
where $(a)$ holds since the Hamming hyper balls may have in general \emph{intersection} and $(b)$ follows since $\text{Vol} \left( \mathbf{H}^n \right) = \left| \X^n \right| = |\X|^n$. Now, the bound in \eqref{Ineq.M_Achiev_2} can be further simplified as follows 
\begin{align}
    \label{Ineq.Log_M_Achiev_2_1}
    \log M & \geq \log \left( \frac{|\X|^n}{\text{Vol}(\B_{\fc_1}(n,r_0))} \right)
    \nonumber\\
    & \stackrel{(a)}{\geq} n\log |\X| + o\left( \log n \right) - n H(\beta)
    \nonumber\\
    & \stackrel{(b)}{\geq} n + o\left( \log n \right) - n H(\beta)
    \,.
\end{align}
where $(a)$ exploits \eqref{Ineq.Bound_Volume_Hamming_ball_UB} with $\varepsilon = \beta\,.$ Now for $\beta > 0$ being an arbitrary small positive, we obtain
\begin{align}
    \label{Ineq.Log_M_Achiev_2_2}
    \log M & \geq n + o\left( \log n \right) - n H\left( \beta \right)
    \nonumber\\
    & = n \left( 1 - H \left( \beta \right) \right) + o\left( \log n \right)
    \,,
\end{align}
where the dominant term has an order of $n$. Therefore, in order to obtain finite value for the lower bound on the DKI rate, $R$, \eqref{Ineq.Log_M_Achiev_2} induces the scaling law of codebook size, $M$, to be $2^{nR}$. Hence, we obtain
\begin{align}
    R & \geq \frac{1}{n} \left[ n \left( 1 - H \left( \beta \right) \right) + o\left( \log n \right) \right] \,,
    \nonumber\\
    & = 1 - H \left( \beta \right) + \frac{o\left( \log n \right)}{n} \,,
\end{align}
which tends to $1$ as $n \to \infty$ and $\beta \to 0\,.$
\end{proof}
\subsubsection*{Encoding}
Given a message $i \in [\![M]\!]$, transmit $\fx = \fc_i$.
\subsubsection*{Decoding}
Let define $\delta_{\beta} \neq \frac{1}{2}$ as follows
\begin{align}
    \label{Ineq.delta_1}
    \delta_{\beta} = \left( 1 - \beta/2 \right)\, \varepsilon + \beta/4 \,,
\end{align}
which is referred to as the \emph{decoding threshold} with $\beta > 0$ being an arbitrary small. Observe that given $0 < \varepsilon < 1/2$ and \eqref{Ineq.delta_1}, we obtain the following bounds on the $\delta_{\beta}$
\begin{align}
    \label{Ineq.delta_2}
     \varepsilon < \delta_{\beta} < ( 1 - \beta )\, \varepsilon + \beta / 2 \,.
\end{align}
To identify whether message $j \in [\![M]\!]$ was sent, the decoder checks whether the channel output $\fy$ belongs to the decoding set $\sT_{\mathbbmss{K}} = \bigcup_{j \in \mathbbmss{K}} \mathbbmss{T}_j$, or not, with
\begin{align}
    \label{Def.Dec_DKI}
    \mathbbmss{T}_j = \Big\{ \fy \in \mathbf{H}^n \;;\, T(\fy,c_j) \leq \floor{n\delta_{\beta}} \Big\} \,,
\end{align}
where
\begin{align}
    \label{Eq.Decoding_Metric}
    T(\fy,c_j) = d_{\rm H}(\fy,\fc_j) \triangleq \sum_{t=1}^n \delta_{\beta}(y_t, c_{j,t}) \,,
\end{align}
is referred to as the \emph{decoding metric} evaluated for observation vector $\fy$ and the individual codeword $\fc_j$, with $\delta_{\beta}(.,.)$ being the \emph{Kronecker delta}. That is, given the channel output vector $\fy \in \mathbf{H}^n$, if there exist at least one $j \in \mathbbmss{K}$ such that $d_{\rm H}(\fy,\fc_j) \leq \floor{n\delta_{\beta}}$, then the decoder declares that the message $j$ was sent. And for the other case, i.e., where for each index $j \in \mathbbmss{K}$, the inequality $d_{\rm H}(\fy,\fc_j) > \floor{n\delta_{\beta}}$ holds, then the decoder decides that $j$ was not sent.
\begin{remark}
\textbf{Adopted Decoder}:
For the achievability proof, we adopt a decoder which upon observing an output sequence $\fy$, it declares that the message $j \in \mathbbmss{K}$ was sent if the output vector $\fy$ belongs to the following set
\begin{align}
    \bigcup_{j \in \mathbbmss{K}}  \Big\{ \fy \in \mathbf{H}^n \;;\, d_{\rm H}(\fy,\fc_j) \leq \floor{n\delta_{\beta}} \Big\}
    \;,\,
    \label{Eq.Decoding_Set_0}
\end{align}
where $\fc_j = [c_{j,1},\ldots,c_{j,n}]$ is the codeword associated with message $j$ and $\delta_{\beta}$ is a decoding threshold. We notice that the decoder in \eqref{Eq.Decoding_Set_0} combine the elements of set $\mathbbmss{K}$ through a fundamental union operator. Such a simple operator may feature a penalty with respect to the error exponents for the type I/II error probabilities or the obtained achievable rates. Therefore, we recall that in principle a more optimum decoder for the K-Identification scheme which guarantee vanishing type I/II error probabilities, might demand a more complicated algebraic operators between the realization of members for each specific set $\mathbbmss{K}$ and entails advanced dependencies on the elements of set $\mathbbmss{K}$.
\end{remark}
\subsubsection*{Error Analysis}
Fix $e_1,e_2 > 0$ and let $\zeta_0, \zeta_1 > 0$ be arbitrarily small constants. Further, let introduce the following conventions:
\begin{itemize}
    \item $Y_t(i)$ denote the channel output at time $t$ \emph{conditioned} that the sent codewords was $\fx = \fc_i$, that is, $Y_t(i) = \fc_{i,t} \oplus Z_t$
    \item The output vector is defined as the vector of symbols, i.e., $\fY(i) \triangleq (Y_1(i),\ldots,Y_n(i))$
\end{itemize}
Consider the type I error, i.e., the transmitter sends $\fc_i$, yet $\fy \notin \sT_{\mathbbmss{K}}$ for every $i \in \mathbbmss{K}$. The type I error probability is given by
\begin{align}
    \label{Eq.Type_I_Error}
    P_{e,1}(i) & = \Pr\left( \fY(i) \in \sT_{\mathbbmss{K}}^c \right)
    \nonumber\\
    & = \Pr \left( \fY(i) \in \left( \bigcup_{j \in \mathbbmss{K}} \mathbbmss{T}_j \right)^c \right)
    \nonumber\\
    & \stackrel{(a)}{=} \Pr \left( \fY(i) \in \bigcap_{j \in \mathbbmss{K}} \mathbbmss{T}_j^c \right)
    \nonumber\\
    & \stackrel{(b)}{\leq} \Pr \left( \fY(i) \in \mathbbmss{T}_i^c \right)
    \nonumber\\
    & = \Pr\left( T(\fY(i),c_i) > \floor{n\delta_{\beta}} \right)
    \,,
\end{align}
where $(a)$ holds by \emph{De Morgan}'s law for a finite number of union of set, i.e., $\left( \bigcup_{i \in \mathbbmss{K}} \mathbbmss{T}_i \right)^c = \bigcap_{i \in \mathbbmss{K}} \mathbbmss{T}_i^c$ and $(b)$ follows since $\bigcap_{j \in \mathbbmss{K}} \mathbbmss{T}_j^c \subset \mathbbmss{T}_i$. Now, observe that
\begin{align}
    \label{Eq.Type_I_Error_1}
    \Pr\left( T(\fY(i),c_i) > \floor{n\delta_{\beta}} \right) & \stackrel{(a)}{=} \Pr \left( d_{\rm H}(\fY(i),\fc_i) > \floor{n\delta_{\beta}} \right)
    \nonumber\\
    & \stackrel{(b)}{=} \sum_{l = \floor{n\delta_{\beta}}+1}^{n} \binom{n}{l} \varepsilon^l (1 - \varepsilon)^{n-l}
    \,.
\end{align}
where $(a)$ follows by \eqref{Eq.Decoding_Metric} and $(b)$ holds by \eqref{Eq.Hamm_Distan_Dist}. In order to bound \eqref{Eq.Type_I_Error_1}, we proceed to apply the bound provided in \eqref{Ineq.Binom_Tail_2} given in Lemma~\ref{Lem.Bound_Sum_Tail_Binomial_Distribution_Part_2}: Observe that
\begin{align}
    \label{Ineq.Condition_Lemma_7_LB}
    \frac{l}{n} & = \frac{\floor{n\delta_{\beta}}+1}{n}
    \nonumber\\
    & \stackrel{(a)}{>} \frac{n\delta_{\beta}}{n}
    \nonumber\\
    & = \delta_{\beta}
    \nonumber\\
    & \stackrel{(b)}{>} \varepsilon \,,
\end{align}
where $(a)$ follows since $x < \floor{x} + 1$ for real $x$ and $(b)$ holds by \eqref{Ineq.delta_2}. On the other hand,
\begin{align}
    \label{Ineq.Condition_Lemma_7_UB}
    \frac{l}{n} & = \frac{\floor{n\delta_{\beta}}+1}{n}
    \nonumber\\
    & \leq \frac{\max \floor{n\delta_{\beta}} + 1}{n}
    \nonumber\\
    & \stackrel{(a)}{<} \frac{\floor{n \max \left( \varepsilon + \beta \left( \frac{1}{2} - \varepsilon \right) \right)} + 1}{n}
    \nonumber\\
    & \stackrel{(b)}{<} \frac{\floor{n/2} + 1}{n}
    \nonumber\\
    & \stackrel{n \geq 3}{<} 1
    \,,
\end{align}
where $(a)$ follows by \eqref{Ineq.delta_2} and $(b)$ holds since $\varepsilon + \beta \left( \frac{1}{2} - \varepsilon \right)$ is upper bounded by the boundary value of $\varepsilon$, i.e., where $\varepsilon = \frac{1}{2}$. Observe that the last inequality in \eqref{Ineq.Condition_Lemma_7_UB} holds for sufficiently large $n$. Now, since the inequalities provided in \eqref{Ineq.Condition_Lemma_7_LB} and \eqref{Ineq.Condition_Lemma_7_UB} fulfill the conditions in Lemma~\ref{Lem.Bound_Sum_Tail_Binomial_Distribution_Part_2}, we employ Lemma~\ref{Lem.Bound_Sum_Tail_Binomial_Distribution_Part_2} to establish the following lower bound on \eqref{Eq.Type_I_Error_1} as follows
\begin{align}
    \label{Eq.Type_I_Error_2}
    & \Pr\left( T(\fY(i),c_i) > \floor{n\delta_{\beta}} \right)
    \nonumber\\
    & = \sum_{l = \floor{n\delta_{\beta}}+1}^{n} \binom{n}{l} \varepsilon^l (1 - \varepsilon)^{n-l}
    \nonumber\\
    & \leq \left[ \frac{\left( \floor{n\delta_{\beta}} + 1 \right)(1-\varepsilon)}{\left( \floor{n\delta_{\beta}} + 1 \right) (1-\varepsilon) - \left[ n - (\floor{n\delta_{\beta}}+1)\right]\varepsilon} \right] \cdot 2^{-n\left[T_\varepsilon\left(\frac{\floor{n\delta_{\beta}}+1}{n}\right) - H\left(\frac{\floor{n\delta_{\beta}}+1}{n}\right)\right]} \,.
\end{align}
Observe that the denominator in \eqref{Eq.Type_I_Error_2} is always a strict positive term, since assuming we arrive to a trivial inequality as follows
\begin{align}
    \label{Ineq.Condition_Lemma_7_Verification}
    \left( \floor{n\delta_{\beta}} + 1 \right) (1 - \varepsilon) & > \left[ n - (\floor{n\delta_{\beta}}+1) \right]\varepsilon \Longleftrightarrow \\
    \floor{n\delta_{\beta}} + 1 - \varepsilon \floor{n\delta_{\beta}} - \varepsilon & > n \varepsilon - \varepsilon\floor{n\delta_{\beta}} - \varepsilon \Longleftrightarrow \\
    \floor{n\delta_{\beta}} + 1 & > n\varepsilon \Longleftrightarrow \\
    \frac{\floor{n\delta_{\beta}} + 1}{n} & > \varepsilon \,,
\end{align}
which is already verified in \eqref{Ineq.Condition_Lemma_7_LB}. Now, we proceed to find a simplified upper bound on the left hand side coefficient in the bracket given in \eqref{Eq.Type_I_Error_2} as follow
\begin{align}
    \label{Ineq.Coefficient_Simplified}
    & \frac{\left( \floor{n\delta_{\beta}} + 1 \right)(1-\varepsilon)}{\left( \floor{n\delta_{\beta}} + 1 \right) (1-\varepsilon) - \left[ n - (\floor{n\delta_{\beta}}+1)\right]\varepsilon}
    \nonumber\\
    & \stackrel{(a)}{=} \frac{\left( n\delta_{\beta} + 1 \right)(1-\varepsilon)}{\left( \floor{n\delta_{\beta}} + 1 \right) - \varepsilon \left( \floor{n\delta_{\beta}} + 1 \right) - n\varepsilon + \varepsilon \left( \floor{n\delta_{\beta}} + 1 \right)}
    \nonumber\\
    & \leq \frac{\left( n\delta_{\beta} + 1 \right)(1-\varepsilon)}{\left( \floor{n\delta_{\beta}} + 1 \right) - n\varepsilon}
    \nonumber\\
    & \stackrel{(b)}{\leq} \frac{\left( n\delta_{\beta} + 1 \right)(1-\varepsilon)}{n\delta_{\beta} - n\varepsilon} \,,
\end{align}
where $(a)$ holds by exploiting $x \leq \floor{x}$ for real $x$ and simplifying the denominator by distributing $\varepsilon$ over the bracket, and $(b)$ follows since
\begin{align}
    n\delta_{\beta} & < \floor{n\delta_{\beta}}+1 \Longleftrightarrow \\
    n\delta_{\beta} - n\varepsilon & < \floor{n\delta_{\beta}}+1 -n\varepsilon \Longleftrightarrow \\
    \frac{1}{n\delta_{\beta} - n\varepsilon} & > \frac{1}{\floor{n\delta_{\beta}}+1 - n\varepsilon} \,.
\end{align}
where the first inequality follows since $x < \floor{x} + 1$ for real $x$. Thereby, employing \eqref{Ineq.Coefficient_Simplified} unto \eqref{Eq.Type_I_Error_2}, we obtain
\begin{align}
     \label{Eq.Type_I_Error_3}
    \Pr\left( \left| T(\fY(i),c_i) \right| > \floor{n\delta_{\beta}} \right) & = \sum_{l = \floor{n\delta_{\beta}}+1}^{n} \binom{n}{l} \varepsilon^l (1 - \varepsilon)^{n-l}
    \nonumber\\
    & \leq \frac{\left( n\delta_{\beta} + 1 \right)(1-\varepsilon)}{n\delta_{\beta} - n\varepsilon} \cdot 2^{-n\left[T_\varepsilon\left(\frac{\floor{n\delta_{\beta}}+1}{n}\right) - H\left(\frac{\floor{n\delta_{\beta}}+1}{n}\right)\right]}
    \nonumber\\
    & = \frac{\left( \delta_{\beta} + \frac{1}{n} \right) \left( 1-\varepsilon \right)}{\delta_{\beta} - \varepsilon} \cdot 2^{-n\left[T_\varepsilon\left(\frac{\floor{n\delta_{\beta}}+1}{n}\right) - H\left(\frac{\floor{n\delta_{\beta}}+1}{n}\right)\right]}
    \nonumber\\
    & \triangleq \zeta_{1,n} \,.
\end{align}
Observe that the exponent of exponential term is always \emph{strictly} positive, since for $\varepsilon \in (0,\frac{1}{2})$, the arguments of $T_{\varepsilon}(.)$ and $H(.)$ are strictly less than $\frac{1}{2}$. That is, we have the following
\begin{align}
    T_\varepsilon\left(\frac{\floor{n\delta_{\beta}}+1}{n}\right) > H\left( \frac{\floor{n\delta_{\beta}}+1}{n} \right) \,,
\end{align}
The argument is as follows
\begin{align}
    \frac{l}{n} & = \frac{\floor{n\delta_{\beta}}+1}{n}
    \nonumber\\
    & \leq \frac{\max \floor{n\delta_{\beta}} + 1}{n}
    \nonumber\\
    & \stackrel{(a)}{<} \frac{\floor{n \max \left( \varepsilon + \beta \left( \frac{1}{2} - \varepsilon \right) \right)} + 1}{n}
    \nonumber\\
    & \stackrel{(b)}{<} \frac{\floor{n/2} + 1}{n}
    \nonumber\\
    & \stackrel{(c)}{\leq} \frac{n/2 + 1}{n}
    \nonumber\\
    & \stackrel{(c)}{\leq} \frac{n \left( \frac{1}{2} + 1/n \right)}{n}
    \,,
\end{align}
which is strictly less than $\frac{1}{2}$ in the asymptotic, i.e., as $n \to \infty$, where $(a)$ and $(b)$ follows by the same arguments given for \eqref{Ineq.Condition_Lemma_7_UB}, and $(c)$ follows since $\floor{x} \leq x$ for real $x$.

Therefore, the difference for the evaluation of $T_{\varepsilon}(.)$ and $H(.)$ is always a \emph{strict} positive value; see Figure~\ref{Fig.BSC_Channel_TR}. Hence, $P_{e,1}(i) \leq e_1 \;,\, \forall i \in \mathbbmss{T}_\mathbbmss{K}$ holds for sufficiently large $n$ and arbitrarily small $e_1 > 0$. Thereby, the type I error probability satisfies $P_{e,1}(i) \leq \zeta_{1,n} \leq e_1$. This complete the analysis for the type I error probability.

Next, we address type II errors, i.e., when $\fY(i) \in \mathbbmss{T}_\mathbbmss{K}$ while the transmitter sent $\fc_i$ with $i \notin \mathbbmss{K}$. Then, for each possible $\binom{M}{K}$ cases of $\mathbbmss{K}$, where $i \notin \mathbbmss{K}$, the type II error probability is given by
\begin{align}
    \label{Eq.Type_II}
    P_{e,2} \left( i,\mathbbmss{K} \right) & = \Pr \left( \fY(i) \in \mathbbmss{T}_\mathbbmss{K} \right)
    \nonumber\\
    & = \Pr \left( \fY(i) \in \bigcup_{j \in \mathbbmss{K}} \mathbbmss{T}_j \right)
    \nonumber\\
    & \stackrel{(a)}{=} \Pr\left( \bigcup_{j \in \mathbbmss{K}} \Big\{ T(\fY(i),c_j) \leq \floor{n\delta_{\beta}} \Big\} \right)
    \nonumber\\
    & \stackrel{(b)}{=} \Pr\left( \bigcup_{j \in \mathbbmss{K}} \Big\{ d_{\rm H}(\fY(i),\fc_j) \leq \floor{n\delta_{\beta}} \Big\} \right)
    \nonumber\\
    & \stackrel{(c)}{\leq}
    \sum_{j \in \mathbbmss{K}} \Pr\left( d_{\rm H}(\fY(i),\fc_j) \leq \floor{n\delta_{\beta}} \right)
    \nonumber\\
    & \leq K \cdot \Pr\left( d_{\rm H}(\fY(i),\fc_j) \leq \floor{n\delta_{\beta}} \right)
\end{align}
where $(a)$ follows by \eqref{Def.Dec_DKI}, $(b)$ holds by \eqref{Eq.Decoding_Metric} and $(c)$ follows by the \emph{union bound}, i.e., the probability of union of events is upper bounded by the sum of probability of the individual events. Let define the following events
\begin{align}
    \label{Eq.Event_1}
    \F_{\delta_{\beta}}(i) & \triangleq \left\{ \fY \in \mathbf{H}^n \;;\, d_{\rm H}(\fY(i),\fc_i) \leq \floor{n\delta_{\beta}} \right\} \,,
    \\
    \F_{\delta_{\beta}}(i,j) & \triangleq \left\{ \fY \in \mathbf{H}^n \;;\, d_{\rm H}(\fY(i),\fc_j) \leq \floor{n\delta_{\beta}} \right\} \,.
    \label{Eq.Event_2}
\end{align}
Next, employing the \emph{law of total probability} with respect to the event $\big\{ d_{\rm H}(\fY(i),\fc_i) \leq \floor{n\delta_{\beta}} \big\}$, we establish an upper bound on $\Pr\left( d_{\rm H}(\fY(i),\fc_j) \leq \floor{n\delta_{\beta}} \right)$ given in \eqref{Eq.Type_II} as follows
\begin{align}
    \label{Ineq.Type_II_Decompose}
    \Pr\left( d_{\rm H}(\fY(i),\fc_j) \leq \floor{n\delta_{\beta}} \right) & \stackrel{(a)}{=} \Pr\left( \F_{\delta_{\beta}}(i,j) \cap \F_{\delta_{\beta}}(i) \right) + \Pr\left( \F_{\delta_{\beta}}(i,j) \cap \F_{\delta_{\beta}}^c(i) \right)
    \nonumber\\
    & \stackrel{(b)}{\leq} \Pr\left( \F_{\delta_{\beta}}(i,j) \cap \F_{\delta_{\beta}}(i) \right) + \Pr\left( \F_{\delta_{\beta}}^c(i) \right)
    \nonumber\\
    & \stackrel{(c)}{=} \Pr\left( \F_{\delta_{\beta}}(i,j) \cap \F_i(\delta_{\beta}) \right) + \Pr\left( d_{\rm H}(\fY(i),\fc_i) > \floor{n\delta_{\beta}} \right)
    \nonumber\\
    & \stackrel{(d)}{\leq} \Pr\left( \F_{\delta_{\beta}}(i,j) \cap \F_{\delta_{\beta}}(i) \right) + \zeta_{1,n} \,,
\end{align}
where $(a)$ holds by the \emph{law of total probability}, $(b)$ follows since $\F_i^c(\delta_{\beta}) \supset \F_{\delta_{\beta}}(i,j) \cap \F_i^c(\delta_{\beta})$, $(c)$ holds by \eqref{Eq.Event_1}, and $(d)$ exploits \eqref{Eq.Type_I_Error_3}.

Now we focus on the event $\F_{\delta_{\beta}}(i,j) \cap \F_{\delta_{\beta}}(i)$. Let
\begin{align}
    d & \triangleq d_{\rm H}(\fc_{i},\fc_{j})
    \nonumber\\
    & \stackrel{(a)}{\geq} \floor{n\beta} + 1 \,.
\end{align}
where $(a)$ follows by the assumption made in the code construction regarding the minimum Hamming distance; see Lemma~\ref{Lem.Exhaustion_Case_1} and \eqref{Ineq.d_H_LB_2}. Now, without loss of generality, we may assume that the two sequence $\fc_i$ and $\fc_j$ differ in the first $d$ symbols, i.e.,
\begin{align}
    \fc_i & = \left( c_{i_1},c_{i_2},\ldots,c_{i_d},c_{i_{d+1}},\ldots,c_{i_n} \right) 
    \nonumber\\
    \fc_j & = \left( c_{j_1},c_{j_2},\ldots,c_{j_d},c_{j_{d+1}},\ldots,c_{j_n} \right)
    \nonumber\\
    \fy & = \left( y_1,y_2,\ldots,y_d,y_{d+1},\ldots,y_n \right) \,,
 \end{align}
where $\fy$ is the realization of vector $\fY(i)$. Therefore, the $n-d$ last symbols (bits) of $\fc_i$ and $\fc_j$ are identical. Observe that the event $\left\{ d_{\rm H}(\fY(i),\fc_i) \leq \floor{n\delta_{\beta}} \right\}$ implies that the received vector $\fy$ and $\fc_i$ differ in $p$ bits, where $p \leq \floor{n\delta_{\beta}}$, i.e.,
\begin{align}
    d_{\rm H}(\fy,\fc_i) = p \leq \floor{n\delta_{\beta}} \,.
\end{align}
Now, we assume that $p_1$ bits out of the $p$ bits happens in the first $d$ bits, i.e., 
\begin{align}
    d_{\rm H}(\fy|_1^d,\fc_i|_1^d) = p_1 \,,
\end{align}
where
\begin{align}
    \fc_i|_1^d & \triangleq \left( c_{i_1},c_{i_2},\ldots,c_{i_d} \right) 
    \nonumber\\
    \fy|_1^d & \triangleq \left( y_1,y_2,\ldots,y_d \right)
    \,,
\end{align}
and $p_2$ bits with $p_2 = p - p_1$ happens in last $n-d$ bits, i.e., 
\begin{align}
    d_{\rm H}(\fy|_{d+1}^n,\fc_i|_{d+1}^n) = p_2 \,,
\end{align}
where
\begin{align}
    \fc_i|_{d+1}^n & \triangleq \left( c_{i_{d+1}},\ldots,c_{i_n} \right)
    \nonumber\\
    \fy|_{d+1}^n & \triangleq \left( y_{d+1},\ldots,y_n \right)
    \,,
\end{align}
Observe that since the symbols of sequences are bits, i.e., either 0 or 1, therefore, $d = d_{\rm H}(\fc_i,\fc_j)$ implies that the two sequences $\fc_i$ and $\fc_j$ are complementary for the first $d$ bits. Now, we infer that if the two sequences $\fy|_1^d$ and $\fc_i|_1^d$ differ in $p_1$, then $\fy|_1^d$ and $\fc_i|_1^d$ are identical in those $p_1$ bits. Hence,
\begin{align}
    d_{\rm H}(\fy|_1^d,\fc_j|_1^d) = d - p_1 \,,
\end{align}
Now, if we collect all the positions for which $\fy|_1^n$ and $\fc_j|_1^n$ differ, we obtain
\begin{align}
    \label{Eq.Hamming_Dist_y_c_j}
    d_{\rm H}(\fy,\fc_j) & = d_{\rm H}(\fy|_1^n,\fc_j|_1^n)
    \nonumber\\
    & = d_{\rm H}(\fy|_1^d,\fc_j|_1^d) + d_{\rm H}(\fy|_{d+1}^n,\fc_j|_{d+1}^n)
    \nonumber\\
    & = d - p_1 + p_2 \,.
\end{align}
Observe that since we restrict ourselves to the event
\begin{align}
    \F_{\delta_{\beta}}(i,j) \cap \F_i^c(\delta_{\beta}) \triangleq \left\{ d_{\rm H}(\fY(i),\fc_j) \leq \floor{n\delta_{\beta}} \right\} \cap \left\{ d_{\rm H}(\fY(i),\fc_i) \leq \floor{n\delta_{\beta}} \right\} \,,
\end{align}
we deduce that $d_{\rm H}(\fy|_1^n,\fc_j|_1^n)$, therefore by \eqref{Eq.Hamming_Dist_y_c_j}, we obtain
\begin{align}
    \label{Ineq.p_2_UB_1}
    d - p_1 + p_2 & \leq \floor{n\delta_{\beta}} \Rightarrow
    \nonumber\\
    p_2 & \leq \floor{n\delta_{\beta}} - d + p_1 \,.
\end{align}
On the other hand, since $d_{\rm H}(\fy,\fc_j) \leq \floor{n\delta_{\beta}}$, we obtain
\begin{align}
     \label{Ineq.p_2_UB_2}
    p & \leq \floor{n\delta_{\beta}} \Rightarrow \\
    p_1 + p_2 & \leq \floor{n\delta_{\beta}} \Rightarrow \\
    p_2 & \leq \floor{n\delta_{\beta}} - p_1 \,.
\end{align}
Now in order to calculate $\Pr\left( d_{\rm H}(\fY(i),\fc_j) \leq \floor{n\delta_{\beta}} \right)$ in \eqref{Eq.Type_II}, we first fix $p_1$ and then sum up over all possible cases for the $p_2$, then we would have a second sum which runs for values of $p_1$ from 0 to $d$. Observe that the $p_2$ has two upper bounds given in \eqref{Ineq.p_2_UB_1} and \eqref{Ineq.p_2_UB_2}, therefore, in the calculation, we restrict ourselves to the minimum of those two upper bounds. Let define $p_2^{\rm UB} \triangleq \min\left\{\floor{n\delta_{\beta}}-p_1,\floor{n\delta_{\beta}}-d+p_1\right\}$. Thereby,
\begin{align}
    \label{Ineq.Type_II_1}
    & \Pr\left( \F_{\delta_{\beta}}(i,j) \cap \F_{\delta_{\beta}}(i) \right) 
    \nonumber\\
    & \stackrel{(a)}{\leq} \sum_{p_1=0}^d \binom{d}{p_1} \cdot \sum_{p_2=0}^{p_2^{\rm UB}} \binom{n-d}{p_2} \varepsilon^{p_1+p_2} (1-\varepsilon)^{n-\left(p_1+p_2\right)+d-d}
    \nonumber\\
    & \stackrel{(b)}{=} \left[ \sum_{p_1=0}^d \binom{d}{p_1} \varepsilon^{p_1} (1-\varepsilon)^{d-p_1} \right] \cdot \left[ \sum_{p_2=0}^{p_2^{\rm UB}} \binom{n-d}{p_2} \varepsilon^{p_2} (1-\varepsilon)^{n-d-p_2} \right] \,,
\end{align}
where $(a)$ holds since $p = p_1 + p_2$, and $(b)$ follows since every expression that is independent of the sum's variable can be shifted left behind the inner sum. In $(b)$, we have added $0 = d - d\,,$ to obtain the correct form for the two binomial distribution expressions. Now, observe that the first sum is the Binomial cumulative distribution function at point $x=d$ and can be upper bounded by 1, i.e.,
\begin{align}
    \label{Ineq.Type_II_First_Sum}
    \sum_{p_1=0}^d \binom{d}{p_1} \varepsilon^{p_1} (1-\varepsilon)^{d-p_1} & = \Pr \left( p_1 \leq d \right)
    \nonumber\\
    & = B_X(x)|_{x=d}
    \nonumber\\
    & = B_X(d)
    \nonumber\\
    & = 1 \,.
\end{align}
Now, let focus on the second sum in \eqref{Ineq.Type_II_1} for which we establish an upper bound by maximizing $p_2^{\rm UB}$ through setting $p_1= \floor{\frac{d}{2}}$, i.e.,
\begin{align}
    \underset{p_1}{\argmax} \; p_2^{\rm UB} = \floor{\frac{d}{2}}
\end{align}
\begin{align}
    \max p_2^{\rm UB} & \triangleq \max \min \left\{ \floor{n\delta_{\beta}}-p_1,\floor{n\delta_{\beta}}-d+p_1\right\}
    \nonumber\\
    & = \min \left\{ \floor{n\delta_{\beta}}-p_1,\floor{n\delta_{\beta}}-d+p_1\right\}\Big|_{p_1=\floor{\frac{d}{2}}}
    \nonumber\\
    & = \left\{ \floor{n\delta_{\beta}}-\floor{\frac{d}{2}},\floor{n\delta_{\beta}}-d+\floor{\frac{d}{2}}\right\}
    \nonumber\\
    & = \left\{ \floor{n\delta_{\beta}}-\floor{\frac{d}{2}},\floor{n\delta_{\beta}}- \left( d - \floor{\frac{d}{2}} \right) \right\}
    \nonumber\\
    & = \floor{n\delta_{\beta}}-d+\floor{\frac{d}{2}} \,,
\end{align}
where the last equality holds since by $\floor{\frac{d}{2}} \leq \frac{d}{2}$ for real $\frac{d}{2}$, we obtain $\frac{d}{2}\leq d-\floor{\frac{d}{2}}$.

Now, we exploit the inequality \eqref{Ineq.Binom_Tail_3} given in Lemma~\ref{Lem.Binom_Tail_3} to obtain an upper bound for the second sum in \eqref{Ineq.Type_II_1} as follows: First we check whether the required condition in Lemma~\ref{Lem.Binom_Tail_3} are satisfied or not. Namely, we set $k = \floor{n\delta_{\beta}}-d+\floor{\frac{d}{2}}$ and $n = n - d$. Now we calculate their ratio as follow
\begin{align}
    \label{Ineq.Condition_Lemma_8_1}
    \frac{k}{n-d} & = \frac{\floor{n\delta_{\beta}} - d + \floor{\frac{d}{2}}}{n - d}
    \nonumber\\
    & \stackrel{(a)}{\leq} \frac{n\delta_{\beta} - d + \frac{d}{2}}{n - d}
    \nonumber\\
    & = \frac{n\delta_{\beta} - \frac{d}{2}}{n - d}
    \nonumber\\
    & = \frac{\delta_{\beta} - \frac{d}{2n}}{1 - \frac{d}{n}}
    \nonumber\\
    & \stackrel{(b)}{<} \frac{\delta_{\beta} - \frac{\beta}{2}}{1 - \beta}
    \nonumber\\
    & \triangleq \tau
    \,,
\end{align}
where $(a)$ holds since $\floor{x} \leq x$ for real $x$ and $(b)$ holds by the following argument: We assume that $(b)$ holds and assuming that $\delta_{\beta} \neq \frac{1}{2}$, we resulted in a trivial inequality, namely, $d > n\beta$, i.e,
\begin{align}
    \frac{\delta_{\beta} - \frac{d}{2n}}{1 - \frac{d}{n}} & < \frac{\delta_{\beta} - \frac{\beta}{2}}{1 - \beta} \Rightarrow \\
    \left( \delta_{\beta} - \frac{d}{2n} \right) \left( 1 - \beta \right) & < \left( \delta_{\beta} - \frac{\beta}{2} \right) \left( 1 - \frac{d}{n} \right) \Rightarrow \\
    \delta_{\beta} - \beta\delta_{\beta} - \frac{d}{2n} + \frac{\beta d}{2n} & < \delta_{\beta} - \frac{\delta_{\beta} d}{n} - \frac{\beta}{2} + \frac{\beta d}{2n} \Rightarrow \\
    \beta \left( \frac{1}{2} - \delta_{\beta} \right) & < \frac{d}{2n} - \frac{\delta_{\beta} d}{n} \Rightarrow \\
    \beta \left( \frac{1}{2} - \delta_{\beta} \right) & < \frac{d}{n} \left( \frac{1}{2} - \delta_{\beta} \right) \Rightarrow \\
    n\beta & < d
    \,,
\end{align}
which can be deduced from the assumption made in the code construction given in \eqref{Ineq.d_H_LB_2} and \eqref{Ineq.d_H_LB_2}, i.e.,
\begin{align}
    d_{\rm H}(\fc_{i},\fc_{j}) & \geq \floor{n\beta} + 1
    \nonumber\\
    & \stackrel{(a)}{>} n\beta - 1 + 1
    \nonumber\\
    & = n\beta
    \;.
\end{align}
where $(a)$ holds since $\floor{n\beta} > n\beta - 1$ for real $n\beta$. Now, we exploit \eqref{Ineq.delta_2}, to show that \eqref{Ineq.Condition_Lemma_8_1} is upper bounded by $\varepsilon$ as follows
\begin{align}
    \label{Ineq.Condition_Lemma_8_2}
    \delta_{\beta} & < \varepsilon + \beta \left( \frac{1}{2} - \varepsilon \right) \Rightarrow
    \nonumber\\
    \delta_{\beta} & < \varepsilon + \frac{\beta}{2} - \beta \varepsilon \Rightarrow
    \nonumber\\
    \delta_{\beta} - \frac{\beta}{2} & < \varepsilon (1 - \beta) \Rightarrow
    \nonumber\\
    \frac{\delta_{\beta} - \frac{\beta}{2}}{1 - \beta} & < \varepsilon \,.
\end{align}
Thereby, we apply safely the Lemma~\ref{Lem.Binom_Tail_3} with parameters $j=p_2$, $k = p_2^{\rm UB} \triangleq \floor{n\delta_{\beta}}-d+\floor{\frac{d}{2}}$ and $n = n - d$, and obtain
\begin{align}
    \label{Ineq.Type_II_Error_Sum_2}
    \sum_{p_2=0}^{\floor{n\delta_{\beta}}-d+\floor{\frac{d}{2}}} \binom{n-d}{p_2} \varepsilon^{p_2} (1-\varepsilon)^{n-d-p_2} & \leq \frac{\varepsilon((n-d)-k)}{\varepsilon (n-d)-k} \cdot
    2^{n\left[ H(\frac{k}{n-d})-T_{\varepsilon}(\frac{k}{n-d}) \right]}
    \nonumber\\
    & \leq \frac{\varepsilon \left(1 - \frac{k}{n-d} \right)}{\varepsilon - \frac{k}{n-d}} \cdot
    2^{n\left[ H(\frac{k}{n-d}) - T_{\varepsilon}(\frac{k}{n-d}) \right]}
    \,.
\end{align}
Let focus on the coefficient in \eqref{Ineq.Type_II_Error_Sum_2}. In the following, assuming an upper bound for it, we arrive to a trivial inequality, therefore, the upper bound is valid.
\begin{align}
    \label{Ineq.Type_II_Error_Sum_2_Coefficient}
    \frac{\varepsilon \left(1 - \frac{k}{n-d} \right)}{\varepsilon - \frac{k}{n-d}} & < \frac{\varepsilon(1 - \tau)}{\varepsilon - \tau} \,.
\end{align}
Observe that \eqref{Ineq.Type_II_Error_Sum_2_Coefficient} yield the following chain of expressions:
\begin{align}
    \frac{1 - \frac{k}{n-d}}{\varepsilon - \frac{k}{n-d}} & < \frac{1 - \tau}{\varepsilon - \tau} \Rightarrow \\
    \varepsilon - \tau - \frac{k\varepsilon}{n-d} + \frac{k\tau}{n-d} & < \varepsilon - \frac{k}{n-d} - \varepsilon\tau + \frac{k\tau}{n-d}
    \Rightarrow \\
    - \tau - \frac{k\varepsilon}{n-d} & < - \frac{k}{n-d} - \varepsilon\tau \Rightarrow \\
    \frac{k}{n-d} \left( 1 - \varepsilon \right) & < \tau \left( 1 - \varepsilon \right) \Rightarrow \\
    \frac{k}{n-d} & < \varepsilon \,,
\end{align}
which is trivial since it is already proved in \ref{Ineq.Condition_Lemma_8_1}. Now, observe that for $0 < \frac{k}{n-d} < \tau < \varepsilon$, the following holds
\begin{align}
    \label{Ineq.UB_Exponent}
    H\left(\frac{k}{n-d}\right) - T_{\varepsilon}\left(\frac{k}{n-d}\right) < H(\tau) - T_{\varepsilon}(\tau) \,,
\end{align}
see Figure~\ref{Fig.Err_Exp}. Therefore, since $\tau$ always yield a smaller exponent, we obtain an upper bound on the sum in \eqref{Ineq.Type_II_Error_Sum_2} as follows
\begin{align}
    \sum_{p_2=0}^{\floor{n\delta_{\beta}}-d+\floor{\frac{d}{2}}} \binom{n-d}{p_2} \varepsilon^{p_2} (1-\varepsilon)^{n-d-p_2} & \leq \frac{\varepsilon((n-d)-k)}{\varepsilon (n-d)-k} \cdot
    2^{n\left[ H(\frac{k}{n-d})-T_{\varepsilon}(\frac{k}{n-d}) \right]}
    \nonumber\\&
    \stackrel{(a)}{<} \frac{\varepsilon(1 - \tau)}{\varepsilon - \tau} \cdot 2^{n\left[ H(\frac{k}{n-d}) - T_{\varepsilon}(\frac{k}{n-d}) \right]}
    \nonumber\\
    & \stackrel{(b)}{<} \frac{\varepsilon \left(1 - \frac{k}{n-d} \right)}{\varepsilon - \frac{k}{n-d}} \cdot
    2^{n\left[ H(\tau) - T_{\varepsilon}(\tau) \right]}
    \nonumber\\
    & \triangleq \zeta_{0,n} \,,
\end{align}
where $(a)$ exploits \eqref{Ineq.Type_II_Error_Sum_2_Coefficient} and $(b)$ follows by \eqref{Ineq.UB_Exponent}. Thereby, recalling \eqref{Ineq.Type_II_1} and employing \eqref{Ineq.Type_II_First_Sum}, we obtain
\begin{align}
    \label{Ineq.Type_II_Error_Sum_2_2}
    \Pr\left( \F_{\delta_{\beta}}(i,j) \cap \F_{\delta_{\beta}}(i) \right) & \leq 1 \cdot \sum_{j=0}^k \binom{n-d}{j} \varepsilon^j(1-\varepsilon)^{n-d-j}
    \nonumber\\
    & < \frac{\varepsilon(1 - \tau)}{\varepsilon - \tau} \cdot 2^{n\left[ H(\tau) - T_{\varepsilon}(\tau) \right]}
    \nonumber\\
    & \triangleq \zeta_{0,n} \,.
\end{align}
Hence, recalling \eqref{Eq.Type_II} and \eqref{Ineq.Type_II_Decompose} we obtain
\begin{align}
    \label{Ineq.Type_II_Final}
    & P_{e,2} \left( i,\mathbbmss{K} \right)
    \nonumber\\
    & \leq K \cdot \left[ \Pr\left( d_{\rm H}(\fY(i),\fc_j) \leq \floor{n\delta_{\beta}} \right) \right]
    \nonumber\\
    & \leq K \cdot \left[ \Pr\left( \F_{\delta_{\beta}}(i,j) \cap \F_{\delta_{\beta}}(i) \right) + \zeta_{1,n} \right]
    \nonumber\\
    & = K \cdot \left[
    \frac{\varepsilon(1 - \tau)}{\varepsilon - \tau} \cdot 2^{n\left[ H(\tau) - T_{\varepsilon}(\tau) \right]} + \frac{\left( \delta_{\beta} + \frac{1}{n} \right) \left( 1-\varepsilon \right)}{\delta_{\beta} - \varepsilon} \cdot 2^{-n\left[T_\varepsilon\left(\frac{\floor{n\delta_{\beta}}+1}{n}\right) - H\left(\frac{\floor{n\delta_{\beta}}+1}{n}\right)\right]} \right]
    \nonumber\\
    & \stackrel{(a)}{=} 2^{\kappa n} \cdot \left[ \frac{\varepsilon(1 - \tau)}{\varepsilon - \tau} \cdot 2^{-n \left[ T_{\varepsilon}(\tau) - H(\tau) \right]} + \frac{\left( \delta_{\beta} + \frac{1}{n} \right) \left( 1-\varepsilon \right)}{\delta_{\beta} - \varepsilon} \cdot 2^{-n\left[T_\varepsilon\left(\frac{\floor{n\delta_{\beta}}+1}{n}\right) - H\left(\frac{\floor{n\delta_{\beta}}+1}{n}\right)\right]} \right]
    \nonumber\\
    & = \frac{\varepsilon(1 - \tau)}{\varepsilon - \tau} \cdot 2^{-n \left[ T_{\varepsilon}(\tau) - H(\tau) - \kappa \right]} + \frac{\left( \delta_{\beta} + \frac{1}{n} \right) \left( 1-\varepsilon \right)}{\delta_{\beta} - \varepsilon} \cdot 2^{-n \left[ T_\varepsilon\left( \frac{\floor{n\delta_{\beta}}+1}{n}\right) - H\left( \frac{\floor{n\delta_{\beta}}+1}{n} \right) - \kappa \right]} \,,
\end{align}
which implies that both the exponential factors given in \eqref{Ineq.Type_II_Final} should yields strict positive exponents, that is, we obtain two separate upper bounds on the $\kappa$ as follows
\begin{align}
    \kappa < T_{\varepsilon}(\tau) - H(\tau) \qquad \text{  and  } \qquad \kappa < T_\varepsilon\left( \frac{\floor{n\delta_{\beta}}+1}{n}\right) - H\left( \frac{\floor{n\delta_{\beta}}+1}{n} \right) \,,
\end{align}
Therefore,
\begin{align}
    \label{Ineq.kappa_UB}
    \kappa < \min \left\{ T_{\varepsilon}(\tau) - H(\tau) ,\, T_\varepsilon\left( \frac{\floor{n\delta_{\beta}}+1}{n}\right) - H\left( \frac{\floor{n\delta_{\beta}}+1}{n} \right) \right\} \,,
\end{align}

Now we focus on the first argument in \eqref{Ineq.kappa_UB}. In order to find an asymptotic value for the argument $\tau$, we calculate the following
\begin{align}
    \label{Ineq.tau}
    \lim_{\beta \to 0} \tau & \stackrel{(a)}{=} \lim_{\beta \to 0} \frac{\delta_{\beta} - \frac{\beta}{2}}{1 - \beta}
    \nonumber\\&
    = \delta_{\beta}
    \,,
\end{align}
where $(a)$ holds by \eqref{Ineq.Condition_Lemma_8_1}. Thereby,
\begin{align}
    \lim_{n\to\infty} T_{\varepsilon}(\tau) - H(\tau) = T_\varepsilon\left( \delta_{\beta} \right) - H\left( \delta_{\beta} \right)
    \,,
\end{align}
Now we focus on the second argument in \eqref{Ineq.kappa_UB} and provide the following asymptotic behavior:
\begin{align}
    \label{Ineq.kappa_UB_Final_2}
    \lim_{n\to\infty} T_\varepsilon\left( \frac{\floor{n\delta_{\beta}}+1}{n}\right) - H\left(\frac{\floor{n\delta_{\beta}}+1}{n} \right) & = T_\varepsilon\left( \lim_{n\to\infty} \frac{\floor{n\delta_{\beta}}+1}{n}\right) - H\left( \lim_{n\to\infty} \frac{\floor{n\delta_{\beta}}+1}{n} \right)
    \,,
\end{align}
where the equality holds since $T_{\varepsilon}(.)$ and $H(.)$ are continuous functions of $\delta_{\beta}$. Now, observe that since $\floor{n\delta_{\beta}} - 1< \floor{n\delta_{\beta}} \leq n\delta_{\beta}$ for real $n\delta_{\beta}$, we obtain
\begin{align}
    \lim_{n\to\infty} \frac{n\delta_{\beta} - 1 +1}{n} \leq \lim_{n\to\infty} \frac{\floor{n\delta_{\beta}}+1}{n} \leq \lim_{n\to\infty} \frac{n\delta_{\beta}+1}{n} \Rightarrow \nonumber\\
    \delta_{\beta} \leq \lim_{n\to\infty} \frac{\floor{n\delta_{\beta}}+1}{n} \leq \lim_{n\to\infty} \delta_{\beta} + \frac{1}{n} \stackrel{(a)}{\Rightarrow} \nonumber\\
    \lim_{n\to\infty} \frac{\floor{n\delta_{\beta}}+1}{n} = \delta_{\beta}
    \,.
\end{align}
where $(a)$ holds by the \emph{squeeze theorem}. Thereby,
\begin{align}
     \lim_{n\to\infty} T_\varepsilon\left( \frac{\floor{n\delta_{\beta}}+1}{n}\right) - H\left(\frac{\floor{n\delta_{\beta}}+1}{n} \right) = T_\varepsilon\left( \delta_{\beta} \right) - H\left( \delta_{\beta} \right)
    \,.
\end{align}
Therefore, recalling \eqref{Ineq.kappa_UB}, we obtain the following upper bound on the target identification rate $\kappa$:
\begin{align}
    \label{Ineq.kappa_UB_Final_3}
    \kappa & < \min \left\{ T_{\varepsilon}(\tau) - H(\tau) ,\, T_\varepsilon\left( \frac{\floor{n\delta_{\beta}}+1}{n}\right) - H\left( \frac{\floor{n\delta_{\beta}}+1}{n} \right) \right\}
    \nonumber\\
    & = \min \left\{ T_\varepsilon\left( \delta_{\beta} \right) - H\left( \delta_{\beta} \right) ,\, T_\varepsilon\left( \delta_{\beta} \right) - H\left( \delta_{\beta} \right)  \right\}
    \nonumber\\
    & = T_\varepsilon\left( \delta_{\beta} \right) - H\left( \delta_{\beta} \right)
    \,,
\end{align}
where the equality holds since $T_{\varepsilon}(.)$ and $H(.)$ are continuous functions of $\delta_{\beta}$.
Therefore, recalling \eqref{Ineq.Type_II_Final}, we obtain
\begin{align}
    P_{e,2}(i,j) & \leq \Pr\left( \F_{\delta_{\beta}}(i,j) \cap \F_{\delta_{\beta}}(i) \right) + \Pr\left( d_{\rm H}(\fY(i),\fc_i) > \floor{n\delta_{\beta}} \right)
    \nonumber\\
    & \leq \zeta_{0,n} + \zeta_{1,n}
    \nonumber\\
    & \leq \zeta_0 + \zeta_1
    \nonumber\\
    & \leq e_2 \,,\,
\end{align}
hence, $P_{e,2}(i,j) \leq e_2$ holds for sufficiently large $n$ and arbitrarily small $e_2 > 0$. We have thus shown that for every $e_1,e_2>0$ and sufficiently large $n$, there exists an $(n, M(n,R), K(n,\kappa), \allowbreak e_1, e_2)$ code.
\subsection{Upper Bound (Converse Proof)}
\label{Sec.Conv}
The converse proof consists of employing the following lemma on the size of a DKI code. In particular, depending on whether or not a Hamming weight constraint is present, we divide in two cases and address them separately. More specifically, we use the following observation. Let $R > 0$ be a DKI achievable rate. We assume to the contrary that there exist two distinct messages $i_1$ and $i_2$ that are represented by a common codeword, i.e., $\fc_{i_1} = \fc_{i_2} = x^n$, and show that this assumption result in a contradiction, namely, the sum of type I and type II error probabilities converges to one from left, i.e.,
\begin{align}
    \lim_{n\to\infty} P_{e,1}(i_1) + P_{e,2}(i_2,\mathbbmss{K}) = 1 \;.
\end{align}
Hence our assumption is false and the number of messages $2^{nR}$ is bounded by either the size of the subset of the input sequences that satisfy the input constraint or the entire input space.
\begin{lemma}
\label{Lem.DConverse}
Consider a sequence of $(n,\allowbreak M(n\allowbreak,R),\allowbreak K(n,\allowbreak \kappa), \allowbreak e_1^{(n)}, \allowbreak e_2^{(n)})$ codes $(\C^{(n)},\sT^{(n)})$ such that $e_1^{(n)}$ and $e_2^{(n)}$ tend to zero as $n\rightarrow\infty$. Then, given a sufficiently large $n$, the codebook $\C^{(n)}$ satisfies the following property: There cannot be two distinct messages $i_1,i_2 \in [\![M]\!]$ that are represented by the same codeword, i.e.,
\begin{align}
    i_1 \neq i_2 \; \quad \Rightarrow\quad \fc_{i_1} \neq \fc_{i_2} \;.
\end{align}
\end{lemma}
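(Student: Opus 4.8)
The plan is to argue by contradiction, exploiting the fact that two messages sharing a common codeword induce \emph{identical} output statistics at the receiver and are therefore statistically indistinguishable. Suppose, contrary to the claim, that there exist distinct indices $i_1 \neq i_2 \in [\![M]\!]$ with $\fc_{i_1} = \fc_{i_2} = x^n$. Since the DKI error constraints of Definition~\ref{Def.BSC-DKI-Code} are required to hold for \emph{every} target set $\mathbbmss{K}$ of size $K$, it suffices to exhibit a single choice of $\mathbbmss{K}$ for which the two errors cannot simultaneously vanish.

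First I would select a target set $\mathbbmss{K}$ of cardinality $K$ with $i_1 \in \mathbbmss{K}$ and $i_2 \notin \mathbbmss{K}$; the remaining $K-1$ members are drawn from the $M-2$ indices other than $i_1,i_2$, so such a set exists whenever $K \leq M-1$, which holds in the regime of interest (where we seek an upper bound on $M$ and hence may take $M$ large relative to $K = 2^{\kappa n}$). With this $\mathbbmss{K}$ fixed, the decoding region $\sT_{\mathbbmss{K}} = \bigcup_{j \in \mathbbmss{K}} \mathbbmss{T}_j$ is \emph{one} fixed subset of $\mathbf{H}^n$.

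The key step is that, because both messages are encoded into the \emph{same} codeword $x^n$, the channel output distribution is $W^n(\cdot \mid x^n)$ in either case. Hence, using \eqref{Eq.TypeI-Error-DKI} and \eqref{Eq.TypeII-Error-DKI},
\begin{align}
    P_{e,1}(i_1) + P_{e,2}(i_2,\mathbbmss{K})
    &= \left( 1 - \sum_{\fy \in \sT_{\mathbbmss{K}}} W^n(\fy \mid x^n) \right) + \sum_{\fy \in \sT_{\mathbbmss{K}}} W^n(\fy \mid x^n)
    \nonumber\\
    &= 1 \,.
\end{align}
This identity holds \emph{exactly}, for every $n$. On the other hand, the code assumption gives $P_{e,1}(i_1) \leq e_1^{(n)} \to 0$ and $P_{e,2}(i_2,\mathbbmss{K}) \leq e_2^{(n)} \to 0$, so their sum must tend to $0$, which contradicts the identity. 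Therefore the assumption $\fc_{i_1} = \fc_{i_2}$ is untenable for sufficiently large $n$, and the encoder must be injective.

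I expect the main (and only mild) obstacle to be the bookkeeping around the choice of $\mathbbmss{K}$: one must verify that a valid $K$-subset separating $i_1$ from $i_2$ exists, and then carefully invoke that the per-set error bounds hold for \emph{all} admissible $\mathbbmss{K}$, so that a single adversarial choice suffices to force the contradiction. The probabilistic core — that an encoder collision collapses the type I and type II events into complementary events summing to one — is immediate once the shared codeword forces a common output law, and requires no tail estimates or combinatorial inequalities.
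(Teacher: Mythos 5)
Your proof is correct and takes essentially the same route as the paper's: assume a shared codeword $x^n$ for $i_1 \neq i_2$, select a target set $\mathbbmss{K}$ with $i_1 \in \mathbbmss{K}$ and $i_2 \notin \mathbbmss{K}$, and note that since both messages induce the identical output law $W^n(\cdot \mid x^n)$, the type I and type II error probabilities are exactly complementary and sum to one, contradicting the assumption that both vanish. Your explicit check that a separating $K$-subset exists (requiring $K \leq M-1$) is a minor bookkeeping point the paper leaves implicit.
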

\begin{proof}
Assume to the contrary that there exist two messages $i_1$ and $i_2$, where $i_1 \neq i_2$, such that
\begin{align}
    \fc_{i_1} = \fc_{i_1} = x^n \;,
\end{align}
for some $x^n \in \X^n$. Since $(\C^{(n)},\sT^{(n)})$ forms a $(n,\allowbreak M(n\allowbreak,R),\allowbreak K(n,\allowbreak \kappa), \allowbreak e_1^{(n)}, \allowbreak e_2^{(n)})$ code, it implies that for every possible arrangement of $\big\{ \mathbbmss{K}, \mathbbmss{K}^c \big\}$, $e_1$ and $e_2$ tends to zero. Therefore, the existence of a \emph{desired} arrangement of $\big\{ \mathbbmss{K}, \mathbbmss{K}^c \big\}$ where $\mathbbmss{K} \subseteq [\![M]\!]$ with the property that $i_1 \in \mathbbmss{K}$ and $i_2 \in \mathbbmss{K}^c$, is guaranteed. Thereby, we obtain
\begin{align}
    P_{e,1}(i_1) = W^n(\sT_{\mathbbmss{K}}^c \,|\, x^n = \fc_{i_1})_{i_1 \in \mathbbmss{K}} & \leq e_1^{(n)} \;,
    \nonumber\\
    P_{e,2}(i_2,\mathbbmss{K}) = W^n(\sT_{\mathbbmss{K}} \,|\, x^n = \fc_{i_2})_{i_2 \notin \mathbbmss{K}} & \leq e_2^{(n)} \;.\,
\end{align}
This leads to a \emph{contradiction} since
\begin{align}
    1 & = W^n(\sT_{\mathbbmss{K}}^c \,|\, x^n) + W^n(\sT_{\mathbbmss{K}} \,|\, x^n)
    \nonumber\\
    & = P_{e,1}(i_1) + P_{e,2}(i_2,\mathbbmss{K})
    \nonumber\\ 
    & \leq e_1^{(n)} + e_2^{(n)} \;,
\end{align}
where the last inequality exploits the definition of type I/II error probabilities given in \eqref{Eq.TypeI-Error-DKI} and \eqref{Eq.TypeII-Error-DKI}. Hence, the assumption is false, and distinct messages $i_1$ and $i_2$ cannot share the same codeword.
\end{proof}
$\rhd$ \textbf{\textcolor{mycolor12}{Case 1 - With Hamming Weight Constraint ($0 < A < 1$):}}
By Lemma~\ref{Lem.DConverse}, each message has a distinct codeword. Hence, the number of messages is bounded by the number of input sequences that satisfy the input constraint. We divide in two cases, namely, where $0 < A < \frac{1}{2}$ and $\frac{1}{2} \leq A < 1$. For the first case, we obtain the following upper bound on the size of the DKI codebook:
\begin{align}
    \label{Ineq.Ineq.Converse_Case_1_0}
    2^{nR} & \leq \left| \B_{\f0}(n,nA) \right|
    \nonumber\\&
    = \left| \left\{ \fx \in \mathbf{H}^n : 0 \leq \sum_{t=1}^n x_t \leq nA \right\} \right|
    \nonumber\\&
    \stackrel{(a)}{\leq} 2^{nH(A)}
    \;,
\end{align}
where $(a)$ exploits the upper bound on the volume of the Hamming ball provided in Lemma~\ref{Lem.Bound_Volume_Hamming_ball} for $0 < A < 1/2$. Thereby, \eqref{Ineq.Ineq.Converse_Case_1_0} implies
\begin{align}
    \label{Ineq.Converse_Case_1}
    R \leq H(A) \,.
\end{align}
Now, we proceed to calculate the upper bound on the size of the DKI codebook where $1/2 \leq A < 1$. We argue that this case is equivalent to having a Hamming weight constraint of the form $A^* = 1/2$. That is, the codewords with constraint $\sum_{t=1}^n x_t \leq nA^*$ where $A^* = 1/2$ fulfilled the same constraint with $\frac{1}{2} \leq A < 1$. The new Bernoulli input process has $1/2$ success probability, i.e., $X \sim \text{Bern}(1/2)$. Therefore, again employing Lemma~\ref{Lem.Bound_Volume_Hamming_ball} for the critical point $\varepsilon = 1/2$, we obtain
\begin{align}
    \label{Ineq.Ineq.Converse_Case_1_2_Comprehensive}
    2^{nR} & \leq \left| \B_{\f0}(n,nA^*) \right|
    \nonumber\\&
    = \left| \left\{ \fx \in \mathbf{H}^n : 0 \leq \sum_{t=1}^n x_t \leq nA^* \right\} \right|
    \nonumber\\&
    \leq 2^{nH\left( A^* = 1/2 \right)}
    \;,
\end{align}
which implies
\begin{align}
    \label{Ineq.Ineq.Converse_Case_1_2}
    R \leq H\left( A^* = 1/2 \right) = 1 \,.
\end{align}

$\rhd$ \textbf{\textcolor{mycolor12}{Case 2 - Without Hamming Weight Constraint ( $A \geq 1$):}}
In this case, the number of messages is bounded by the number of the input sequences, that is, the size of entire input space, i.e., $|\X|^n$. Therefore, we can establish the following upper bound on the size of the DKI codebook $2^{nR} \leq \left| \X \right|^n$ which for $|\X| = 2$ implies
\begin{align}
    \label{Ineq.Converse_Case_2}
    R & \leq \frac{1}{n} \log \left| \X \right|^n
    \nonumber\\
    & = 1 \,.\,
\end{align}
This completes the proof of Lemma~\ref{Th.DKI-Capacity}.

Thus, by \eqref{Ineq.Converse_Case_1}, \eqref{Ineq.Ineq.Converse_Case_1_2} and \eqref{Ineq.Converse_Case_2}, and exploiting the fact that DKI capacity is supremum of all achievable rate, the DKI coding rate is upper bounded by
\begin{align}
    \mathbb{C}_{\rm DKI} \left( \W_{\varepsilon},K \right) & \leq \begin{cases}
    H(A) & \text{if $0 < A < 1/2$}\\
    1 & \text{if $A \geq 1/2$} \,,
    \end{cases}
\end{align}
which completes the proof of Theorem~\ref{Th.DKI-Capacity}.
\section{Summary and Future Directions}
\label{Sec.Summary}
In this work, we studied the DKI problem over the binary symmetric channel. We assume that the transmitter is subject to a Hamming weight constraint $A$. Our results in this work may serve as a model for event-triggered based tasks in the context of future XG applications. In particular, we obtained the DKI capacity of the BSC with the codebook size of $M(n,R) = 2^{nR}$ equals to the entropy of the Hamming constraint value, i.e., $H(A)$. Our results for the DKI capacity of the BSC revealed that the conventional exponential scale of $2^{nR}$ which is used in the standard message transmission setting, is again the appropriate scale for codebook size. Further, we find out that the BSC features an \emph{exponentially} large set of target messages set, in the codeword length, $n$, i.e., $2^{\kappa n}$; and characterize all the possible valid range on the DKI target rate $\kappa$ which depending on the value of channels statistic may varies.

We show the achievability proof using a Hamming distance decoder and employing packing arrangement of hyper balls in the same line of arguments as is conducted for the basic \emph{Gilbert bound} method. In particular, in the presence of a Hamming weight constraint $A$, we pack hyper balls with radius $\floor{n\beta}$, inside a larger Hamming hyper ball, which results in $\sim 2^{nH(A)}$ codewords.

For the converse part, a similar approach as our previous work for the DMC \cite{Salariseddigh_IT,Salariseddigh_ICC,Salariseddigh_arXiv_ICC} is followed. That is, for the case where a non-trivial Hamming weight constraint is present ($0 < A < 1$), we establish an injection (one-to-one mapping) between the message set and the subset that is induced by the Hamming weight constraint. However, here we exploit the impact of generalized type I and type II error probability definitions with respect to the set of the target messages $\mathbbmss{K}$ in the course of the proof, namely, for any two distinct messages, there exist an arrangement of $\{ \mathbbmss{K},\mathbbmss{K}^c \}$ into which the two messages are categorized. In particular, we exploit the method of \emph{proof by the contradiction}. Namely, we first assume that two generic different messages $i_1$ and $i_2$ share the common codewords, and then show that such an assumption leads to a contradiction regarding the sum of the error probabilities, i.e., we derive that the sum of type I and type II error probabilities converges to one from left. Hence the falsehood of the early assumption is guaranteed. Therefore, the total number of messages $M=2^{nR}$ is bounded by the size of the induced subset, i.e., $M \leq 2^{nH(A)}$. For the case where $A \geq 1$, that is, in the absent of a Hamming weight constraint, a similar line of argument can be applied in order to establish the injective function.
 
Our presented results can be extended in several directions. Some of the possible topics for the future research are as follows:
\begin{itemize}
    \item \textcolor{blau_2b}{\textbf{Explicit Code Construction}}: Systematic design and explicit construction of the DKI codes and developing optimum (low complexity) encoding\,/\,decoding schemes for practical designs.
    \item \textcolor{blau_2b}{\textbf{Multi User}}: The extension of current work to more advanced settings, such as the multiple-input multiple-output or the multi-user models may seems more relevant in the applications of the future generations of wireless networks (XG).
    \item \textcolor{blau_2b}{\textbf{Fekete's Lemma}}: To analyse whether the pessimistic ($\underline{C} = \liminf_{n \to \infty} \allowbreak \frac{\log M(n,R)}{n \log n}$) and optimistic ($\overline{C} = \limsup_{n \to \infty} \frac{\log M(n,R)}{n \log n}$) capacities \cite{A06} are identical or not; That is, one can investigate the behavior of DKI capacity with respect to the so called \emph{Fekete's Lemma} \cite{Boche20}. See \cite{Boche20} for further discussions and details.
    \item \textcolor{blau_2b}{\textbf{Channel Reliability Function}}: Studying the behavior of the decoding errors (type I and type II) aiming at characterizing them as a function of the codeword length $n$, in the asymptotic, i.e., where $n \to \infty$, for the entire region of DKI coding rate $R$, i.e., $0 < \forall R < C$, demand us to know the so called \emph{channel reliability function (CRF)} \cite{Boche21}. However, to the best of the authors’ knowledge, such a function (CRF) for the DKI problem has not been studied so far in the literature.
\end{itemize}
\section*{Acknowledgements}

M.~J.~Salariseddigh was supported by the German Federal Ministry of Education and Research (BMBF) within the 6G-Life project, under Grant 16KISK002. We thank the German Research Foundation (DFG) within the Gottfried Wilhelm Leibniz Prize under Grant BO 1734/20-1 for their support of H. Boche. Thanks also go to the BMBF within the national initiative for “Post Shannon Communication (NewCom)” with the project “Basics, Simulation and Demonstration For New Communication Models” under Grant 16KIS1003K for their support of H. Boche and with the project “Coding Theory and Coding Methods For New Communication Models” under Grant 16KIS1005 for their support of C. Deppe. Further, we thank the DFG within Germany’s Excellence Strategy EXC-2111—390814868 and EXC-2092 CASA - 390781972 for their support of H. Boche. C. Deppe was supported in part by the BMBF under Grant 16KIS1005 and by the DFG Project DE1915/2-1. The authors acknowledge the financial support by the BMBF in the program of “Souverän. Digital. Vernetzt.”, joint project 6G-Life, project identification number: 16KISK002.
\appendix
\section{Lower Bound on The Volume of The Hamming ball}
\begin{lemma}
\label{Lem.LB_Volume_Hamming_ball}
Let $n,q \geq 2$ be positive integers and assume a real $p$ where $0 \leq \floor{n\varepsilon}/n \leq 1 - 1/q \,.$ Then, volume of the Hamming ball in the $q$-ary alphabet is lower bounded as follows 
\begin{align}
    \label{Ineq.Bound_Volume_Hamming_ball_LB}
    \text{Vol}\left( \B_{\mathbf{x}_0}(n,r) \right) \triangleq \sum_{j=0}^{\floor{n\varepsilon}} \binom{n}{j} (q-1)^j \geq q^{H_q\left(\frac{\floor{n\varepsilon}}{n}\right) - o\left( \log_q n \right)} \,.
\end{align}
\begin{proof}
    Observe that the \textbf{Stirling's approximation} \cite{Robbins55} gives the following double bound on $n!$
    \begin{align}
        \sqrt{2n\pi} \left( \frac{n}{e} \right)^n e^{\lambda_1(n)} \leq n! \leq \sqrt{2n\pi} \left( \frac{n}{e} \right)^n e^{\lambda_2(n)} \,.
    \end{align}
    Now, we have
    \begin{align}
        & \binom{n}{\floor{n\varepsilon}}
        \nonumber\\
        & = \frac{n!}{\floor{n\varepsilon}! \left( n - \floor{n\varepsilon} \right)!}
        \nonumber\\
        & > \frac{\sqrt{2n\pi} \cdot (\frac{n}{e})^n \cdot e^{\lambda_1(n)}}{\left[ \sqrt{2\floor{n\varepsilon} \pi} \cdot \left( \frac{\floor{n\varepsilon}}{e} \right)^{\floor{n\varepsilon}} \cdot e^{\lambda_1(n)} \right] \left[ \sqrt{2\left( n \left( 1 - \frac{\floor{n\varepsilon}}{n} \right) \right) \pi} \cdot \left( \frac{n \left( 1 - \frac{\floor{n\varepsilon}}{n} \right)}{e}\right)^{n\left(1 - \frac{\floor{n\varepsilon}}{n}\right)} \right]}
        \nonumber\\
        & = \left[ \frac{(\frac{n}{e})^n}{\left( \frac{\floor{n\varepsilon}}{e} \right)^{\floor{n\varepsilon}} \cdot \left( \frac{n \left( 1 - \frac{\floor{n\varepsilon}}{n} \right)}{e}\right)^{n(1 - \frac{\floor{n\varepsilon}}{n})} } \right] \cdot \left[ \frac{e^{\lambda_1(n) - \lambda_2\left(\floor{n\varepsilon}\right) - \lambda_2\left( n\left(1 - \frac{\floor{n\varepsilon}}{n}\right) \right)}}{\sqrt{2\pi \floor{n\varepsilon} \left( 1 - \frac{\floor{n\varepsilon}}{n} \right)}} \right]
        \nonumber\\
        & \stackrel{(a)}{=} \left[ \frac{1}{\left( \frac{\floor{n\varepsilon}}{n} \right)^{\floor{n\varepsilon}} \cdot \left( 1 - \frac{\floor{n\varepsilon}}{n} \right)^{n\left(1 - \frac{\floor{n\varepsilon}}{n}\right)} } \right] \cdot \left[ \frac{e^{\floor{n\varepsilon}} \cdot e^{n\left(1 - \frac{\floor{n\varepsilon}}{n}\right)}}{e^n} \right] \cdot \text{Res}(n)
        \nonumber\\
        & \stackrel{(b)}{=} \frac{\text{Res}(n)}{\left( \frac{\floor{n\varepsilon}}{n} \right)^{\floor{n\varepsilon}} \cdot \left( 1 - \frac{\floor{n\varepsilon}}{n} \right)^{n\left(1 - \frac{\floor{n\varepsilon}}{n}\right)} }
    \end{align}
    where $(a)$ holds since we let
    \begin{align}
        \text{Res}(n) \triangleq \frac{e^{\lambda_1(n) - \lambda_2\left(\floor{n\varepsilon}\right) - \lambda_2\left( n\left(1 - \frac{\floor{n\varepsilon}}{n}\right) \right)}}{\sqrt{2\pi \floor{n\varepsilon} \left( 1 - \frac{\floor{n\varepsilon}}{n} \right)}} \,,
    \end{align}
    and $(b)$ holds since
    \begin{align}
        \frac{e^{\floor{n\varepsilon}} \cdot e^{n\left(1 - \frac{\floor{n\varepsilon}}{n}\right)}}{e^n} = 1 \,.
    \end{align}
    Next, we proceed to bound the Hamming ball as follows: Observe that the volume of Hamming ball as provided in \eqref{Ineq.Bound_Volume_Hamming_ball_LB} is lower bounded by the Binomial coefficient for the largest index, i.e., $j = \floor{n\varepsilon}$. Therefore,
    \begin{align}
        \text{Vol}\left( \B_{\mathbf{x}_0}(n,r) \right) & \triangleq \sum_{j=0}^{\floor{n\varepsilon}} \binom{n}{j} (q-1)^j
        \nonumber\\
        & \geq \binom{n}{\floor{n\varepsilon}} (q-1)^{\floor{n\varepsilon}}
        \nonumber\\
        & > \frac{(q-1)^{\floor{n\varepsilon}}}{\left( \frac{\floor{n\varepsilon}}{n} \right)^{\floor{n\varepsilon}} \cdot \left( 1 - \frac{\floor{n\varepsilon}}{n} \right)^{n\left(1 - \frac{\floor{n\varepsilon}}{n}\right)}} \cdot \text{Res}(n)
        \nonumber\\
        & = q^{\log_q \left( \frac{(q-1)^{\floor{n\varepsilon}}}{\left( \frac{\floor{n\varepsilon}}{n} \right)^{\floor{n\varepsilon}} \cdot \left( 1 - \frac{\floor{n\varepsilon}}{n} \right)^{n\left(1 - \frac{\floor{n\varepsilon}}{n}\right)}} \right) + \log_q \text{Res}(n)}
        \nonumber\\
        & = q^{\floor{n\varepsilon} \log_q (q-1) - \floor{n\varepsilon} \log_q \frac{\floor{n\varepsilon}}{n} - n\left(1 - \frac{\floor{n\varepsilon}}{n}\right) \log_q\left( 1 - \frac{\floor{n\varepsilon}}{n} \right) + \log_q \text{Res}(n)}
        \nonumber\\
        & = q^{n\left( \frac{\floor{n\varepsilon}}{n} \log_q (q-1) - \frac{\floor{n\varepsilon}}{n} \log_q \frac{\floor{n\varepsilon}}{n} - \left(1 - \frac{\floor{n\varepsilon}}{n}\right) \log_q\left( 1 - \frac{\floor{n\varepsilon}}{n} \right) + \log_q \text{Res}(n) \right)}
        \nonumber\\
        & = q^{nH_q\left( \frac{\floor{n\varepsilon}}{n} \right) + \log_q \text{Res}(n)}
        \,.
    \end{align}
    Now by letting $\lambda_1(n) = 0$ and $\lambda_2(n)= 1 /(12n)$, we obtain
    \begin{align}
        \text{Res}(n) & = \frac{e^{- \frac{1}{12\floor{n\varepsilon}} - \frac{1}{n - \floor{n\varepsilon}}}}{\sqrt{2\pi \floor{n\varepsilon} \left( 1 - \frac{\floor{n\varepsilon}}{n} \right)}}
        \nonumber\\
        & \stackrel{(a)}{\leq} \frac{e^{- \frac{1}{12\floor{n\varepsilon}} - \frac{1}{n - \floor{n\varepsilon}}}}{\sqrt{2\pi \floor{n\varepsilon} \left( 1 - \varepsilon \right)}}
        \nonumber\\
        & \stackrel{(b)}{=} K(\varepsilon) \floor{n\varepsilon}^{- \frac{1}{2}} e^{- \frac{1}{12\floor{n\varepsilon}} - \frac{1}{n - \floor{n\varepsilon}}}
        \,,
    \end{align}
    where $(b)$ follows for sufficiently large $n$, since $\floor{n\varepsilon}$ $(b)$ holds for $K(\varepsilon) = \frac{1}{\sqrt{2\pi(1- \varepsilon)}}$. Therefore,
    \begin{align}
        \log_q \text{Res}(n) & =  \log_q K(\varepsilon) - \frac{1}{2} \log_q \floor{n\varepsilon} - \frac{1}{12\floor{n\varepsilon}} - \frac{1}{n - \floor{n\varepsilon}}
        \nonumber\\
        & = o\left( \log_q n \right) \,,
    \end{align}
    which implies that
    \begin{align}
        \lim_{n\to\infty} \frac{\log_q \text{Res}(n)}{\log_q n} = 0 \,.
    \end{align}
    Thereby,
    \begin{align}
        \text{Vol}\left( \B_{\mathbf{x}_0}(n,r) \right) & \triangleq \sum_{j=0}^{\floor{n\varepsilon}} \binom{n}{j} (q-1)^j
        \nonumber\\
        & \geq q^{nH_q\left( \frac{\floor{n\varepsilon}}{n} \right) + o\left( \log_q n \right)}
        \,.
    \end{align}
\end{proof}
\end{lemma}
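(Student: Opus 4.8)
The plan is to discard all but the largest summand and then estimate that single term by Stirling's formula. Since the hypothesis $\floor{n\varepsilon}/n \le 1 - 1/q$ places the truncation point at or below the mode of the weighted sequence $j \mapsto \binom{n}{j}(q-1)^j$, this sequence is nondecreasing on $0 \le j \le \floor{n\varepsilon}$, so its last entry is the largest. Dropping the remaining nonnegative terms then yields the valid lower bound $\text{Vol}(\B_{\mathbf{x}_0}(n,r)) \ge \binom{n}{\floor{n\varepsilon}}(q-1)^{\floor{n\varepsilon}}$, which reduces the problem to a sharp lower estimate of the single binomial coefficient $\binom{n}{\floor{n\varepsilon}}$.

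First I would invoke the two-sided Stirling bound $\sqrt{2n\pi}(n/e)^n e^{\lambda_1(n)} \le n! \le \sqrt{2n\pi}(n/e)^n e^{\lambda_2(n)}$, applying the lower bound to $n!$ in the numerator together with the upper bound to the two factorials $\floor{n\varepsilon}!$ and $(n-\floor{n\varepsilon})!$ in the denominator. After the $e^{\pm n}$ and $(n/e)$-type factors cancel, the leading exponential factor collapses to $\big(\tfrac{\floor{n\varepsilon}}{n}\big)^{-\floor{n\varepsilon}}\big(1-\tfrac{\floor{n\varepsilon}}{n}\big)^{-n(1-\floor{n\varepsilon}/n)}$, and I would gather the surviving square-root prefactors and the Stirling error exponentials into a single residual factor $\text{Res}(n)$, giving $\binom{n}{\floor{n\varepsilon}} > \text{Res}(n) \cdot \big(\tfrac{\floor{n\varepsilon}}{n}\big)^{-\floor{n\varepsilon}}\big(1-\tfrac{\floor{n\varepsilon}}{n}\big)^{-n(1-\floor{n\varepsilon}/n)}$.

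Next I would take $\log_q$ of the product $\binom{n}{\floor{n\varepsilon}}(q-1)^{\floor{n\varepsilon}}$ and match the outcome to the definition of the $q$-ary entropy. Recognizing the combination $\frac{\floor{n\varepsilon}}{n}\log_q(q-1) - \frac{\floor{n\varepsilon}}{n}\log_q\frac{\floor{n\varepsilon}}{n} - (1-\frac{\floor{n\varepsilon}}{n})\log_q(1-\frac{\floor{n\varepsilon}}{n}) = H_q(\floor{n\varepsilon}/n)$, the exponent becomes $n H_q(\floor{n\varepsilon}/n) + \log_q \text{Res}(n)$, which already exhibits the desired dominant term $n H_q(\floor{n\varepsilon}/n)$.

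The main obstacle is to verify that the residual contributes only $o(\log_q n)$ to the exponent. Here I would make the concrete choice $\lambda_1(n)=0$ and $\lambda_2(n)=1/(12n)$ in the Stirling bounds, so that $\text{Res}(n)$ is, up to the constant $K(\varepsilon)=1/\sqrt{2\pi(1-\varepsilon)}$ and an exponentially small factor $e^{-1/(12\floor{n\varepsilon}) - 1/(n-\floor{n\varepsilon})}$, of order $\floor{n\varepsilon}^{-1/2}$. Taking logarithms gives $\log_q \text{Res}(n) = -\tfrac12 \log_q \floor{n\varepsilon} + O(1)$, whence $\lim_{n\to\infty} \log_q \text{Res}(n)/\log_q n = 0$, i.e. $\log_q \text{Res}(n) = o(\log_q n)$; substituting this into the exponent completes the bound. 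The only subtleties to watch are that the monotonicity argument used to isolate the largest term relies precisely on the hypothesis $\floor{n\varepsilon}/n \le 1-1/q$, and that the Stirling-error estimates require $\floor{n\varepsilon}$ to grow with $n$, which holds for all sufficiently large $n$.
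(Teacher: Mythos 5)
Your proposal is correct and follows essentially the same route as the paper's proof: isolate the term $j=\floor{n\varepsilon}$, bound $\binom{n}{\floor{n\varepsilon}}$ via the two-sided Stirling estimate with $\lambda_1(n)=0$, $\lambda_2(n)=1/(12n)$, collect the prefactors into $\text{Res}(n)$, recognize the $q$-ary entropy in the exponent, and verify $\log_q \text{Res}(n) = o(\log_q n)$. Your monotonicity observation about the mode is harmless but superfluous --- since all summands are nonnegative, the sum is bounded below by its last term with no appeal to the hypothesis $\floor{n\varepsilon}/n \leq 1 - 1/q$, which is exactly how the paper proceeds.
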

\section{Upper Bound on The Volume of The Hamming ball}
\begin{lemma}[see {\cite[Lem.~$16.19$]{FG06}}]
\label{Lem.Bound_Volume_Hamming_ball}
Let integer $n \geq 1$ and $0 < \varepsilon \leq \frac{1}{2}$ with $n > \floor{n\varepsilon} \geq 1$. Then, volume of the Hamming ball in the binary alphabet is upper bounded as follows
\begin{align}
    \label{Ineq.Bound_Volume_Hamming_ball_UB}
    \text{Vol}\left( \B_{\mathbf{x}_0}(n,r) \right) \triangleq \sum_{j=0}^{\floor{n\varepsilon}}\binom{n}{j} \leq 2^{nH(\varepsilon)} \,,
\end{align}
\end{lemma}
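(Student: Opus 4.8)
The plan is to establish this upper bound on the binary Hamming ball volume by a short probabilistic (Chernoff-type) argument, which is cleaner than the Stirling route used in the companion lower-bound lemma. The key idea is to compare the quantity $\text{Vol}(\B_{\mathbf{x}_0}(n,r))\triangleq\sum_{j=0}^{\floor{n\varepsilon}}\binom{n}{j}$ against the total mass of a $\mathrm{Binomial}(n,\varepsilon)$ law, which equals $\sum_{j=0}^{n}\binom{n}{j}\varepsilon^{j}(1-\varepsilon)^{n-j}=1$. Write $V\triangleq\sum_{j=0}^{\floor{n\varepsilon}}\binom{n}{j}$ for the target quantity.

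The first step is to bound each individual weight from below uniformly over the truncation range $0\le j\le\floor{n\varepsilon}$. Writing $\varepsilon^{j}(1-\varepsilon)^{n-j}=(1-\varepsilon)^{n}\bigl(\tfrac{\varepsilon}{1-\varepsilon}\bigr)^{j}$ and invoking the hypothesis $0<\varepsilon\le\tfrac12$, which forces $\tfrac{\varepsilon}{1-\varepsilon}\le 1$, the factor $\bigl(\tfrac{\varepsilon}{1-\varepsilon}\bigr)^{j}$ is nonincreasing in $j$. Since $j\le\floor{n\varepsilon}\le n\varepsilon$, this yields $\bigl(\tfrac{\varepsilon}{1-\varepsilon}\bigr)^{j}\ge\bigl(\tfrac{\varepsilon}{1-\varepsilon}\bigr)^{n\varepsilon}$, and therefore $\varepsilon^{j}(1-\varepsilon)^{n-j}\ge(1-\varepsilon)^{n}\bigl(\tfrac{\varepsilon}{1-\varepsilon}\bigr)^{n\varepsilon}=\varepsilon^{n\varepsilon}(1-\varepsilon)^{n(1-\varepsilon)}=2^{-nH(\varepsilon)}$, where the last identity is merely the definition of the binary entropy.

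Combining the two observations closes the argument: restricting the unit total mass to the truncation range (dropping the remaining nonnegative terms) and pulling out the uniform per-term lower bound gives
\begin{align}
1=\sum_{j=0}^{n}\binom{n}{j}\varepsilon^{j}(1-\varepsilon)^{n-j}
&\ge\sum_{j=0}^{\floor{n\varepsilon}}\binom{n}{j}\varepsilon^{j}(1-\varepsilon)^{n-j}
\nonumber\\
&\ge 2^{-nH(\varepsilon)}\sum_{j=0}^{\floor{n\varepsilon}}\binom{n}{j}
=2^{-nH(\varepsilon)}\,V\,,
\end{align}
and rearranging yields $V\le 2^{nH(\varepsilon)}$, which is the claimed bound.

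I do not anticipate any genuine obstacle here; the single point demanding care is the direction of the monotonicity step, which is precisely where the hypothesis $\varepsilon\le\tfrac12$ (equivalently $\floor{n\varepsilon}\le n\varepsilon\le n/2$) is used. Were $\varepsilon$ to exceed $\tfrac12$, the per-term lower bound would fail and the statement itself would no longer hold for the stated range. An alternative would be to bound $V$ by its largest summand times a geometric correction and then apply Stirling's formula, but that approach is more cumbersome, whereas the probabilistic argument above is both shorter and entirely self-contained.
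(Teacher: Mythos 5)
Your proof is correct, and every step checks out: the identity $\varepsilon^{n\varepsilon}(1-\varepsilon)^{n(1-\varepsilon)}=2^{-nH(\varepsilon)}$, the monotonicity of $\bigl(\tfrac{\varepsilon}{1-\varepsilon}\bigr)^{j}$ under the hypothesis $\varepsilon\le\tfrac12$, and the truncation of the Binomial$(n,\varepsilon)$ mass are all used in the right direction. Note that the paper does not actually prove this lemma at all; it simply cites it as Lemma~16.19 of the reference \cite{FG06}, so your argument supplies a self-contained proof where the paper offers only a pointer. Your probabilistic (Chernoff-type) argument is the standard one for this bound, and it contrasts favorably with what the paper does for the \emph{companion} lower bound (Lemma~\ref{Lem.LB_Volume_Hamming_ball}), which is proved via Stirling's approximation with explicit residual terms: your route needs no asymptotics, no Stirling constants, and holds for every finite $n$, which is exactly what the converse proof of Theorem~\ref{Th.DKI-Capacity} requires when it invokes \eqref{Ineq.Bound_Volume_Hamming_ball_UB}. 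One small remark: your proof never uses the hypothesis $n>\floor{n\varepsilon}\ge 1$, so your argument in fact establishes the bound under the weaker assumption $0<\varepsilon\le\tfrac12$ alone; that hypothesis is needed only for results (like the lower bound) where the central binomial coefficient must be nondegenerate.
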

\section{Bound on The Upper Tail of The Binomial Cumulative Distribution Function - Part 1}
\begin{lemma} [see {\cite[Probl.~$5.8-(c)$]{RG68}}]
\label{Lem.Bound_Sum_Tail_Binomial_Distribution_Part_1}
Let $0 < \varepsilon < 1$ and $\varepsilon < \frac{k}{n} < 1$. Then,
\begin{align}
    \label{Ineq.Binom_Tail_1}
    \binom{n}{k} \varepsilon^j(1-\varepsilon)^{n-k} \leq \sum_{j=k}^n \binom{n}{j} \varepsilon^j(1-\varepsilon)^{n-j} \leq \binom{n}{k} \varepsilon^k(1-\varepsilon)^{n-k} \left[ \frac{k(1-\varepsilon)}{k(1-\varepsilon)-(n-k)\varepsilon} \right] \,.
\end{align}
\end{lemma}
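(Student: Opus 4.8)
The plan is to handle the two bounds separately and to control the tail sum by comparison with a geometric series. Write $a_j \triangleq \binom{n}{j}\varepsilon^j(1-\varepsilon)^{n-j}$ for the individual Binomial terms, so that the object of interest is $S \triangleq \sum_{j=k}^n a_j$. The lower bound is then immediate: all $a_j$ are nonnegative and $j=k$ lies among the summation indices, whence $S \geq a_k = \binom{n}{k}\varepsilon^k(1-\varepsilon)^{n-k}$, which is the claimed left-hand inequality (the exponent $j$ on $\varepsilon$ in the statement being a typographical slip for $k$).

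For the upper bound I would first record the ratio of consecutive terms,
\begin{align}
\frac{a_{j+1}}{a_j} = \frac{n-j}{j+1}\cdot\frac{\varepsilon}{1-\varepsilon} \,,
\end{align}
and note that the factor $(n-j)/(j+1)$ is strictly decreasing in $j$. Consequently, for every tail index $j \geq k$,
\begin{align}
\frac{a_{j+1}}{a_j} \leq \frac{n-k}{k+1}\cdot\frac{\varepsilon}{1-\varepsilon} < \frac{(n-k)\varepsilon}{k(1-\varepsilon)} \triangleq r \,.
\end{align}
This yields the uniform domination $a_{k+i} \leq a_k\, r^i$, after which I would bound the finite tail by the full geometric series,
\begin{align}
S = \sum_{i=0}^{n-k} a_{k+i} \leq a_k \sum_{i=0}^{\infty} r^i = \frac{a_k}{1-r} \,,
\end{align}
and substitute $1-r = \big(k(1-\varepsilon)-(n-k)\varepsilon\big)/\big(k(1-\varepsilon)\big)$ to recover exactly the bracketed factor in the statement.

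The one step that carries the whole argument — and the place where the hypothesis is used — is the strict inequality $r<1$. This is precisely equivalent to $k(1-\varepsilon) > (n-k)\varepsilon$, which in turn rearranges to $\varepsilon < k/n$, the standing assumption of the lemma. I expect this to be the main (and essentially only) genuine obstacle: once $r<1$ is secured, the geometric majorization is routine and the closed-form constant $1/(1-r)$ drops out by elementary algebra, so that the proof needs neither Stirling's formula nor the entropy estimates invoked in the preceding volume lemmas.
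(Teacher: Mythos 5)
Your proof is correct. The lower bound is immediate since the tail sum contains the $j=k$ term (and you are right that the $\varepsilon^j$ on the left of \eqref{Ineq.Binom_Tail_1} is a typographical slip for $\varepsilon^k$), and the upper bound is sound: with $a_j \triangleq \binom{n}{j}\varepsilon^j(1-\varepsilon)^{n-j}$, the term ratio $a_{j+1}/a_j=\frac{n-j}{j+1}\cdot\frac{\varepsilon}{1-\varepsilon}$ decreases in $j$, hence on the tail it is at most $\frac{(n-k)\varepsilon}{k(1-\varepsilon)}\triangleq r$; the standing hypothesis $\varepsilon<k/n$ is exactly equivalent to $r<1$, so summing the dominating geometric series gives $a_k/(1-r)$, whose closed form is precisely the bracketed constant in the statement.

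As for comparison: the paper does not actually prove this lemma at all --- its ``proof'' is a one-line pointer to the hints for Problem~5.8(c) on p.~531 of Gallager \cite{RG68}. Your geometric-series domination is the standard argument that those hints intend, so you have reconstructed the omitted proof rather than taken a genuinely different route; nevertheless, your write-up adds something the paper lacks. It is self-contained, it needs nothing beyond the term-ratio computation (no Stirling approximation or entropy estimates, in contrast to the neighboring volume lemmas in the appendix), and it isolates the exact role of the hypothesis $\varepsilon<k/n$: it is the positivity of the denominator $k(1-\varepsilon)-(n-k)\varepsilon$, i.e., convergence of the geometric series. This is the same positivity condition the paper must re-verify by hand when it applies the descendant bound in the error analysis (cf.\ \eqref{Ineq.Condition_Lemma_7_Verification}), so making it explicit at the level of this lemma is a genuine clarification.
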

\begin{proof}
    See the provided hints in \cite[P.~531]{RG68} for further details.
\end{proof}
\section{Bound on The Upper Tail of The Binomial Cumulative Distribution Function - Part 2}
\begin{lemma}
\label{Lem.Bound_Sum_Tail_Binomial_Distribution_Part_2}
Let $0 < \varepsilon < 1$ and $\varepsilon < \frac{k}{n} < 1$. Then,
\begin{align}
    \label{Ineq.Binom_Tail_2}
    \sum_{j=k}^n \binom{n}{j} \varepsilon^j(1-\varepsilon)^{n-j} \leq 2^{n\left[ H(\frac{k}{n})-T_{\varepsilon}(\frac{k}{n}) \right]} \left[ \frac{k(1-\varepsilon)}{k(1-\varepsilon)-(n-k)\varepsilon} \right] 
    \,.
\end{align}
\begin{proof}
Recall that the equation of the tangent line to the binary entropy function $H(\delta_{\beta})$ at the specific point $\delta_{\beta} = \varepsilon$ is given by
\begin{align}
  & T_{\varepsilon}(\delta_{\beta})
  \nonumber\\
  & \stackrel{(a)}{=} H(\varepsilon) + (\delta_{\beta} - \varepsilon) \diff{H(\delta_{\beta})}{\delta_{\beta}}\Big|_{\delta_{\beta}=\varepsilon}
  \nonumber\\
  & \stackrel{(b)}{=} H(\varepsilon) + (\delta_{\beta} - \varepsilon) \log \left(\frac{1-\varepsilon}{\varepsilon}\right)
  \nonumber\\
  & = H(\varepsilon) + (\delta_{\beta} - \varepsilon) \left[ \log (1-\varepsilon) - \log \varepsilon \right]
  \nonumber\\
  & \stackrel{(c)}{=} - \varepsilon \log \varepsilon - (1 - \varepsilon) \log (1 - \varepsilon) + \delta_{\beta} \log (1-\varepsilon) - \delta_{\beta} \log \varepsilon - \varepsilon \log (1-\varepsilon) + \varepsilon \log \varepsilon
  \nonumber\\
  & = - \varepsilon \log \varepsilon - \log (1 - \varepsilon) + \varepsilon \log (1 - \varepsilon) + \delta_{\beta} \log (1-\varepsilon) - \delta_{\beta} \log \varepsilon - \varepsilon \log (1-\varepsilon) + \varepsilon \log \varepsilon
  \nonumber\\
  & = - \log (1 - \varepsilon) + \delta_{\beta} \log (1-\varepsilon) - \delta_{\beta} \log \varepsilon
  \nonumber\\
  & = - \log (1 - \varepsilon) + \delta_{\beta} \log (1-\varepsilon) - \delta_{\beta} \log \varepsilon
  \nonumber\\
  & = -\delta_{\beta}\log(\varepsilon)-(1-\delta_{\beta})\log(1-\varepsilon) \,,
  \label{Eq.Entropy_Der_1}
\end{align}
where $(a)$ holds by definition of a tangent line to a function at specific point, $(b)$ follows since derivative of the entropy function reads the negative of the logit function, i.e., $\diff{H(\delta_{\beta})}{\delta_{\beta}} = - \text{\rm logit}\,(\delta_{\beta}) \triangleq - \log \left( \frac{\delta_{\beta}}{1-\delta_{\beta}} \right)$ for $0 < \delta_{\beta} < 1$, and $(c)$ holds by definition of the entropy function, i.e.,
\begin{align}
    \label{Eq.Binary_Entropy}
    H(\varepsilon) \triangleq - \varepsilon \log \varepsilon - (1 - \varepsilon) \log (1 - \varepsilon) \,.
\end{align}
Therefore exploiting \eqref{Eq.Entropy_Der_1} we obtain,
\begin{align}
    \label{Eq.Entropy_Der_2}
    T_{\varepsilon}\left(\frac{k}{n}\right) = -\frac{k}{n}\log(\varepsilon)-(1-\frac{k}{n})\log(1-\varepsilon) \,,
\end{align}
which implies $-nT_{\varepsilon}(\frac{k}{n}) = k\log(\varepsilon) + (n-k) \log(1-\varepsilon)$. Thereby,
\begin{align}
    \label{Eq.Binom_1}
    2^{-nT_{\varepsilon}(\frac{k}{n})} = \varepsilon^k (1-\varepsilon)^{n-k} \,.
\end{align}
Now, observe that the Binomial coefficient $\binom{n}{k}$ where $k \geq 1$ and $n - k \geq 1$, can be upper bounded as follows \cite[see P.~353]{Cover91}
\begin{align}
    \label{Ineq.Binom_Coeff_UB}
    \binom{n}{k} \leq 2^{nH(\frac{k}{n})} \,.
\end{align}
Therefore,
\begin{align}
    \frac{k(1-\varepsilon)}{k(1-\varepsilon)-(n-k)\varepsilon} \cdot \binom{n}{k} \varepsilon^k (1-\varepsilon)^{n-k}
    & \stackrel{(a)}{\leq} \frac{k(1-\varepsilon)}{k(1-\varepsilon)-(n-k)\varepsilon} \cdot 2^{nH(\frac{k}{n})} \cdot \varepsilon^k (1-\varepsilon)^{n-k}
    \nonumber\\
    & \stackrel{(b)}{\leq} \frac{k(1-\varepsilon)}{k(1-\varepsilon)-(n-k)\varepsilon} \cdot 2^{nH(\frac{k}{n})} \cdot 2^{-nT_{\varepsilon}(\frac{k}{n})}
    \nonumber\\
    & = \left[ \frac{k(1-\varepsilon)}{k(1-\varepsilon)-(n-k)\varepsilon} \right] \cdot 2^{n\left[H(\frac{k}{n})-T_\varepsilon(\frac{k}{n})\right]}
    \,,
\end{align}
where $(a)$ holds by \eqref{Eq.Binom_1} and $(b)$ follows by exploiting \eqref{Eq.Binom_1}. Now, recalling \eqref{Ineq.Binom_Tail_2}, we obtain
\begin{align}
    \sum_{j=k}^n \binom{n}{j} \varepsilon^j (1-\varepsilon)^{n-j} \leq \frac{k(1-\varepsilon)}{k(1-\varepsilon)-(n-k)\varepsilon}  2^{n\left[H(\frac{k}{n})-T_\varepsilon(\frac{k}{n})\right]}
\end{align}
This completes the proof of Lemma~\ref{Lem.Bound_Sum_Tail_Binomial_Distribution_Part_2}.
\end{proof}
\end{lemma}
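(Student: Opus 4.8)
The plan is to recognize this bound as nothing more than an exponential repackaging of the sharper tail estimate already furnished by Lemma~\ref{Lem.Bound_Sum_Tail_Binomial_Distribution_Part_1}. That lemma gives
\[
\sum_{j=k}^n \binom{n}{j}\varepsilon^j(1-\varepsilon)^{n-j} \leq \binom{n}{k}\varepsilon^k(1-\varepsilon)^{n-k}\left[\frac{k(1-\varepsilon)}{k(1-\varepsilon)-(n-k)\varepsilon}\right],
\]
so the entire task reduces to controlling the single leading factor $\binom{n}{k}\varepsilon^k(1-\varepsilon)^{n-k}$ by the claimed exponential $2^{n[H(k/n)-T_{\varepsilon}(k/n)]}$, while carrying the bracketed coefficient over verbatim. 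Note that the hypothesis $\varepsilon < k/n < 1$ is exactly what makes the denominator meaningful, since $k(1-\varepsilon)-(n-k)\varepsilon = k-n\varepsilon > 0$ precisely when $k/n>\varepsilon$, and it also guarantees $1 \le k \le n-1$ so that the binomial coefficient bound below is applicable.

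First I would render the tangent line $T_{\varepsilon}$ explicit. Beginning from its definition $T_{\varepsilon}(p)=H(\varepsilon)+(p-\varepsilon)\diff{H(p)}{p}\big|_{p=\varepsilon}$ and using that the derivative of the binary entropy is the negated logit, namely $\diff{H(p)}{p}=\log\tfrac{1-p}{p}$, I would substitute and cancel the $\varepsilon\log\varepsilon$ and $\varepsilon\log(1-\varepsilon)$ contributions that arise from $H(\varepsilon)$ against the linear-in-$p$ term. This collapses $T_{\varepsilon}$ into the cross-entropy form
\[
T_{\varepsilon}(p) = -p\log\varepsilon - (1-p)\log(1-\varepsilon).
\]
Evaluating at $p=k/n$ and multiplying through by $-n$ yields $-nT_{\varepsilon}(k/n)=k\log\varepsilon+(n-k)\log(1-\varepsilon)$, which exponentiates (base $2$) to the pivotal identity
\[
2^{-nT_{\varepsilon}(k/n)} = \varepsilon^k(1-\varepsilon)^{n-k}.
\]

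Next I would invoke the standard entropy bound on the binomial coefficient, $\binom{n}{k}\leq 2^{nH(k/n)}$ for $1\le k\le n-1$ (cf.\ \cite{Cover91}). Multiplying this against the identity just obtained gives
\[
\binom{n}{k}\varepsilon^k(1-\varepsilon)^{n-k} \leq 2^{nH(k/n)}\,2^{-nT_{\varepsilon}(k/n)} = 2^{n[H(k/n)-T_{\varepsilon}(k/n)]}.
\]
Feeding this estimate back into the Part~1 inequality, and leaving the factor $\tfrac{k(1-\varepsilon)}{k(1-\varepsilon)-(n-k)\varepsilon}$ untouched, reproduces the claimed bound exactly, completing the argument.

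The only genuinely substantive step—and it is bookkeeping rather than a real obstacle—is the algebraic reduction of $T_{\varepsilon}(p)$ to its cross-entropy form, since the cancellation of the $H(\varepsilon)$ terms must be tracked carefully to expose the clean product $\varepsilon^k(1-\varepsilon)^{n-k}$. Once that identity is secured, the lemma is immediate: it is a one-line combination of the Part~1 tail bound with the entropy bound on $\binom{n}{k}$, requiring no further estimation and no asymptotic analysis.
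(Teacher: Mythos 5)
Your proposal is correct and follows essentially the same route as the paper: the tangent-line identity $2^{-nT_{\varepsilon}(k/n)} = \varepsilon^k(1-\varepsilon)^{n-k}$, the entropy bound $\binom{n}{k} \leq 2^{nH(k/n)}$, and the tail estimate of Lemma~\ref{Lem.Bound_Sum_Tail_Binomial_Distribution_Part_1} combined in the same way. If anything, your write-up is slightly cleaner, since you invoke Lemma~\ref{Lem.Bound_Sum_Tail_Binomial_Distribution_Part_1} explicitly where the paper's final step leaves that dependence implicit, and you verify up front that $\varepsilon < k/n$ makes the denominator $k(1-\varepsilon)-(n-k)\varepsilon = k - n\varepsilon$ positive.
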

\section{Bound on The Binomial Cumulative Distribution Function}
\label{App.}
\begin{lemma}
\label{Lem.Binom_Tail_3}
Let $0 < \varepsilon < 1$ and $k < n$ with $\frac{k}{n} < \varepsilon$. Then,
\begin{align}
    \label{Ineq.Binom_Tail_3}
    \sum_{j=0}^k \binom{n}{j} \varepsilon^j(1-\varepsilon)^{n-j} \leq \frac{\varepsilon(n-k)}{\varepsilon n-k} \cdot
    2^{n\left[ H(\frac{k}{n})-T_{\varepsilon}(\frac{k}{n}) \right]} \,.
\end{align}
\end{lemma}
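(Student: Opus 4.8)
The plan is to reduce this lower-tail estimate to the upper-tail estimate already available in Lemma~\ref{Lem.Bound_Sum_Tail_Binomial_Distribution_Part_1}, by exploiting the reflection symmetry of the Binomial distribution. Reindexing the sum through $i = n - j$ gives
\begin{align}
    \sum_{j=0}^k \binom{n}{j} \varepsilon^j (1-\varepsilon)^{n-j} = \sum_{i=n-k}^n \binom{n}{i} (1-\varepsilon)^i \varepsilon^{n-i} \,,
\end{align}
which is exactly an \emph{upper}-tail sum for a Binomial with success probability $1-\varepsilon$ and lower summation limit $k' = n-k$. The hypothesis $\tfrac{k}{n} < \varepsilon$ translates into $1-\varepsilon < \tfrac{n-k}{n} = \tfrac{k'}{n}$, so the admissibility condition of Lemma~\ref{Lem.Bound_Sum_Tail_Binomial_Distribution_Part_1} is met under the substitution $\varepsilon \mapsto 1-\varepsilon$, $k \mapsto n-k$.

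Next I would invoke the right-hand inequality of \eqref{Ineq.Binom_Tail_1} with these substituted parameters. The leading Binomial term becomes $\binom{n}{n-k}(1-\varepsilon)^{n-k}\varepsilon^{k} = \binom{n}{k}\varepsilon^k(1-\varepsilon)^{n-k}$, while the bracketed coefficient becomes $\tfrac{(n-k)\varepsilon}{(n-k)\varepsilon - k(1-\varepsilon)}$. A one-line simplification of the denominator, namely $(n-k)\varepsilon - k(1-\varepsilon) = n\varepsilon - k$, collapses this coefficient to $\tfrac{\varepsilon(n-k)}{\varepsilon n - k}$, matching the prefactor in the claim; note it is positive precisely because $\tfrac{k}{n} < \varepsilon$.

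Finally, I would recast the leading term in the entropy/tangent form. Using the identity $\varepsilon^k(1-\varepsilon)^{n-k} = 2^{-nT_\varepsilon(k/n)}$ from \eqref{Eq.Binom_1}, together with the standard estimate $\binom{n}{k} \leq 2^{nH(k/n)}$ from \eqref{Ineq.Binom_Coeff_UB}, yields $\binom{n}{k}\varepsilon^k(1-\varepsilon)^{n-k} \leq 2^{n\left[H(k/n)-T_\varepsilon(k/n)\right]}$, and multiplying by the simplified coefficient produces the stated bound.

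I expect the only genuine subtlety to be bookkeeping in the parameter substitution: one must check that the entropy term survives the reflection unchanged (it does, since $H(k/n) = H(1-k/n)$) and that the tangent exponent lands on $T_\varepsilon(k/n)$ rather than on $T_{1-\varepsilon}((n-k)/n)$. Tracking these exponents carefully through the reindexing, rather than the combinatorial content itself, is where an error is most likely to slip in.
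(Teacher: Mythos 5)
Your proof is correct, and it reaches the bound by a genuinely cleaner route than the paper's own proof, even though both rest on the same reflection trick ($j \mapsto n-j$, $\varepsilon \mapsto 1-\varepsilon$, $k \mapsto n-k$). You apply the reflection directly to the sum and then invoke the raw coefficient-form bound of Lemma~\ref{Lem.Bound_Sum_Tail_Binomial_Distribution_Part_1}, converting to the entropy/tangent form only at the very end via \eqref{Eq.Binom_1} and \eqref{Ineq.Binom_Coeff_UB} evaluated at the \emph{original} parameters $(k,\varepsilon)$. The paper instead reflects the already-exponentiated inequality of Lemma~\ref{Lem.Bound_Sum_Tail_Binomial_Distribution_Part_2}, which forces it to carry the exponent $H\bigl(\tfrac{n-k}{n}\bigr)-T_{1-\varepsilon}\bigl(\tfrac{n-k}{n}\bigr)$ through the substitution and to invoke the symmetry identities \eqref{Eq.Entropy_Exch_Variable} and \eqref{Eq.Entropy_Der_1_Exch_Variable} to land back on $H\bigl(\tfrac{k}{n}\bigr)-T_{\varepsilon}\bigl(\tfrac{k}{n}\bigr)$. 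In your ordering that bookkeeping dissolves — the subtlety you flagged at the end never arises, because after reflection the leading term $\binom{n}{n-k}(1-\varepsilon)^{n-k}\varepsilon^{k}$ is literally equal to $\binom{n}{k}\varepsilon^{k}(1-\varepsilon)^{n-k}$, so no exponent ever appears in the swapped parametrization. Your route also passes through the marginally sharper intermediate bound with $\binom{n}{k}\varepsilon^{k}(1-\varepsilon)^{n-k}$ in place of $2^{n[H(k/n)-T_{\varepsilon}(k/n)]}$ before the final relaxation. One shared caveat, not specific to your argument: the substituted admissibility condition $\tfrac{n-k}{n}<1$ of Lemma~\ref{Lem.Bound_Sum_Tail_Binomial_Distribution_Part_1} requires $k \geq 1$, a restriction the paper's proof inherits identically (for $k=0$ the claimed inequality holds with equality by direct inspection).
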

\begin{proof}
Let define
\begin{align}
    \label{Ineq.Exch_Variable_1}
    k' \triangleq n - k \,,
    \nonumber\\
    \varepsilon' \triangleq 1 - \varepsilon \,.
\end{align}
i.e., $k \leftrightarrow k'$ and $\varepsilon \leftrightarrow \varepsilon'$ or equivalently
\begin{align}
    \label{Ineq.Exch_Variable_2}
    k & \leftrightarrow n - k \,,
    \nonumber\\
    \varepsilon & \leftrightarrow 1 - \varepsilon \,.
\end{align}
Now, observe that
\begin{align}
    \label{Ineq.Exch_Variable_3}
    \frac{k}{n} > \varepsilon \Rightarrow \frac{k'}{n} < \varepsilon' \,.
\end{align}
Furthermore, by definition of the binary entropy function and its tangent line, we have
\begin{align}
    \label{Eq.Entropy_Exch_Variable}
    H\left( \frac{k}{n} \right) & = H \left( \frac{n-k}{n} \right) \,,
    \\
    \label{Eq.Entropy_Der_1_Exch_Variable}
    T_{\varepsilon} \left( \frac{k}{n} \right) & = T_{1 - \varepsilon} \left( \frac{n-k}{n} \right) \,.
\end{align}
where \eqref{Eq.Entropy_Exch_Variable} follows by \eqref{Eq.Binary_Entropy} and \eqref{Eq.Entropy_Der_1_Exch_Variable} holds by \eqref{Eq.Entropy_Der_2}.

Now applying the variable exchange of $j \leftrightarrow n-j$ unto \eqref{Ineq.Binom_Tail_2}, we obtain
\begin{align}
    \label{Ineq.Binom_Tail_4}
    \sum_{n-j=k}^{n-j = n} \binom{n}{n-j} \varepsilon^{n-j}(1-\varepsilon)^{n-(n-j)} \leq 2^{n\left[ H(\frac{k}{n})-T_{\varepsilon}(\frac{k}{n}) \right]} \left[ \frac{k(1-\varepsilon)}{k(1-\varepsilon)-(n-k)\varepsilon} \right] \,.
\end{align}
Observe that since the index of sum in \eqref{Ineq.Binom_Tail_2} runs form $k$ to $n$, i.e., $k \leq j \leq n$, in the new system we have $k \leq n - j \leq n$ which is equivalent to $0 \leq j \leq n - k$. Further, the Binomial coefficient satisfy
\begin{align}
    \binom{n}{n-j} = \binom{n}{j} \,,
\end{align}
for $0 \leq j \leq n$. Thereby,
\begin{align}
    \label{Ineq.Binom_Tail_5}
    \sum_{j=0}^{n-k} \binom{n}{j} \varepsilon^{n-j}(1-\varepsilon)^j \leq 2^{n\left[ H(\frac{k}{n})-T_{\varepsilon}(\frac{k}{n}) \right]} \left[ \frac{k(1-\varepsilon)}{k(1-\varepsilon)-(n-k)\varepsilon} \right] \,.
\end{align}
Now, applying the exchange of variables given in \eqref{Ineq.Exch_Variable_2} unto \eqref{Ineq.Binom_Tail_5}, we obtain
\begin{align}
    \label{Ineq.Binom_Tail_6}
    \sum_{j=0}^{k} \binom{n}{j} (1-\varepsilon)^{n-j}\varepsilon^j & \leq 2^{n[H(\frac{n-k}{n})-T_{1-\varepsilon}(\frac{n-k}{n})]} \left[ \frac{(n-k)\varepsilon}{(n-k)\varepsilon - k(1-\varepsilon)} \right]
    \nonumber\\
    & = 2^{n\left[ H(\frac{k}{n})-T_{\varepsilon}(\frac{k}{n}) \right]} \left[ \frac{(n-k)\varepsilon}{(n-k)\varepsilon - k(1-\varepsilon)} \right] \,,
\end{align}
where the equality holds by \eqref{Eq.Entropy_Exch_Variable} and \eqref{Eq.Entropy_Der_1_Exch_Variable}. Therefore,
\begin{align}
    \label{Ineq.Binom_Tail_7}
    \sum_{j=0}^{k} \binom{n}{j} (1-\varepsilon)^{n-j}\varepsilon^j \leq 2^{n\left[ H(\frac{k}{n})-T_{\varepsilon}(\frac{k}{n}) \right]} \left[ \frac{(n-k)\varepsilon}{(n-k)\varepsilon - k(1-\varepsilon)} \right] \,.
\end{align}
Now we focus on the bracket in \eqref{Ineq.Binom_Tail_6} which can be simplified as follows
\begin{align}
    \frac{(n-k)\varepsilon}{(n-k) \varepsilon - k (1 - \varepsilon)} & = \frac{\left( \frac{n-k}{n} \right) \varepsilon}{\left( \frac{n-k}{n} \right) \varepsilon - \frac{k}{n} (1 - \varepsilon)}
    \nonumber\\
    & = \frac{\varepsilon-\frac{k}{n}\varepsilon}{\varepsilon - \frac{k}{n}}
    \nonumber\\
    & = \frac{\varepsilon(n-k)}{\varepsilon n-k} \,,
\end{align}
where the first equality follows by dividing both sides in the left side by factor $n$. Thereby,
\begin{align}
    \sum_{j=0}^{k} \binom{n}{j} (1-\varepsilon)^{n-j}\varepsilon^j \leq \frac{\varepsilon(n-k)}{\varepsilon n-k} \cdot
    2^{n\left[ H(\frac{k}{n})-T_{\varepsilon}(\frac{k}{n}) \right]} \,.
\end{align}
This completes the proof of Lemma~\ref{Lem.Binom_Tail_3}.
\end{proof}
\section*{}
\bibliography{Lit}
\end{document}